\newcounter{prob}
        {\end{list}}
\newenvironment{proof}[1][{}]{
  \begin{trivlist}\item[]\textit{Proof #1}\quad}%
  {\hfill\hspace*{\fill}~$\square$\end{trivlist}}
\newtheorem{thm}{Theorem}[section]
\newtheorem{lem}[thm]{Lemma}
\newtheorem{de}[thm]{Definition}
\newtheorem{cor}[thm]{Corollary}
\definecolor{turquoise}{cmyk}{0.65,0,0.1,0.1}
\newcommand{\rawdef}[1]{\emph{#1}} 
\newcommand{\defn}[1]{\rawdef{#1}\index{#1}}
\newcommand{\Corref}[1]{Corollary~\ref{#1}}
\newcommand{\Defref}[1]{Definition~\ref{#1}}
\newcommand{\Eqnref}[1]{Equation~\eqref{#1}}
\newcommand{\Figref}[1]{Figure~\ref{#1}}
\newcommand{\Lemref}[1]{Lemma~\ref{#1}}
\newcommand{\Secref}[1]{Section~\ref{#1}}
\newcommand{\Thmref}[1]{Theorem~\ref{#1}}
\DeclareMathOperator{\convh}{conv}
\newcommand{\convhull}[1]{\convh(#1)}
\newcommand{\pwrset}[1]{2^{#1}} 
\newcommand{\reel}{\mathbb{R}}
\newcommand{\rdee}{\reel^d}
\newcommand{\rem}{\reel^m}
\newcommand{\norm}[1]{\left\|#1\right\|}
\newcommand{\infnorm}[1]{\norm{#1}_{\infty}}
\newcommand{\abs}[1]{\left|#1\right|}
\newcommand{\transp}[1]{{#1}^\mathsf{T}}
\DeclareMathOperator{\sgn}{sign}
\newcommand{\inv}[1]{{#1}^{-1}}
\newcommand{\bdry}[1]{\partial{#1}}
\DeclareMathOperator{\bdo}{bd}
\newcommand{\bd}[1]{\bdo(#1)} 
\DeclareMathOperator{\starr}{star}
\newcommand{\str}[1]{\starr(#1)}
\newcommand{\asimplex}[1]{\{#1\}} 
\newcommand{\simplex}[1]{[#1]} 
\newcommand{\carrier}[1]{\abs{#1}} 
\newcommand{\seg}[2]{\simplex{#1,#2}} 
\newcommand{\ambdim}{N}
\newcommand{\amb}{\reel^{\ambdim}} 
\newcommand{\gdist}{d} 
\newcommand{\dist}[2]{\gdist(#1,#2)}
\newcommand{\gdistG}[1]{\gdist_{#1}} 
\newcommand{\distG}[3]{\gdist_{#1}(#2,#3)}
\newcommand{\distEd}[2]{\distG{\rdee}{#1}{#2}}
\newcommand{\gdistEm}{\gdistG{\rem}} 
\newcommand{\distEm}[2]{\distG{\rem}{#1}{#2}}
\newcommand{\close}[1]{\overline{#1}} 
\newcommand{\ball}[2]{B(#1;#2)} 
\newcommand{\cball}[2]{\close{B}(#1;#2)} 
\newcommand{\spaceball}[3]{B_{#1}(#2;#3)} 
\newcommand{\cspaceball}[3]{\close{B}_{#1}(#2;#3)} 
\newcommand{\ballEm}[2]{\spaceball{\rem}{#1}{#2}} 
\newcommand{\cballEm}[2]{\cspaceball{\rem}{#1}{#2}} 
\DeclareMathOperator{\aff}{aff} 
\newcommand{\affhull}[1]{\aff(#1)}
\newcommand{\angleop}[2]{\angle(#1,#2)}
\newcommand{\pts}{\mathsf{P}}
\newcommand{\tpts}{\tilde{\pts}}
\newcommand{\dipts}{\mathsf{P}_I}
\newcommand{\sdipts}{\mathsf{P}_J}
\newcommand{\qpts}{\mathsf{Q}}
\DeclareMathOperator{\vol}{vol}
\DeclareMathOperator{\Del}{Del}
\newcommand{\del}[2]{\Del_{#1}(#2)} 
\newcommand{\delPd}{\del{\gdist}{\pts}} 
\newcommand{\delof}[1]{\Del(#1)}
\newcommand{\delP}{\delof{\pts}}
\newcommand{\relDel}[2]{\Del^{#1}(#2)} 
\newcommand{\tdelta}{\tilde{\delta}} 
\newcommand{\relconst}{\xi} 
\newcommand{\trelconst}{\xi} 
\newcommand{\pertconst}{\rho}
\newcommand{\samconst}{\epsilon}
\newcommand{\tsamconst}{\tilde{\epsilon}}
\newcommand{\sparseconst}{\mu_0} 
\newcommand{\sparsity}{\lambda} 
\newcommand{\protconst}{\nu_0} 
\newcommand{\localconst}{\eta} 
\newcommand{\pert}{\zeta} 
\newcommand{\pertiso}{\overset{\pert}{\cong}}
\newcommand{\incl}{\iota} 
\DeclareMathOperator{\interior}{int}
\newcommand{\intr}[1]{\interior{#1}}
\newcommand{\sing}[2]{s_{#1}(#2)}
\newcommand{\pseudoinv}[1]{#1^\dagger}
\newcommand{\splxs}{\sigma}
\newcommand{\tsplxs}{\tilde{\sigma}}
\newcommand{\splxt}{\tau}
\newcommand{\tsplxt}{\tilde{\tau}}
\newcommand{\splx}[1]{\sigma^{#1}} 
\newcommand{\tsplx}[1]{\tilde{\sigma}^{#1}} 
\newcommand{\splxjoin}[2]{{#1}*{#2}}
\newcommand{\normhull}[1]{N(#1)}
\newcommand{\opface}[2]{#2_{#1}} 
\newcommand{\splxsp}{\opface{p}{\splxs}}
\newcommand{\thickbnd}{\Upsilon_0}
\newcommand{\thickness}[1]{\Upsilon(#1)}
\newcommand{\splxalt}[2]{D(#1,#2)} 
\newcommand{\longedge}[1]{\Delta(#1)}
\newcommand{\shortedge}[1]{L(#1)}
\newcommand{\circrad}[1]{R(#1)}
\newcommand{\circcentre}[1]{C(#1)}
\newcommand{\X}{X} 
\title{The stability of Delaunay triangulations}
\author{Jean-Daniel Boissonnat \and Ramsay Dyer \and Arijit Ghosh}
\begin{document}


\maketitle


\begin{abstract}
  We introduce a parametrized notion of genericity for Delaunay
  triangulations which, in particular, implies that the Delaunay
  simplices of $\delta$-generic point sets are thick.  Equipped with
  this notion, we study the stability of Delaunay triangulations under
  perturbations of the metric and of the vertex positions. We quantify
  the magnitude of the perturbations under which the Delaunay
  triangulation remains unchanged.

\end{abstract}


\pagenumbering{roman}

\tableofcontents
\clearpage

\pagenumbering{arabic}

%

\section{Introduction}

One of the central properties of Delaunay complexes, which was
demonstrated when they were introduced \cite{delaunay1934}, is that
under a very mild assumption they are embedded, i.e., they define a
triangulation of Euclidean space.  The required assumption is that
there are not too many cospherical points; the points are
``generic''. The assumption is not considered limiting because, as
Delaunay showed, an arbitrarily small affine perturbation can
transform any given point set into one that is generic.

Given the assumption of a generic point set, we are assured that the
Delaunay complex defines a triangulation, but a couple of issues arise
when working with these triangulations. One is that the Delaunay
triangulation can be highly sensitive to the exact location of the
points. For example, the Delaunay triangulation of a point set might
be different if a coordinate transform is first performed using
floating point arithmetic.

Another problem concerns the geometric quality of the simplices in the
triangulation. We define the \defn{thickness} of a simplex as a number
proportional to the ratio of the smallest altitude to the longest edge
length of the simplex, and we demonstrate why this is a useful measure
of the geometric quality of the simplex. For points in the plane, if
there is an upper bound on the ratio of the radius of a Delaunay ball
to the length of the shortest edge of the corresponding triangle, then
there is a lower bound on the thickness of any Delaunay
triangle. However, when there are three or more spatial dimensions,
the thickness of Delaunay simplices may become arbitrarily small in
spite of any bound on the circumradius to shortest edge length.

Both of these issues are shown to be related to points being close to
a degenerate (non-generic) configuration. We parameterize Delaunay's
original definition of genericity, saying that a point set $\pts
\subset \rem$ is $\delta$-generic if every $m$-simplex in the Delaunay
complex has a Delaunay ball that is at a distance greater than
$\delta$ to the remaining points in $\pts$. We show that a bound on
$\delta$ leads to a bound on the thickness of the Delaunay simplices,
and also that the Delaunay complex itself is stable with respect to
perturbations of the points or of the metric, provided the
perturbation is small enough with respect to $\delta$ in a way that we
quantify. In a companion paper \cite{boissonnat2013flatpert.inria}, we
develop a perturbation algorithm to produce $\delta$-generic point
sets.

The stability of Delaunay triangulations has not previously been
studied in this way. Related work can be found in the context of
kinetic data structures \cite{agarwal2010scg} or in the context of
robust computation \cite{bandyopadhyay2004}, and in particular, the
concept of protection we introduce in \Secref{sec:param.gen} is
embodied in the guarded insphere predicate which has been employed in
a controlled perturbation algorithm for 2D Delaunay
triangulation~\cite{funke2005}. 
%

Our interest in the problem of near-degeneracy in Delaunay complexes
stems from work on triangulating Riemannian manifolds. An established
technique is to compute the triangulation locally at each point in an
approximating Euclidean metric, and then perform manipulations to
ensure that the local triangulations fit together consistently
\cite{boissonnat2011aniso.tr,boissonnat2011tancplx}. The reason the
manipulations are necessary is exactly the problem of the instability
of the Delaunay triangulation, and sometimes this is most conveniently
described as an instability with respect to a perturbation of the
local Euclidean metric.


Although we make no explicit reference to Voronoi diagrams, the
Delaunay complexes we study can be equivalently defined as the nerve
of the Voronoi diagram associated with the metric under consideration.
We provide criteria for ensuring that the Delaunay complex is a
triangulation without explicit requirements on the properties of the
Voronoi diagram \cite{edelsbrunner1997rdt}, in contrast to a common
practice
in related
work~\cite{leibon2000,labelle2003,cheng2005,dyer2008sgp,canas2012}.

After presenting background material in \Secref{sec:background}, we
introduce the concept of $\delta$-generic point sets for Euclidean
Delaunay triangulations in \Secref{sec:param.gen}. We show that
Delaunay simplices of $\delta$-generic point sets are thick; they
satisfy a quality bound. Then in \Secref{sec:stability} we quantify
how $\delta$-genericity leads to robustness in the Delaunay
triangulation when either the points or the metric are perturbed. The
primary challenge is bounding the displacement of simplex
circumcentres. We conclude with some remarks on the construction and
application of $\delta$-generic point sets.

%

\section{Background}
\label{sec:background}

Within the context of the standard $m$-dimensional Euclidean space
$\rem$, when distances are determined by the standard norm,
$\norm{\cdot}$, we use the following conventions.  The distance
between a point $p$ and a set $\X \subset \rem$, is the infimum of the
distances between $p$ and the points of $\X$, and is denoted
$\distEm{p}{\X}$.  We refer to the distance between two points $a$ and
$b$ as $\norm{b-a}$ or $\distEm{a}{b}$ as convenient. A ball
$\ballEm{c}{r} = \{ x \, | \, \norm{x-c}< r \}$ is \textbf{open}, and
$\cballEm{c}{r}$ is its topological closure. 
We will consider other metrics besides the Euclidean one. A generic
metric is denoted $\gdist$, and the associated open and closed balls
are $\ball{c}{r}$, and $\cball{c}{r}$. 
Generally, we denote the
topological closure of a set $\X$ by $\close{\X}$, the interior by
$\intr{\X}$, and the boundary by $\bdry{\X}$. The convex hull is
denoted $\convhull{\X}$, and the affine hull is $\affhull{\X}$.

If $U$ and $V$ are vector subspaces of $\rem$, with $\dim U \leq \dim
V$, the \defn{angle} between them is defined by 
\begin{equation}
  \label{eq:angle.def}
  \sin \angleop{U}{V} = \sup_{\substack{u \in U\\ \norm{u}=1}} \norm{ u - \pi_V u},
\end{equation}
where $\pi_V$ is the orthogonal projection onto $V$.
This is the largest principal angle between $U$ and $V$. The angle
between affine subspaces $K$ and $H$ is defined as the angle between
the corresponding parallel vector subspaces. 

\subsection{Sampling parameters and perturbations}

The structures of interest will be built from a finite set $\pts
\subset \rem$, which we consider to be a set of \defn{sample points}.
If $D \subset \rem$ is a bounded set, then $\pts$
is an \defn{$\samconst$-sample set} for $D$ if $\distEm{x}{\pts} <
\samconst$ for all $x \in \close{D}$. We say that $\samconst$ is a
\defn{sampling radius} for $D$ satisfied by $\pts$.  If no domain $D$
is specified, we say $\pts$ is an $\samconst$-sample set if
$\distEm{x}{\pts \cup \bdry{\convhull{\pts}}} < \samconst$ for all $x
\in \convhull{\pts}$. 
Equivalently, $\pts$ is an $\samconst$-sample
set if it satisfies a sampling radius $\samconst$ for
\begin{equation*}
  D_\samconst(\pts) = \{ x \in \convhull{\pts} \, | \,
  \distEm{x}{\bdry{\convhull{\pts}}} \geq \samconst \}.
\end{equation*}
The set $\pts$ is \defn{$\sparsity$-separated} if $\distEm{p}{q} >
\sparsity$ for all $p,q \in \pts$. We usually assume that the sparsity
of a $\samconst$-sample set is proportional to $\samconst$, thus:
$\sparsity = \sparseconst \samconst$.

We consider a perturbation of the points $\pts \subset \rem$ given by
a function $\pert: \pts \to \rem$. If $\pert$ is such that
$\distEm{p}{\pert(p)} \leq \pertconst$, we say that $\pert$ is a
\defn{$\pertconst$-perturbation}. As a notational convenience, we
frequently define $\tpts = \pert(\pts)$, and let $\tilde{p}$ represent
$\pert(p) \in \tpts$. We will only be considering
$\pertconst$-perturbations where $\pertconst$ is less than half the
sparsity of $\pts$, so $\pert: \pts \to \tpts$ is a bijection.

Points in $\pts$ which are not on the boundary of $\convhull{\pts}$
are \defn{interior points} of $\pts$.


\subsection{Simplices}

Given a set of $j+1$ points $\asimplex{p_0, \ldots, p_j} \subset \pts
\subset \rem$, a (geometric) \defn{$j$-simplex} $\splxs =
\simplex{p_0, \ldots, p_j}$ is defined by the convex hull: $\splxs =
\convhull{\asimplex{p_0, \ldots, p_j}}$. The points $p_i$ are the
\defn{vertices} of $\splxs$. Any subset $\asimplex{p_{i_0}, \ldots,
  p_{i_k}}$ of $\asimplex{p_0, \ldots, p_j}$ defines a $k$-simplex
$\splxt$ which we call a \defn{face} of $\splxs$. We write $\splxt
\leq \splxs$ if $\splxt$ is a face of $\splxs$, and $\splxt < \splxs$
if $\splxt$ is a \defn{proper face} of $\splxs$, i.e., if the vertices
of $\splxt$ are a proper subset of the vertices of $\splxs$.

The \defn{boundary} of $\splxs$, is the union of its proper faces:
$\bdry{\splxs} = \bigcup_{\splxt < \splxs}\splxt$. In general this is
distinct from the topological boundary defined above, but we denote it
with the same symbol. The \defn{interior} of $\splxs$ is
$\intr{\splxs} = \splxs \setminus \bdry{\splxs}$. Again this is
generally different from the topological interior.  In particular, a
$0$-simplex $p$ is equal to its interior: it has no boundary.  Other
geometric properties of $\splxs$ include its diameter (its longest
edge), $\longedge{\splxs}$, and its shortest edge,
$\shortedge{\splxs}$.

For any vertex $p \in \splxs$, the \defn{face oppposite} $p$ is the
face determined by the other vertices of $\splxs$, and is denoted
$\splxsp$. If $\splxt$ is a $j$-simplex, and $p$ is not a vertex of
$\splxt$, we may construct a $(j+1)$-simplex $\splxs =
\splxjoin{p}{\splxt}$, called the \defn{join} of $p$ and $\splxt$. It
is the simplex defined by $p$ and the vertices of $\splxt$, i.e.,
$\splxt = \splxsp$.

Our definition of a simplex has made an important departure from
standard convention: we do not demand that the vertices of a simplex
be affinely independent. A $j$-simplex $\splxs$ is a \defn{degenerate
  simplex} if $\dim \affhull{\splxs} < j$. If we wish to emphasise
that a simplex is a $j$-simplex, we write $j$ as a superscript:
$\splx{j}$; but this always refers to the \defn{combinatorial}
dimension of the simplex.


A \defn{circumscribing ball} for a simplex $\splxs$ is any
$m$-dimensional ball that contains the vertices of $\splxs$ on its
boundary. If $\splxs$ admits a circumscribing ball, then it has a
\defn{circumcentre}, $\circcentre{\splxs}$, which is the centre of the
smallest circumscribing ball for $\splxs$. The radius of this ball is
the \defn{circumradius} of $\splxs$, denoted $\circrad{\splxs}$.  In
general a degenerate simplex may not have a circumcentre and
circumradius, but in the context of the Euclidean Delaunay complexes
we will work with, the degenerate simplices we may encounter do have
these properties.
We will make use of the affine space $\normhull{\splxs}$ composed of
the centres of the balls that circumscribe $\splxs$. This space is
orthogonal to $\affhull{\splxs}$ and intersects it at the circumcentre
of $\splxs$. Its dimension is $m - \dim \affhull{\splxs}$.

The \defn{altitude} of a vertex $p$ in $\splxs$ is
$\splxalt{p}{\splxs} = \distEm{p}{\affhull{\splxsp}}$. A poorly-shaped
simplex can be characterized by the existence of a relatively small
altitude. The \defn{thickness} of a $j$-simplex $\splxs$ is the
dimensionless quantity
\begin{equation*}
  \thickness{\splxs} =
  \begin{cases}
    1& \text{if $j=0$} \\
    \min_{p \in \splxs} \frac{\splxalt{p}{\splxs}}{j
      \longedge{\splxs}}& \text{otherwise.}
  \end{cases}
\end{equation*}
We say that $\splxs$ is $\thickbnd$-thick, if $\thickness{\splxs} \geq
\thickbnd$. If $\splxs$ is $\thickbnd$-thick, then so are
all of its faces. Indeed if $\splxt \leq \splxs$, then the smallest
altitude in $\splxt$ cannot be smaller than that of $\splxs$, and also
$\longedge{\splxt} \leq \longedge{\splxs}$. 

Our definition of thickness is essentially the same as that employed
by Munkres~\cite{munkres1968}. Munkres defined the thickness of
$\splxs^j$ as $\frac{r(\splxs^j)}{\longedge{\splxs^j}}$, where
$r(\splxs^j)$ is the radius of the largest contained ball centred at
the barycentre. This definition of thickness turns out to be equal to
$\frac{j}{j+1}\thickness{\splxs^j}$.

Whitney~\cite{whitney1957} employed a volume-based measure of simplex
quality, and variations on this, typically referred to as
\defn{fatness}, have been popular in works on higher dimensional
Delaunay-based meshing
\cite{cheng2000,li2003,boissonnat2011aniso.tr}. 
We find a direct bound on the altitudes to be more convenient, because
it yields a cleaner and tighter connection between the geometry and
the linear algebra of simplices. Typically, a bound on some geometric
displacement related to a simplex is obtained by bounding the inverse
of a matrix associated with the simplex, and thickness is well suited
for this task.

As a motivating example, consider the problem of bounding the angle
between the affine hull of a simplex and an affine space that lies
close to all the vertices of the simplex. Such a bound is relevant
when meshing submanifolds of Euclidean space, for example, where it is
desired that the affine hulls of the simplices are in agreement with
the nearby tangent spaces of the manifold.

Whitney~\cite[p. 127]{whitney1957} obtained such a bound, which
manifestly depends on the quality of the simplex. Using thickness as a
quality measure we obtain a sharper result: 
\begin{lem}[Whitney angle bound]
  \label{lem:whitney.approx}
  Suppose $\splxs$ is a $j$-simplex whose vertices all lie within a
  distance $\localconst$ from a $k$-dimensional affine space, $H
  \subset \rem$, with $k \geq j$. Then
  \begin{equation*}
    \sin \angleop{\affhull{\splxs}}{H} \leq
    \frac{2\localconst}{\thickness{\splxs}\longedge{\splxs}}.
  \end{equation*}
\end{lem}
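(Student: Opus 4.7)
The plan is to bound $\sin\angleop{U}{H}$, where $U$ is the direction subspace parallel to $\affhull{\splxs}$, by expanding an arbitrary unit vector $u \in U$ in an edge-vector basis of $\splxs$ and controlling two things separately: the expansion coefficients (via thickness) and each edge's perpendicular displacement from $H$ (via $\localconst$). We may assume $\thickness{\splxs} > 0$, since otherwise the bound is vacuous; in particular $\splxs$ is non-degenerate. Fix a vertex $p_0$ of $\splxs$ and set $e_i = p_i - p_0$ for $i=1,\ldots,j$, which then form a basis of $U$.

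Two observations drive the proof. \emph{First}, since both $p_i$ and $p_0$ lie within distance $\localconst$ of $H$, applying the triangle inequality to their projections onto the direction subspace parallel to $H$ gives
\[
\norm{e_i - \pi_H e_i} \;\leq\; 2\localconst \qquad \text{for each } i=1,\ldots,j.
\]
\emph{Second}, if $u = \sum_{i=1}^j \alpha_i e_i$ is a unit vector in $U$, then $\abs{\alpha_i} \leq 1/\splxalt{p_i}{\splxs}$. To see this, let $n_i$ be the unit vector in $U$ orthogonal to $\affhull{\opface{p_i}{\splxs}}$ and pointing toward $p_i$. For every $k \in \{1,\ldots,j\}$ with $k \neq i$, both $p_k$ and $p_0$ are vertices of $\opface{p_i}{\splxs}$, so $\langle e_k, n_i \rangle = 0$. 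Thus
\[
\langle u, n_i \rangle \;=\; \alpha_i \langle e_i, n_i \rangle \;=\; \alpha_i \splxalt{p_i}{\splxs},
\]
and Cauchy--Schwarz with $\norm{u}=\norm{n_i}=1$ yields the claimed bound on $\abs{\alpha_i}$.

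Combining the two observations with the altitude estimate $\splxalt{p_i}{\splxs} \geq j\thickness{\splxs}\longedge{\splxs}$ supplied by the definition of thickness,
\[
\norm{u - \pi_H u} \;\leq\; \sum_{i=1}^j \abs{\alpha_i}\,\norm{e_i - \pi_H e_i} \;\leq\; j\cdot \frac{1}{j\thickness{\splxs}\longedge{\splxs}}\cdot 2\localconst \;=\; \frac{2\localconst}{\thickness{\splxs}\longedge{\splxs}}.
\]
Taking the supremum over unit $u \in U$ gives the stated inequality. The main obstacle I expect is the coefficient bound: one must carefully set up the normal $n_i$ inside $U$ and check that for every $k \neq i$ the edge vector $e_k$ lies in $\affhull{\opface{p_i}{\splxs}} - p_0$, which is what kills all but the $i$-th term and cleanly identifies $\langle u, n_i \rangle$ with $\alpha_i$ times the altitude. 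Everything after that is the triangle inequality applied twice and the thickness definition.
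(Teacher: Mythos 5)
Your proof is correct, and at its core it is the same argument the paper uses: expand a unit vector $u\in U$ in the edge basis $\{e_i\}$, bound each $\norm{e_i-\pi e_i}$ by $2\localconst$, and control the expansion coefficients through the altitudes of $\splxs$. The difference is purely in how the linear algebra is packaged. The paper routes the coefficient bound through singular values: it writes $\norm{u-\pi u}\leq\norm{X}\norm{a}$ with $\norm{X}\leq 2\sqrt{j}\localconst$ (\Lemref{lem:col.mat.norm}) and $\norm{a}\leq s_j(P)^{-1}\leq(\sqrt{j}\,\thickness{\splxs}\longedge{\splxs})^{-1}$ (\Lemref{lem:bound.skP}), the $\sqrt{j}$ factors cancelling. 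You instead prove the componentwise bound $\abs{\alpha_i}\leq\splxalt{p_i}{\splxs}^{-1}$ directly via the facet normals $n_i$ and combine with an $\ell^1$--$\ell^\infty$ estimate, the factors of $j$ cancelling; your $n_i$ is exactly the normalized dual vector $w_i/\norm{w_i}$ that appears in the proof of \Lemref{lem:bound.skP}, so you have in effect inlined that lemma's geometric content rather than invoking it. Both routes give the same constant; yours is self-contained and avoids the singular-value machinery, while the paper's isolates that machinery into \Lemref{lem:bound.skP} because it is reused elsewhere (e.g.\ in \Lemref{lem:power.close.centres}). Your handling of the degenerate case ($\thickness{\splxs}=0$ makes the bound vacuous) and of the translation from the affine space $H$ to its parallel vector subspace are both fine.
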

The idea of the proof is to express the unit vector $u$ in
\Eqnref{eq:angle.def} in terms of a basis for $\affhull{\splxs}$ given
by the edges of $\splxs$ that emenate from some arbitrarily chosen
vertex. The projection $\tilde{u}$ of $u$ into $H$ can then be
expressed in terms of the projected basis vectors, using the same
vector of coefficients. Since the vertices of $\splxs$ all lie close
to $H$, the projected basis vectors do not differ significantly from
the originals, so bounding the magnitude of the difference between $u$
and $\tilde{u}$ comes down to bounding the magnitude of the vector of
coefficients of the unit vector $u$.
This bound depends on how well-conditioned the basis is, and this
is closely related to the thickness of $\splxs$.

These observations can be conveniently expressed and made concrete
in terms of the singular values of a matrix. An excellent introduction
to singular values can be found in the book by Trefethen and Bau
\cite[Ch. 4 \& 5]{trefethen1997}, but for our purposes we are
primarily concerned with the largest and the smallest singular values,
which we now describe.

We denote the $i^{th}$ singular value of a matrix $A$ by
$\sing{i}{A}$. The singular values are non-negative and ordered by
magnitude. The largest singular value can be defined as $\sing{1}{A}
=\sup_{\norm{x}=1} \norm{Ax}$; it is the magnitude of the largest
vector in the range of the unit sphere. The first singular value also
defines the operator norm: $\norm{A} = \sing{1}{A}$.  The standard
observation that a bound on the norms of the columns of $A$ yields a
bound on $\norm{A}$ is obtained by a short calculation.
\begin{lem}
  \label{lem:col.mat.norm}
  If $\localconst > 0$ is the least upper bound on the norms of the columns
  of an $m \times j$ matrix $A$, then $\localconst \leq \norm{A} \leq
  \sqrt{j}\localconst$.
\end{lem}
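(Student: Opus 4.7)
The plan is to bound $\|Ax\|$ for an arbitrary unit vector $x\in\reel^j$ and then take the supremum, using the definition $\sing{1}{A}=\sup_{\norm{x}=1}\norm{Ax}$ given just above the lemma.

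First I would write $A$ in terms of its columns $a_1,\dots,a_j\in\rem$, so that for any $x=(x_1,\dots,x_j)$ we have $Ax=\sum_{i=1}^{j}x_i a_i$. Applying the triangle inequality and then the column norm bound $\norm{a_i}\le\localconst$ gives
\begin{equation*}
\norm{Ax}\;\le\;\sum_{i=1}^{j}\abs{x_i}\norm{a_i}\;\le\;\localconst\sum_{i=1}^{j}\abs{x_i}.
\end{equation*}
The last sum is the $1$-norm of $x$, which for $x\in\reel^j$ is bounded by $\sqrt{j}\norm{x}$ via Cauchy--Schwarz (pair $(\abs{x_1},\dots,\abs{x_j})$ with the all-ones vector). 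Therefore $\norm{Ax}\le \sqrt{j}\,\localconst\,\norm{x}$, and taking the supremum over unit vectors yields $\norm{A}=\sing{1}{A}\le\sqrt{j}\localconst$.

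There is no real obstacle here; the only small point to get right is the factor $\sqrt{j}$, which comes from the Cauchy--Schwarz step and is sharp (attained when every column has norm exactly $\localconst$ and all point in the same direction, with $x=\tfrac{1}{\sqrt{j}}(1,\dots,1)$). An equally short alternative, which I would mention only if brevity is preferred, is to note $\norm{A}\le\norm{A}_F=\bigl(\sum_i\norm{a_i}^2\bigr)^{1/2}\le\sqrt{j}\localconst$; this avoids invoking Cauchy--Schwarz explicitly but uses the (standard) inequality between the operator and Frobenius norms.
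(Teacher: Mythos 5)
Your proof is correct, and since the paper simply asserts this lemma as a ``standard observation\ldots obtained by a short calculation'' without writing the calculation out, your argument (triangle inequality on the columns followed by Cauchy--Schwarz to convert the $1$-norm of $x$ into $\sqrt{j}\norm{x}$) is exactly the intended short calculation. The Frobenius-norm alternative you mention is equally valid; no gaps.
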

We will also be interested in obtaining a lower bound on the smallest
singular value which, for an $m \times j$ matrix $A$ with $j \leq m$,
may be defined as $\sing{j}{A} = \inf_{\norm{x} = 1} \norm{Ax}$.

From the given definitions, one can verify that if $A$ is an
invertible $m \times m$ matrix, then $\sing{1}{A^{-1}} =
\sing{m}{A}^{-1}$, but it is convenient to also accommodate non-square
matrices, corresponding to simplices that are not full dimensional.
If $A$ is an $m \times j$ matrix of rank $j \leq m$, then the
\defn{pseudo-inverse} $\pseudoinv{A} = (\transp{A}A)^{-1}\transp{A}$
is the unique left inverse of $A$ whose kernel is the orthogonal
complement of the column space of $A$. 
We have the following general observation:
\begin{lem}
  \label{lem:svd.pseudo.inv}
  If $A$ is an $m \times j$ matrix of rank $j \leq m$, then
$
    s_i(\pseudoinv{A}) = s_{j-i+1}(A)^{-1}.
$
\end{lem}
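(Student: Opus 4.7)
The plan is to read off everything directly from the singular value decomposition of $A$. Write $A = U \Sigma \transp{V}$ with $U \in O(m)$, $V \in O(j)$, and $\Sigma$ the $m \times j$ matrix whose $(i,i)$ entry is $\sing{i}{A}$ for $i = 1, \ldots, j$ and zero elsewhere. Because $A$ has rank $j$, all of these diagonal entries are strictly positive, so $\Sigma^T\Sigma$ is invertible. In particular $\transp{A}A = V(\transp{\Sigma}\Sigma)\transp{V}$ and hence $(\transp{A}A)^{-1} = V D^{-1} \transp{V}$, where $D$ is the $j \times j$ diagonal matrix with entries $\sing{i}{A}^{2}$.

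Substituting into the defining formula for the pseudo-inverse,
\[
\pseudoinv{A} = (\transp{A}A)^{-1}\transp{A} = V D^{-1} \transp{V}\cdot V \transp{\Sigma} \transp{U} = V \left( D^{-1} \transp{\Sigma}\right) \transp{U}.
\]
Here the inner factor $D^{-1}\transp{\Sigma}$ is the $j \times m$ matrix whose $(i,i)$ entry is $\sing{i}{A}^{-1}$ and all of whose other entries vanish. Thus $\pseudoinv{A}$ has the form ``orthogonal times diagonal times orthogonal'', so it is essentially an SVD of $\pseudoinv{A}$.

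The only bookkeeping issue is that the nonzero diagonal entries $\sing{1}{A}^{-1} \leq \cdots \leq \sing{j}{A}^{-1}$ appear in increasing rather than decreasing order. Conjugating the middle factor on the left by the $j \times j$ permutation matrix that reverses coordinates and on the right by the corresponding $m \times m$ permutation (which acts as the reversal on the first $j$ basis vectors and fixes the rest) absorbs these permutations into $V$ and $U$ without breaking orthogonality, and produces a genuine SVD for $\pseudoinv{A}$. Reading off the singular values in decreasing order gives $\sing{i}{\pseudoinv{A}} = \sing{j-i+1}{A}^{-1}$, as claimed. The only subtlety — and the one place to be careful — is this reversal of order between the singular values of $A$ and those of $\pseudoinv{A}$; everything else is a direct consequence of the SVD.
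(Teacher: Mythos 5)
Your proof is correct. The paper states this lemma as a ``general observation'' and gives no proof at all, so there is nothing to compare against; the SVD computation you carry out is the standard route. The algebra checks out: $\pseudoinv{A} = (\transp{A}A)^{-1}\transp{A} = V D^{-1}\transp{\Sigma}\transp{U}$ with the inner factor having diagonal entries $\sing{i}{A}^{-1}$, and you correctly identify and handle the one genuine subtlety, namely that these appear in increasing order and must be reversed by conjugating with permutation matrices (absorbed into $U$ and $V$) before reading off the singular values of $\pseudoinv{A}$ in the standard decreasing order, which yields $\sing{i}{\pseudoinv{A}} = \sing{j-i+1}{A}^{-1}$.
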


The columns of $A$ form a basis for the column space of $A$.  The
pseudo-inverse can also be described in terms of the \defn{dual
  basis}. If we denote the columns of $A$ by $\{a_i\}$, then the
$i^{\text{th}}$ dual vector, $w_i$, is the unique vector in the column
space of $A$ such that $\transp{w}_i a_i = 1$ and $\transp{w}_i a_j =
0$ if $i \neq j$. Then $\pseudoinv{A}$ is the $j \times m$ matrix
whose $i^{\text{th}}$ row is $\transp{w}_i$.

\begin{figure}[ht]
  \begin{center}
    \includegraphics[width=.4\columnwidth]{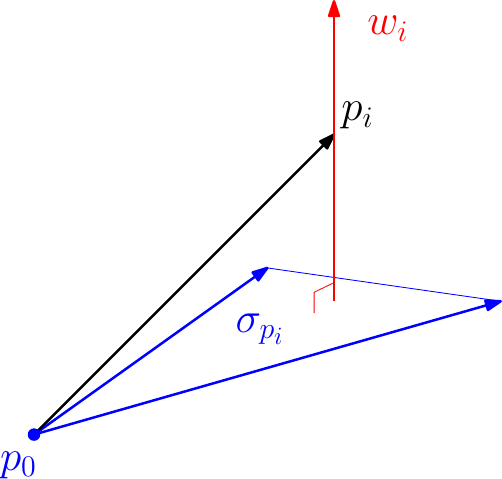} 
  \end{center}
  \caption{Choosing $p_0$ as the origin, the edges emenating from
    $p_0$ in $\splxs = \simplex{p_0, \ldots, p_j}$ form a basis for
    $\affhull{\splxs}$. The proof of \Lemref{lem:bound.skP}
    demonstrates that the dual basis $\{w_i\}$ consists of vectors
    that are orthogonal to the facets, and with magnitude equal to the
    inverse of the corresponding altitude.}
  \label{fig:vect.altitude}
\end{figure}
By exploiting a close connection between the altitudes of a simplex
and the vectors dual to a basis defined by the simplex, we obtain the
following key lemma that relates the thickness of a simplex to the
smallest singular value of an associated matrix:
\begin{lem}[Thickness and singular value]
  \label{lem:bound.skP}
  Let $\splxs = \simplex{p_0, \ldots, p_j}$ be a non-de\-generate
  $j$-simplex in $\rem$, with $j>0$, and let $P$ be the $m \times j$
  matrix whose $i^\text{th}$ column is $p_i - p_0$. Then the
  $i^{\text{th}}$ row of $\pseudoinv{P}$ is given by $\transp{w}_i$,
  where $w_i$ is orthogonal to $\affhull{\opface{p_i}{\splxs}}$, and
  \begin{equation*}
    \norm{w_i} = \splxalt{p_i}{\splxs}^{-1}.
  \end{equation*}
  We have the following bound on the smallest singular value of $P$:
  \begin{equation*}
    \sing{j}{P} \geq \sqrt{j} \thickness{\splxs}\longedge{\splxs}.
  \end{equation*}
\end{lem}
\begin{proof}
  By the definition of $\pseudoinv{P}$, it follows that $w_i$ belongs
  to the column space of $P$, and it is orthogonal to all
  $(p_{i'}-p_0)$ for $i' \neq i$. Let $u_i = w_i/\norm{w_i}$. By the
  definition of $w_i$, we have $\transp{w}_i(p_i-p_0) = 1 =
  \norm{w_i}\transp{u}_i(p_i-p_0)$. By the definition of the altitude
  of a vertex, we have $\transp{u}_i(p_i-p_0) =
  \splxalt{p_i}{\splxs}$. Thus $\norm{w_i} =
  \splxalt{p_i}{\splxs}^{-1}$.
  Since
  \begin{equation*}
    \max_{1 \leq i \leq j}\splxalt{p_i}{\splxs}^{-1} = \left( \min_{1
      \leq i \leq j}\splxalt{p_i}{\splxs} \right)^{-1} = (j
    \thickness{\splxs} \longedge{\splxs})^{-1},
  \end{equation*}
  \Lemref{lem:col.mat.norm}, yields  
  \begin{equation*}
    \sing{1}{\pseudoinv{P}} \leq (\sqrt{j} \thickness{\splxs}
    \longedge{\splxs})^{-1},
  \end{equation*}
  because $\sing{i}{\transp{A}} =
  \sing{i}{A}$ for any matrix $A$.
  The stated bound on $\sing{j}{P}$ follows from
  \Lemref{lem:svd.pseudo.inv}. 
\end{proof}

The proof of \Lemref{lem:bound.skP} shows that the pseudoinverse of
$P$ has a natural geometric interpretation in terms of the altitudes
of $\splxs$, and thus the altitudes provide a convenient lower bound
on $\sing{j}{P}$. By \Lemref{lem:col.mat.norm}, $\sing{1}{P} \leq
\sqrt{j}\longedge{\splxs}$, and thus
$
  \thickness{\splxs} \leq \frac{\sing{j}{P}}{\sing{1}{P}}.
$
In other words, $\thickness{\splxs}^{-1}$ provides a convenient upper
bound on the \defn{condition number} of $P$.  Roughly speaking,
thickness imparts a kind of stability on the geometric properties of a
simplex.  This is exactly what is required when we want to show that a
small change in a simplex will not yield a large change in some
geometric quantity of interest. For example, we will use
\Lemref{lem:bound.skP} in the demonstration of
\Lemref{lem:power.close.centres}, which is the technical lemma related
to the stability of the space of circumcentres of a simplex.
\Lemref{lem:bound.skP} also facilitates a concise demonstration of
Whitney's angle bound:
\begin{proof}[of \Lemref{lem:whitney.approx}]
  Suppose $\splxs = \simplex{p_0, \ldots, p_j}$.  Choose $p_0$ as the
  origin of $\rem$, and let $U \subset \rem$ be the vector subspace
  defined by $\affhull{\splxs}$.  Let $W$ be the $k$-dimensional
  subspace parallel to $H$, and let $\pi: \rem \to W$ be the
  orthogonal projection onto $W$.

  We desire an upper bound on $\norm{u - \pi u}$ for all unit vectors
  $u \in U$.
  Since the vectors $v_i =
  (p_i - p_0)$, $i \in \{1,\ldots,j\}$ form a basis for $U$, we may
  write $u = Pa$, where $P$ is the $m \times j$ matrix whose
  $i^\text{th}$ column is $v_i$, and $a \in \reel^j$ is the vector of
  coefficients. Then, defining $X =P - \pi P$, we get
  \begin{equation*}
    \norm{u - \pi u} = \norm{Xa} \leq \norm{X}\norm{a}.
  \end{equation*}

  $W$ is at a distance less than $\localconst$ from $H$, because $p_0
  \in W$ and $\distEm{p_i}{H} \leq \localconst$ for all $0 \leq i \leq
  j$. It follows that $\norm{v_i - \pi v_i} \leq 2\localconst$, and
  \Lemref{lem:col.mat.norm} yields
  \begin{equation*}
    \norm{X} \leq  2 \sqrt{j} \localconst.
  \end{equation*}
 Observing that $1 = \norm{u} = \norm{Pa} \geq s_j(P)\norm{a}$, we find
  \begin{equation*}
    \norm{a} \leq \frac{1}{s_j(P)},
  \end{equation*}
  and the result follows from \Lemref{lem:bound.skP}.
\end{proof}

\subsection{Complexes}

Given a finite set $\pts$, an \defn{abstract simplicial complex} is a
set of subsets $K \subset \pwrset{\pts}$ such that if $\splxs \in K$,
then every subset of $\splxs$ is also in $K$. 
The Delaunay complexes we study are abstract simplicial complexes, but
their simplices carry a canonical geometry induced from the inclusion
map $\incl: \pts \hookrightarrow \rem$. (We assume $\incl$ is
injective on $\pts$, and so do not distinguish between $\pts$ and
$\incl(\pts)$.) For each abstract simplex $\splxs \in K$, we have an
associated geometric simplex $\convhull{\incl(\splxs)}$, and normally
when we write $\splxs \in K$, we are referring to this geometric
object.  Occasionally, when it is convenient to emphasise a
distinction, we will write $\incl(\splxs)$ instead of $\splxs$.

Thus we view such a $K$ as a set of simplices in $\rem$, and we refer
to it as a \defn{complex}, but it is not generally a (geometric)
simplicial complex.  A geometric \defn{simplicial complex} is a finite
collection $G$ of simplices in $\amb$ such that if $\splxs \in G$,
then all of the faces of $\splxs$ also belong to $G$, and if $\splxs,
\tsplxs \in G$ and $\splxs \cap \tsplxs \neq \emptyset$, then $\splxt
= \splxs \cap \tsplxs$ is a simplex and $\splxt \leq \splxs$ and
$\splxt \leq \tsplxs$. Observe that the simplices in a geometric
simplicial complex are necessarily non-degenerate.
An abstract simplicial complex is defined from a geometric simplicial
complex in an obvious way.  A \defn{geometric realization} of an
abstract simplicial complex $K$ is a geometric simplicial complex
whose associated abstract simplicial complex may be identified with
$K$.  A geometric realization always exists for any complex. Details
can be found in algebraic topology textbooks; the book by
Munkres~\cite{munkres1984} for example.

The \defn{dimension of a complex} $K$ is the largest dimension of the
simplices in $K$. We say that $K$ is an $m$-complex, to mean that it
is of dimension $m$. The complex $K$ is a \defn{pure $m$-complex} if
it is an $m$-complex, and every simplex in $K$ is the face of an
$m$-simplex. 

The \defn{carrier} of an abstract complex $K$ is the underlying
topological space $\carrier{K}$, associated with a geometric
realization of $K$.  Thus if $G$ is a geometric realization of $K$,
then $\carrier{K} = \bigcup_{\splxs \in G} \splxs$.  For our
complexes, the inclusion map $\incl$ induces a continous map $\incl:
\carrier{K} \to \rem$, defined by barycentric interpolation on each
simplex. If this map is injective, we say that $K$ is
\defn{embedded}. In this case $\incl$ also defines a geometric
realization of $K$, and we may identify the carrier of $K$ with the
image of $\incl$.
%

A subset $K' \subset K$ is a \defn{subcomplex} of $K$ if it is also a
complex.  The \defn{star} of a subcomplex $K' \subseteq K$ is the
subcomplex generated by the simplices incident to $K'$. I.e., it is
all the simplices that share a face with a simplex of $K'$, plus all
the faces of such simplices. This is a departure from a common usage
of this same term in the topology literature. The star of $K'$ is
denoted $\str{K'}$ when there is no risk of ambiguity, otherwise we
also specify the parent complex, as in $\str{K';K}$. A simple example
of the star of a complex is depicted in \Figref{fig:star.cplx}.
\begin{figure}[ht]
  \begin{center}
    \includegraphics[width=.6\columnwidth]{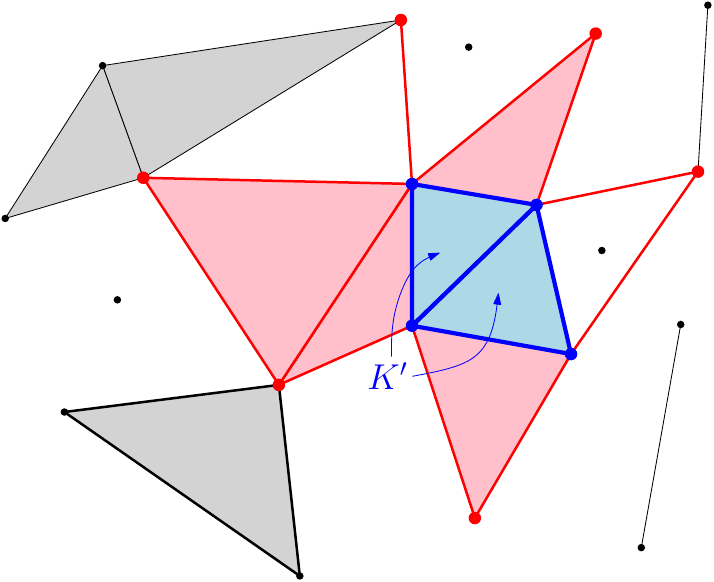} 
  \end{center}
  \caption{The star of a subcomplex $K' \subset K$ is the subcomplex
    $\str{K'} \subset K$ that consists all the simplices that share a
    face with $K'$ (this includes all of $K'$ itself), and all the
    faces of these simplices. Here we show an embedded $2$-complex,
    with all $2$-simplices shaded. The subcomplex $K'$ consists of the
    two indicated triangles, and their faces (blue). The simplices of
    $\str{K'}$ are shown in bold (red and blue). The other simplices
    do not belong to $\str{K'}$ (black).}
  \label{fig:star.cplx}
\end{figure}

A \defn{triangulation} of $\pts \subset \rem$ is an embedded complex
$K$ with vertices $\pts$ such that $\carrier{K} = \convhull{\pts}$. A
complex $K$ is a \defn{$j$-manifold complex} if the star of every
vertex is isomorphic to the star of a triangulation of $\reel^j$. In
order to exploit the local nature of the definition of a manifold
complex, it is convenient to have a local notion of triangulation for
the star of a vertex in $K$, even if the whole of $K$ is not a
triangulation of its vertices:
\begin{de}[Triangulation at a point]
  \label{def:local.triangulation}
  A complex $K$ is a \defn{triangulation at $p \in \rem$} if:
  \begin{enumerate}[noitemsep,topsep=0pt,parsep=0pt,partopsep=0pt]
  \item \label{tri:is.vtx}$p$ is a vertex of $K$.
  \item \label{tri:str.embedded}$\str{p}$ is embedded.
  \item \label{tri:interior.pt}$p$  lies in $\intr{\carrier{\str{p}}}$.
  \item \label{tri:no.conflicts}For all $\splxt \in K$, and $\splxs
    \in \str{p}$, if $(\intr{\splxt}) \cap \splxs \neq \emptyset$,
    then $\splxt \in \str{p}$.
  \end{enumerate}
\end{de}
In a general complex Condition~\ref{tri:no.conflicts} above is not a
local property, however in the case of Delaunay complexes that
intersts us here, local conditions are sufficient to verify the
condition, as we will show in \Secref{sec:local.del.tri}. Observe also
that Condition~\ref{tri:no.conflicts} also precludes intersections
with degenerate simplices, since such a simplex would have a face that
violates the conditon.

If $\splxs$ is a simplex with vertices in $\pts$, then any map $\pert:
\pts \to \tpts \subset \rem$ defines a simplex $\pert(\splxs)$ whose
verticies in $\tpts$ are the images of vertices of $\splxs$. If $K$ is
a complex on $\pts$, and $\tilde{K}$ is a complex on $\tpts$, then
$\pert$ induces a \defn{simplicial map} $K \to \tilde{K}$ if
$\pert(\splxs) \in \tilde{K}$ for every $\splxs \in K$. We denote this
map by the same symbol, $\pert$. We are interested in the 
case when $\pert$ is an \defn{isomorphism}, which means it establishes
a bijection between $K$ and $\tilde{K}$. We then say that $K$ and
$\tilde{K}$ are \defn{isomorphic}, and write $K \cong \tilde{K}$, or
$K \pertiso \tilde{K}$ if we wish to emphasise that the correspondence
is given by $\pert$.

A simplicial map $\pert: K \to \tilde{K}$ defines a continuous map
$\pert: \carrier{K} \to \carrier{\tilde{K}}$, by barycentric
interpolation on each simplex $\splxs \in K$.  We observe the
following consequence of Brouwer's invariance of domain:
\begin{lem}
  \label{lem:inject.triang}
  Suppose $K$ is a complex with vertices $\pts \subset \rem$, and
  $\tilde{K}$ a complex with vertices $\tpts \subset \rem$. Suppose
  also that $K$ is a triangulation at $p \in \pts$, and that $\pert:
  \pts \to \tpts$ induces an injective simplicial map $\str{p} \to
  \str{\pert(p)}$.
  If $\str{\pert(p)}$ is embedded, then 
  \begin{equation*}
    \pert(\str{p}) = \str{\pert(p)},
  \end{equation*}
  and $\pert(p)$ is an interior point of $\tpts$.
\end{lem}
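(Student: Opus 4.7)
The forward inclusion $\pert(\str{p}) \subseteq \str{\pert(p)}$ is immediate from the definitions of a simplicial map and of the star: any $\splxs \in \str{p}$ either contains $p$, in which case $\pert(\splxs)$ contains $\pert(p)$, or is a face of such a simplex, and in either case $\pert(\splxs) \in \str{\pert(p)}$. The substance of the argument is the reverse inclusion together with the interior-point claim, and as the preamble signals, the tool is Brouwer's invariance of domain.

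The plan is first to extend $\pert$ by barycentric interpolation to a continuous map $\tilde\pert: \carrier{\str{p}} \to \rem$, and to verify that this extension is injective on the entire carrier. Given $\tilde\pert(x_1) = \tilde\pert(x_2)$ with $x_i$ lying in some $\splxs_i \in \str{p}$, the two simplices $\pert(\splxs_1)$ and $\pert(\splxs_2)$ lie in the embedded complex $\str{\pert(p)}$ and share the point $\tilde\pert(x_1)$, so their intersection is a common face. Combining injectivity of $\pert$ on vertices with the fact that the affine parametrization on each simplex is bijective should then force $x_1 = x_2$. Condition~\ref{tri:interior.pt} of \Defref{def:local.triangulation} guarantees an open Euclidean ball $U \ni p$ contained in $\carrier{\str{p}}$, and invariance of domain yields that $\tilde\pert(U)$ is open in $\rem$, contains $\pert(p)$, and sits inside $\carrier{\pert(\str{p})} \subseteq \convhull{\tpts}$. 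This settles the interior-point claim.

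For the reverse inclusion it suffices to show that every $\tsplxt \in \str{\pert(p)}$ incident to $\pert(p)$ has the form $\pert(\splxt)$ for some $\splxt \in \str{p}$, because both complexes are closed under faces. I would pick a point $x \in \intr{\tsplxt} \cap \tilde\pert(U)$, which is nonempty since $\tilde\pert(U)$ is a Euclidean neighborhood of $\pert(p) \in \tsplxt$. Writing $x = \pert(y)$ with $y \in \splxs \in \str{p}$, both $\pert(\splxs)$ and $\tsplxt$ lie in the embedded complex $\str{\pert(p)}$, so $\pert(\splxs) \cap \tsplxt$ is a common face; meeting $\intr{\tsplxt}$ forces that face to be all of $\tsplxt$, so $\tsplxt \leq \pert(\splxs)$. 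Injectivity of $\pert$ on vertices then produces the desired $\splxt \leq \splxs$ with $\pert(\splxt) = \tsplxt$.

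The step I expect to be most delicate is the carrier-level injectivity of $\tilde\pert$: a simplicial map injective on vertices need not be injective on carriers in general, and it is precisely the embeddedness of the target star $\str{\pert(p)}$, via the intersection-as-common-face property, that prevents two distinct carrier points from being identified through two different simplices. Once this injectivity is in hand, the invariance-of-domain application and the simplicial bookkeeping that follow are routine.
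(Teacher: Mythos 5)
Your proposal is correct and follows essentially the same route as the paper's proof: establish injectivity of the barycentrically interpolated map on the carrier of $\str{p}$ via embeddedness of $\str{\pert(p)}$, apply Brouwer's invariance of domain to an open ball around $p$ to get the interior-point claim, and then use the resulting open neighbourhood of $\pert(p)$ together with the common-face intersection property to recover every simplex of $\str{\pert(p)}$ as an image simplex. The only cosmetic difference is that the paper picks the test point in the interior of the target simplex intersected with $\pert(B)$ and concludes equality of the two simplices directly, whereas you conclude $\tsplxt \leq \pert(\splxs)$ and then pass to the face; both are sound.
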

\begin{proof}
  We need to show that $\str{\pert(p)} \subseteq \pert(\str{p})$.
  Since $\str{p}$ is embedded, $\pert$ defines a
  continuous map $\pert: \carrier{\str{p}} \to
  \carrier{\str{\pert(p)}}$ that is injective on each simplex. Since
  $\str{\pert(p)}$ is also embedded, this continuous map is injective
  on $\carrier{\str{p}}$.  Since $K$ is a triangulation at $p$, there
  is an open ball $B$ centred at $p$ such that $B \subset
  \intr{\carrier{\str{p}}}$. Then $\pert|_B : \rem \supset B \to
  \pert(B) \subset \rem$ is a homeomorphism by Brouwer's invariance of
  domain~\cite[Ch. XVII]{dugundji1966}. It follows that $\pert(p)$ is
  an interior point of $\tpts$.

  Suppose $\splxs \in \str{\pert(p)}$ and $\pert(p)$ is a vertex of
  $\splxs$. Then, since $\splxs$ is not degenerate, there is a point
  $x \in \pert(B) \cap \intr{\splxs}$, and from the above argument,
  $x$ also lies in the interior of some simplex $\tsplxt \in
  \pert(\str{p}) \subseteq \str{\pert(p)}$. Since $\str{\pert(p)}$ is
  embedded, $\tsplxt \cap \splxs$ is a face of $\splxs$ and of
  $\tsplxt$, but since $x$ is in the interior of both simplices, it
  must be that $\tsplxt = \splxs$. Thus $\splxs \in \pert(\str{p})$.

  If $\splxs \in \str{\pert(p)}$, then there is some $\splxt \in
  \str{\pert(p)}$ such that $\pert(p)$ is a vertex of $\splxt$ and
  $\splxs \leq \splxt$. Since $\splxt \in \pert(\str{p})$, we also
  have $\splxs \in \pert(\str{p})$, by the definition of a simplicial
  map. 
\end{proof}

%
\section{Parameterized genericity}
\label{sec:param.gen}

In this section we examine the Delaunay complex of $\pts \subset
\rem$, taking the view that poorly-shaped simplices arise from almost
degenerate configurations of points. We introduce the concept of a
protected Delaunay ball, which leads to a parameterized definition
of genericity. We then show that a lower bound on the protection of
the maximal simplices yields a lower bound on their thickness.

\subsection{The Delaunay complex}

An \defn{empty ball} is one that contains no point from $\pts$. 
\begin{de}[Delaunay complex]
  \label{def:Delaunay.complex}
  A \defn{Delaunay ball} is a maximal empty ball. Specifically, $B =
  \ballEm{x}{r}$ is a Delaunay ball if any empty ball centred at $x$
  is contained in $B$. A simplex $\splxs$ is a \defn{Delaunay
    simplex} if there exists some Delaunay ball $B$ such that the
  vertices of $\splxs$ belong to $\bdry{B}\cap \pts$.  The
  \defn{Delaunay complex} is the set of Delaunay simplices, and is
  denoted $\delP$.
\end{de}
The Delaunay complex has the combinatorial structure of an abstract
simplicial complex, but $\delP$ is embedded only when
$\pts$ satisfies appropriate genericity requirements, as discussed in
\Secref{sec:protection}. Otherwise, $\delP$ contains degenerate
simplices.  We make here some observations that are not dependent on
assumptions of genericity.

\begin{figure}[ht]
  \begin{center}
    \includegraphics[width=.4\columnwidth]{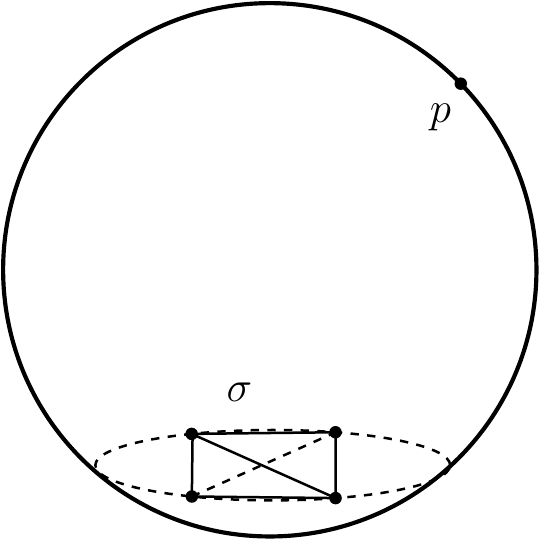} 
  \end{center}
  \caption{\Lemref{lem:maximal.splx}: If the affine hull of $\splxs$
    is not full dimensional, then a Delaunay ball has freedom to
    expand, and $\splxs$ must be the face of a higher dimensional
    Delaunay simplex.}
  \label{fig:polar.sliver}
\end{figure}
%
The union of the Delaunay simplices is $\convhull{\pts}$.  A simplex
$\splxs \in \delP$ is a \defn{boundary simplex} if all its vertices
lie on $\bdry{\convhull{P}}$. We observe
\begin{lem}[Maximal simplices]
  \label{lem:maximal.splx}
  If $\affhull{\pts} = \rem$, then every Delaunay $j$-simplex,
  $\splxs$, is a face of a Delaunay simplex $\splxs'$ with $\dim
  \affhull{\splxs'} = m$. In particular, if $j \leq m$, then $\splxs$
  is a face of a Delaunay $m$-simplex.
  If $\splxs$ is not a boundary simplex, and $\dim\affhull{\splxs} <
  m$, then there are at least two Delaunay $(j+1)$-simplices that have
  $\splxs$ as a face.
\end{lem}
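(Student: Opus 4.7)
The central tool is a ``ball-growing'' analysis: for a Delaunay simplex $\splxt$ with $k=\dim\affhull{\splxt}<m$ and a Delaunay ball $B=\ballEm{c}{r}$, the normal space $\normhull{\splxt}$ has positive dimension $m-k$, so I can move the centre of $B$ along any direction $v$ in its parallel vector space $U^\perp$, producing a family $\ballEm{c+tv}{r_t}$ that continues to circumscribe $\splxt$ (vertices remain on the boundary for all $t$), with $r_0=r$ and $B$ empty.

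For the first assertion, I take a Delaunay simplex $\splxs'\supseteq\splxs$ that is maximal under inclusion (which exists since $\pts$ is finite) and show $\dim\affhull{\splxs'}=m$. Suppose instead that $k=\dim\affhull{\splxs'}<m$. Since $\affhull{\pts}=\rem$, some point $q\in\pts$ has nonzero projection onto $U^\perp$ after translating the origin to $\circcentre{\splxs'}$; taking $v$ in the direction of that projection, a straightforward computation shows $q$ lies in $\ballEm{c+tv}{r_t}$ for all sufficiently large $t$. By continuity there is a smallest $t^\ast>0$ at which some $q'\in\pts$ reaches the boundary of the moving ball, and $q'$ cannot be a vertex of $\splxs'$ since those stay on the boundary throughout; hence $\splxs'\cup\{q'\}$ is a Delaunay simplex properly containing $\splxs'$, contradicting maximality. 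For the ``$j\le m$'' clause, once $\dim\affhull{\splxs'}=m$ the vertex set of $\splxs'$ has at least $m+1$ elements, so selecting the $j+1$ vertices of $\splxs$ together with any $m-j$ additional vertices of $\splxs'$ produces a combinatorial $m$-face of $\splxs'$, which is Delaunay.

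For the second assertion, $\splxs$ is not a boundary simplex and $k=\dim\affhull{\splxs}<m$. I study the convex polytope $V_\splxs\subseteq\normhull{\splxs}$ of centres of all Delaunay balls circumscribing $\splxs$, cut out by the half-space constraints $\|q-c\|\ge\|p_0-c\|$ for $q\in\pts$ not a vertex of $\splxs$. The key step is that $V_\splxs$ is bounded. If it had a recession direction $v\in U^\perp$, the ball $\ballEm{c+tv}{r_t}$ would remain empty for every $t\ge 0$; a direct computation in coordinates centred at $\circcentre{\splxs}$ shows this forces $\pts\subseteq\{y:\langle y,v\rangle\le 0\}$, with every vertex of $\splxs$ lying on the bounding hyperplane $\{y:\langle y,v\rangle=0\}$. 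That would place each vertex of $\splxs$ on $\bdry\convhull{\pts}$, contradicting the hypothesis.

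Once bounded, $V_\splxs$ is a non-empty bounded convex polytope in $\reel^{m-k}$ with $m-k\ge 1$, so it is defined by at least $m-k+1\ge 2$ non-redundant half-spaces, each from a distinct $q\in\pts$. Each such $q$ produces a tight constraint on some non-empty face of $V_\splxs$ and hence yields a Delaunay ball passing through $\splxs$ and $q$, giving a distinct Delaunay $(j+1)$-simplex $\splxs\cup\{q\}$. This provides the two required Delaunay $(j+1)$-simplices having $\splxs$ as a face. The main technical subtlety is ensuring the minimum count of non-redundant half-spaces is $\ge m-k+1$ even when $V_\splxs$ is lower-dimensional than its ambient $\normhull{\splxs}$; this is a standard consequence of the triviality of the recession cone of a bounded polytope (the normals of its defining half-spaces must positively span $\reel^{m-k}$, which requires at least $m-k+1$ of them).
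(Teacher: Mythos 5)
Your proof is correct, and while it is built on the same engine as the paper's --- translating the centre of an empty circumscribing ball within $\normhull{\splxs}$ and taking a first contact with $\pts$ --- it finishes both assertions differently. For the first assertion the paper constructs one strictly larger Delaunay coface and implicitly iterates, whereas you take an inclusion-maximal Delaunay coface and derive a contradiction by aiming the translation direction $v$ at a witness point $q$ with nonzero projection onto $U^\perp$; this packaging also makes the existence of a non-empty ball in the family immediate, where the paper's parenthetical justification (``we would have $\affhull{\splxs}=\affhull{\pts}$'') is slightly loose. For the second assertion the paper moves along a single line in both directions and argues the two first-contact points are distinct via intersection properties of spheres; you instead observe that the set $V_\splxs$ of empty circumscribing-ball centres is a polyhedron in $\normhull{\splxs}\cong\reel^{m-k}$ whose recession cone is trivial precisely because $\splxs$ is not a boundary simplex, and then invoke the fact that a bounded non-empty polyhedron in $\reel^{m-k}$ needs at least $m-k+1$ irredundant half-spaces, each tight somewhere on $V_\splxs$ and hence each certifying a Delaunay coface $\splxjoin{q}{\splxs}$. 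This is cleaner than the paper's sphere-intersection step (whose distinctness claim requires a small computation the paper omits), and it actually proves more: at least $m-k+1$ distinct Delaunay $(j+1)$-cofaces rather than two. Two housekeeping points you should make explicit: constraints coming from $q$ with $q-p_0$ parallel to $\affhull{\splxs}$ restrict trivially to $\normhull{\splxs}$ and must be discarded before counting (harmless, since $V_\splxs\neq\emptyset$ makes them redundant), and distinct irredundant half-spaces may each be induced by several points of $\pts$, but choosing one representative per half-space still yields the required number of distinct $q$'s.
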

\begin{proof}
  Suppose $\dim \affhull{\splxs} < m$. Let $B = \ballEm{c}{r}$ be a
  Delaunay ball for $\splxs$. Let $\ell$ be the line through $c$ and
  $\circcentre{\splxs}$. If $c = \circcentre{\splxs}$, let $\ell$ be
  any line through $c$ and orthogonal to $\affhull{\splxs}$. There
  must be a point $\hat{c} \in \ell$ such that the circumscribing ball
  for $\splxs$ centred at $\hat{c}$ is not empty. If this were not the
  case, we would have $\affhull{\splxs} = \affhull{\pts}$, and
  thus $\dim \affhull{\pts} < m$. It follows then (from the continuity
  of the radius of the circumballs parameterized by $\ell$), that
  there is a point $c' \in \simplex{c,\hat{c}}$ that is the centre of
  a Delaunay ball for a simplex $\splxs'$ that has $\splxs$ as a
  proper face. The first assertion follows.

  The second assertion follows from the same argument, and the
  observation that if $\splxs$ is not on the boundary of
  $\convhull{\pts}$, then there must be non-empty balls centred on
  $\ell$ at either side of $c$. If $p \in \pts \setminus
  \affhull{\splxs}$ is on the boundary of an empty ball centred at one
  side of $c$, by the intersection properties of spheres, it cannot be
  on the boundary of an empty ball centred on the other side of
  $c$. Thus there must be at least two distinct Delaunay
  $(k+1)$-simplices that share $\splxs$ as a face.
\end{proof}

\Lemref{lem:maximal.splx} gives rise to the following observation, which
plays an important role in \Secref{sec:protect.thick}, where we argue
that protecting the Delaunay $m$-simplices yields a thickness bound on
the simplices.
\begin{lem}[Separation]
  \label{lem:separate}
  If $\splxt \in \delP$ is a $j$-simplex that is not a boundary
  simplex, and $q \in \pts \setminus \splxt$, then there is a Delaunay
  $m$-simplex $\splx{m}$ which has $\splxt$ as a face, but does not
  include $q$.
\end{lem}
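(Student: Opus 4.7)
The plan is to prove the statement by downward induction on $m - j$, where $j$ is the combinatorial dimension of $\splxt$. The base case $j = m$ is immediate: $\splx{m} := \splxt$ is itself a Delaunay $m$-simplex with $\splxt$ as a face, and $q$ is not one of its vertices by hypothesis.

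For the inductive step, suppose $j < m$ and the claim is known for Delaunay simplices of combinatorial dimension $j+1$. Since $\dim \affhull{\splxt} \leq j < m$ and $\splxt$ is not a boundary simplex, the second conclusion of \Lemref{lem:maximal.splx} produces at least two distinct Delaunay $(j+1)$-simplices having $\splxt$ as a face. Each such simplex is obtained by adjoining a single vertex of $\pts \setminus \splxt$ to $\splxt$, and two distinct such extensions must differ in the adjoined vertex. Consequently at most one of them can have $q$ as the extra vertex, so I may select a Delaunay $(j+1)$-simplex $\splxs'$ that has $\splxt$ as a face and does not contain $q$. To close the induction I also verify that $\splxs'$ itself satisfies the non-boundary hypothesis: because $\splxt$ is not a boundary simplex, some vertex $v \in \splxt$ lies in $\intr{\convhull{\pts}}$, and this same $v$ is a vertex of $\splxs'$. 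Applying the inductive hypothesis to $\splxs'$ and $q$ then yields a Delaunay $m$-simplex $\splx{m}$ with $\splxs' \leq \splx{m}$ and $q \notin \splx{m}$; combined with $\splxt \leq \splxs'$, this is the desired simplex.

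The main obstacle --- dodging $q$ at each step of the iterative extension --- is defused by \Lemref{lem:maximal.splx} itself: the pigeonhole argument ruling out $q$ as the forced new vertex depends crucially on having \emph{two} distinct extensions available at each level, rather than only one. The rest is essentially bookkeeping to check that the non-boundary precondition propagates along the chain of extensions, which it does for free by tracking the interior vertex $v$ of $\splxt$.
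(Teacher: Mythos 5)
Your argument is correct, and it is organized differently from the paper's. You run a dimension-by-dimension induction, invoking the second conclusion of \Lemref{lem:maximal.splx} at every level to choose, among the at least two distinct $(j+1)$-dimensional cofaces of the current simplex, one whose adjoined vertex is not $q$, and you correctly note that the non-boundary hypothesis propagates because an interior vertex of $\splxt$ persists along the chain. The paper short-circuits this induction with a two-case argument. If $\splxjoin{q}{\splxt}$ is not Delaunay, then \emph{any} Delaunay $m$-simplex containing $\splxt$ (supplied by the first part of \Lemref{lem:maximal.splx}) already avoids $q$, since otherwise $\splxjoin{q}{\splxt}$ would be one of its faces and hence Delaunay. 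If $\splxjoin{q}{\splxt}$ is Delaunay, the paper completes it to an $m$-simplex $\tsplx{m} = \splxjoin{q}{\splx{m-1}}$, observes that $\splx{m-1} \geq \splxt$ is not a boundary simplex, and applies the two-coface property exactly once, at codimension one: a second Delaunay $m$-simplex over $\splx{m-1}$ must have a top vertex different from $q$. Both proofs rest on the same key lemma; yours is the more systematic and arguably more robust template (the dodging step is uniform across dimensions), while the paper's is more economical, needing the pigeonhole only at the top level. Your only proof obligations beyond the paper's --- that a $(j+1)$-coface of a $j$-simplex adjoins exactly one new vertex of $\pts \setminus \splxt$, and that the non-boundary property passes to cofaces --- are both discharged correctly.
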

\begin{proof}
  Assume $j<m$, for otherwise there is nothing to prove.  If $\splxs =
  \splxjoin{q}{\splxt}$ is not Delaunay, the assertion follows from
  the first part of \Lemref{lem:maximal.splx}. Assume $\splxs$ is
  Delaunay and let $\tsplx{m}$ be a Delaunay $m$-simplex that has
  $\splxs$ as a face. Thus $\tsplx{m} = \splxjoin{q}{\splx{m-1}}$ for
  some Delaunay $(m-1)$-simplex, $\splx{m-1}$. Since $\splxt \leq
  \splx{m-1}$ does not belong to the boundary of $\convhull{\pts}$,
  neither does $\splx{m-1}$, so by the second part of
  \Lemref{lem:maximal.splx}, there is another Delaunay $m$-simplex
  $\splx{m}$ that has $\splx{m-1}$ (and therefore $\splxt$) as a
  face. Since $\splx{m}$ is distinct from $\tsplx{m}$, it does not
  have $q$ as a vertex.
\end{proof}

\subsubsection{The Delaunay complex in other metrics}
\label{sec:Delaunay.alt.metric}

We will also consider the Delaunay complex defined with respect to a
metric $\gdist$ on $\rem$ which differs from the Euclidean
one. Specifically, if $\pts \subset U \subset \rem$ and $\gdist: U
\times U \to \reel$ is a metric, then we define the Delaunay complex
$\delPd$ with respect to the metric $\gdist$.

The definitions are exactly analogous to the Euclidean case: A
Delaunay ball is a maximal empty ball $\ball{x}{r}$ in the metric
$\gdist$. The resulting Delaunay complex $\delPd$ consists of all the
simplices which are circumscribed by a Delaunay ball with respect to
the metric $\gdist$. The simplices of $\delPd$ are, possibly
degenerate, geometric simplices in $\rem$. As for $\delP$, $\delPd$
has the combinatorial structure of an abstract simplicial complex, but
unlike $\delP$, $\delPd$ may fail to be embedded even
when there are no degenerate simplices.


%

\subsection{Protection}
\label{sec:protection}

A Delaunay simplex $\splxs$ is \defn{$\delta$-protected} if it has a
Delaunay ball $B$ such that $\distEm{q}{\bdry{B}} > \delta$ for all
$q \in \pts \setminus \splxs$.  We say that $B$ is a
$\delta$-protected Delaunay ball for $\splxs$. If $\splxt < \splxs$,
then $B$ is also a Delaunay ball for $\splxt$, but it cannot be a
$\delta$-protected Delaunay ball for $\splxt$.
We say that $\splxs$ is \defn{protected} to mean that it is
$\delta$-protected for some unspecified $\delta \geq 0$.
\begin{figure}[ht]
  \begin{center}
    \includegraphics[width=.4\columnwidth]{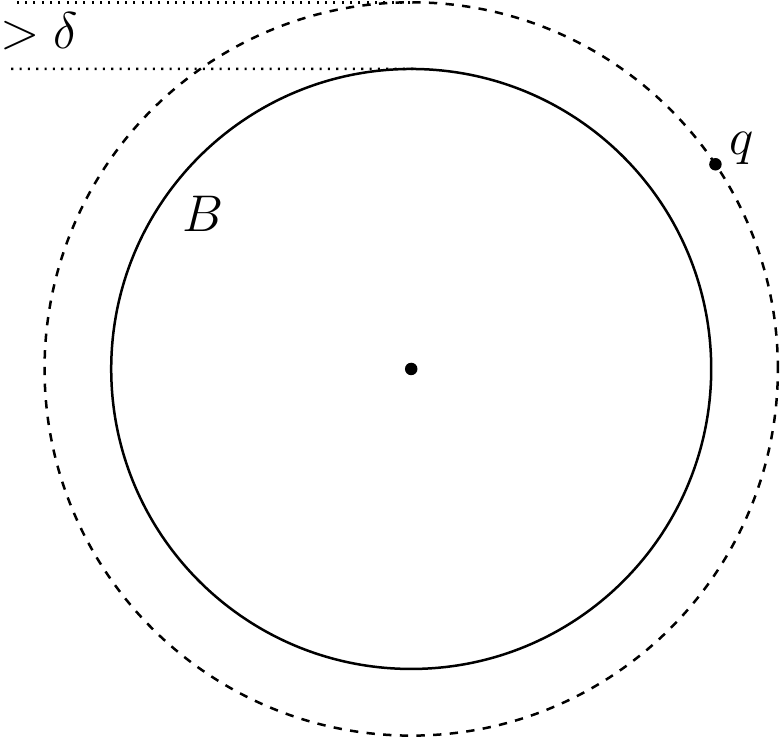} 
  \end{center}
  \caption{A Delaunay simplex $\splxs$ is $\delta$-protected if it has a
    Delaunay ball $\ballEm{c}{r}$ such that $\cballEm{c}{r + \delta}
    \cap (\pts \setminus \splxs) = \emptyset$.}
  \label{fig:protected.ball}
\end{figure}
\begin{de}[$\delta$-generic]
  \label{def:delta.generic}
  A finite set of points $\pts \subset \rem$ is
  \defn{$\delta$-generic} if $\affhull{\pts} = \rem$, and all the
  Delaunay $m$-simplices are $\delta$-pro\-tec\-ted. The set $\pts$ is
  simply \defn{generic} if it is $\delta$-generic for some unspecified
  $\delta \geq 0$.
\end{de}
Observe that we have employed a strict inequality in the definition of
$\delta$-protection. In particular, a $\delta$-generic point set is
generic even when $\delta=0$. In order for the quantity $\delta$ to be
meaningful, it should be considered with respect to a sampling radius
$\samconst$ for $\pts$. We will always assume that $\delta \leq
\samconst$.

In his seminal work, Delaunay~\cite{delaunay1934} demonstrated that if
there is no empty ball with $m+2$ points from $\pts$ on its boundary,
then $\delP$ is realized as a simplicial complex in $\rem$. In other
words $\delP$ is an embedded complex, and in fact it is a
triangulation of $\pts$, the \defn{Delaunay triangulation}. If $\pts$
is generic according to \Defref{def:delta.generic}, then Delaunay's
criterion will be met. This is obvious if there are no degenerate
$m$-simplices, and \Defref{def:delta.generic} ensures that a
degenerate $m$-simplex cannot exist in $\delP$, as shown by
\Lemref{lem:embed.del.star} below.

In particular, if $\pts$ is generic if and only if there are no Delaunay
simplices with dimension higher than $m$.  We can say more. There are
no degenerate Delaunay simplices.
This can be inferred directly from Delaunay's
result~\cite{delaunay1934}, but is also easily established from
\Lemref{lem:maximal.splx}. In \Secref{sec:protect.thick} we will
quantify this observation with a bound on the thickness of the
Delaunay simplices.

The $\delta$-generic assumption means that all the Delaunay
$m$-simplices are $\delta$-protected, but the lower dimensional
Delaunay do not necessarily enjoy this level of protection. The fact
that there are no degenerate Delaunay simplices implies that all the
simplices of all dimensions are $\tdelta$-protected for some $\tdelta
> 0$. 

\subsubsection{Local Delaunay triangulation}
\label{sec:local.del.tri}

Delaunay avoided boundary complications by assuming a periodic point
set, but we are particularly interested in the case where the point
sets come from local patches of a well-sampled compact manifold
without boundary. Periodic boundary conditions are not appropriate in
this setting, but this is not a problem because, as we show here,
Delaunay's argument applies locally.

Delaunay's proof that the Delaunay complex of a generic periodic point
set is a triangulation of $\rem$ consists of two observations. First
it is observed that if two Delaunay simplices intersect, then they
intersect in a common face. This shows that $\delP$ is embedded. The
argument is not complicated by the presence of boundary points:
\begin{lem}[Embedded star]
  \label{lem:embed.del.star}
  Suppose $\affhull{\pts} = \rem$ and $p \in \pts$. If all the
  $m$-simplices in $\str{p;\delP}$ are protected, then $\str{p;\delP}$ is
  embedded, and it is a pure $m$-complex.
\end{lem}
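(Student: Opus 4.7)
My plan is to split the claim into two parts: that $\str{p;\delP}$ is a pure $m$-complex, and that it is embedded. The first part follows quickly from \Lemref{lem:maximal.splx}; the embedding is the substance of the argument, and I would handle it using the standard parabolic lifting.

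For the pure $m$-complex property, let $\splxs \in \str{p;\delP}$. By definition of the star, $\splxs$ is a face of some Delaunay simplex $\splxt$ that has $p$ as a vertex, and \Lemref{lem:maximal.splx} then produces a Delaunay $m$-simplex $\splxs' \geq \splxt$ with $\dim\affhull{\splxs'}=m$; since $p$ is a vertex of $\splxs'$, we have $\splxs' \in \str{p;\delP}$ and $\splxs \leq \splxs'$. This simultaneously exhibits each simplex of $\str{p;\delP}$ as a face of a protected $m$-simplex. Protection in turn forces those $m$-simplices to be non-degenerate, for otherwise the second assertion of \Lemref{lem:maximal.splx} would place another vertex of $\pts$ on the boundary of their Delaunay ball, violating $\delta$-protection. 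So every simplex of $\str{p;\delP}$ is non-degenerate, which is a prerequisite for embedding.

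For the embedded property, I would invoke the parabolic lifting $\phi: \rem \to \reel^{m+1}$ defined by $\phi(x)=(x,\|x\|^2)$. Under this lift, empty Euclidean balls correspond to hyperplanes in $\reel^{m+1}$ that lie weakly below $\phi(\pts)$, and $\delP$ is obtained by applying the vertical projection $\pi:\reel^{m+1}\to\rem$ to the lower boundary $L$ of $\convhull{\phi(\pts)}$. The $\delta$-protection of an $m$-simplex $\splx{m}$ in $\str{p;\delP}$ translates to: the supporting hyperplane of $\phi(\splx{m})$ contains exactly its $m+1$ lifted vertices, and every other lifted point lies strictly above it. Hence each facet of $L$ incident to $\phi(p)$ is a genuine $m$-simplex with affinely independent vertices, and any lower-dimensional face of $L$ meeting $\phi(p)$, being an intersection of such facets, is also a simplex. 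Locally around $\phi(p)$, then, $L$ is a geometric simplicial complex in $\reel^{m+1}$. Because $L$ is the graph of a convex function over $\convhull{\pts}$, the restriction $\pi|_L$ is a homeomorphism onto its image, so pushing this simplicial structure down through $\pi$ yields the desired embedded complex, which is $\str{p;\delP}$.

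The main technical point to pin down is the correspondence between $\delta$-protection and strict separation from the supporting hyperplane. This is a one-line calculation using the identity $\|q-c\|^2 - r^2 = \|q\|^2 - 2 c\cdot q + \|c\|^2 - r^2$, which rewrites the squared distance from $q$ to the sphere $\bdry{B}$ as an affine function of $q$ restricted to the paraboloid $\phi(\rem)$. Beyond that, the rest is routine bookkeeping together with the elementary fact that faces of a convex polytope meet in common faces.
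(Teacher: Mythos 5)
Your proof is essentially correct, but it reaches the embedding conclusion by a genuinely different route. The paper argues directly in $\rem$: given two intersecting $m$-simplices of $\str{p;\delP}$ with Delaunay balls $B_1$ and $B_2$, it takes the $(m-1)$-flat $H = \affhull{\bdry{B_1}\cap\bdry{B_2}}$ (the radical hyperplane) and observes that, both balls being empty, $H$ separates the two interiors, so the simplices meet exactly in the common face spanned by the vertices on $\bdry{B_1}\cap\bdry{B_2}$. Your lifting argument is the classical global counterpart of this: the radical hyperplane is precisely the vertical projection of the intersection of the two supporting hyperplanes of the lifted facets, so the two proofs are two views of the same geometry. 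The paper's version is more elementary and self-contained (no appeal to the Delaunay/lower-hull correspondence, to the face structure of convex polytopes, or to the injectivity of the vertical projection), whereas yours makes the link to the convex-hull characterization explicit and localizes it cleanly at $\phi(p)$; your observation that protection translates into \emph{strict} separation of the remaining lifted points from the supporting hyperplane is exactly the right dictionary.

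One step deserves tightening: your justification of non-degeneracy via ``the second assertion of \Lemref{lem:maximal.splx} would place another vertex of $\pts$ on the boundary of their Delaunay ball.'' That assertion requires the simplex not to be a boundary simplex, which is not guaranteed here, and more importantly a degenerate $m$-simplex has a positive-dimensional family of Delaunay balls, so exhibiting extra cofaces does not by itself put a point of $\pts$ on the boundary of the particular ball witnessing protection. The paper's contradiction is obtained on a different simplex: a degenerate $\splx{m}$ is a proper face of a Delaunay simplex $\splxt$ with $\affhull{\splxt}=\rem$; choosing $m+1$ affinely independent vertices of $\splxt$ gives a non-degenerate $m$-simplex of $\str{p}$ whose \emph{unique} circumball is a Delaunay ball of $\splxt$ and hence carries further points of $\pts$ on its boundary, so it cannot be protected. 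Your later claim that every facet of the lower hull incident to $\phi(p)$ is a genuine $m$-simplex needs, and implicitly contains, this same argument, so the gap is reparable in one sentence; but as written the first justification does not stand on its own.
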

\begin{proof}
  We first observe that the $m$-simplices in $\str{p}$ are not
  degenerate.  If $\splxs^m$ is degenerate, then by
  \Lemref{lem:maximal.splx}, there is a simplex $\splxt$ with
  $\affhull{\splxt} = \rem$, and $\splxs^m < \splxt$. We have $\splxt
  \in \str{p}$, since $p \in \splxt$. An affinely independent set of
  $m+1$ vertices from $\splxt$ defines a non-degenerate $m$-simplex
  $\tsplxs^m < \splxt$, and since its unique circumball is also a
  Delaunay ball for $\splxt$, it cannot be protected, a contradiction.

  Now suppose that $\splxs, \splxt \in \str{p}$ and $\splxs \cap
  \splxt \neq \emptyset$. We need to show that they intersect in a
  common face.  By \Lemref{lem:maximal.splx}, we may assume that
  $\splxs$ and $\splxt$ are $m$-simplices.  Assume $\splxs \neq
  \splxt$, and let $B_1$ and $B_2$ be the Delaunay balls for $\splxs$
  and $\splxt$. Then $\affhull{\bdry{B_1} \cap \bdry{B_2}}$ defines an
  $(m-1)$-flat, $H$. Since $B_1$ and $B_2$ are empty balls, $H$
  separates the interiors of $\splxs$ and $\splxt$, and thus they must
  intersect in $H$, i.e., at the common face defined by the vertices
  in $\bdry{B_1} \cap \bdry{B_2}$.
\end{proof}

The second observation Delaunay made is that, in the case of a
periodic (infinite) point set, every $(m-1)$-simplex is the face of
two $m$-simplices (\Lemref{lem:maximal.splx}). The implication here is
that $\delP$ cannot have a boundary, and therefore must cover
$\rem$. Here we flesh out the argument for our purposes: If an
embedded finite complex contains $m$-simplices then its topological
boundary must contain $(m-1)$-simplices. We first observe that the
topological boundary of an embedded complex is defined by a
subcomplex:
\begin{lem}[Boundary complex]
  \label{lem:bdry.complex}
  If $K$ is an embedded (finite) complex in $\rem$, then the topological
  boundary of $\carrier{K} \subset \rem$ is defined by a subcomplex:
  $\bdry{\carrier{K}} = \carrier{\bd{K}}$, where the subcomplex
  $\bd{K} \subset K$ is called the \defn{boundary complex} of $K$.
\end{lem}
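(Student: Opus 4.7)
The plan is to define $\bd{K} = \{\splxs \in K : \intr{\splxs} \subset \bdry{\carrier{K}}\}$ and then verify that (i) this collection is closed under taking faces, hence a subcomplex of $K$, and (ii) its carrier equals $\bdry{\carrier{K}}$. Two elementary partitions will drive the argument. First, since $\carrier{K}$ is a finite union of closed simplices it is closed in $\rem$, and therefore $\carrier{K} = \intr{\carrier{K}} \sqcup \bdry{\carrier{K}}$. Second, since $K$ is embedded, the open simplex interiors partition the carrier: $\carrier{K} = \bigsqcup_{\splxs \in K}\intr{\splxs}$.

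The crux is a dichotomy: for every $\splxs \in K$, either $\intr{\splxs} \subset \intr{\carrier{K}}$ or $\intr{\splxs} \subset \bdry{\carrier{K}}$, so that $\bd{K}$ is well defined. To prove it, I pick any $x \in \intr{\splxs}$ and choose $\epsilon > 0$ small enough that $B(x,\epsilon)$ meets only those simplices $\splxt \in K$ with $x \in \splxt$; embeddedness of $K$ then forces every such $\splxt$ to satisfy $\splxs \leq \splxt$. For each such $\splxt$, a short computation with barycentric coordinates shows that the tangent cone $T_x\splxt = \{v \in \rem : x + tv \in \splxt \text{ for some small } t > 0\}$ depends only on $\splxs$ and $\splxt$, not on the particular $x \in \intr{\splxs}$. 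Consequently $B(x,\epsilon)\cap\carrier{K}$ equals $x$ translated by a fixed cone (intersected with $B(0,\epsilon)$), which is the same for every $x \in \intr{\splxs}$. Hence the property ``$B(x,\epsilon) \subset \carrier{K}$'' is determined by $\splxs$ alone, establishing the dichotomy.

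Granted the dichotomy, closure of $\bd{K}$ under taking faces is immediate: if $\splxs \in \bd{K}$ and $\splxt < \splxs$, then $\splxt \subset \splxs = \overline{\intr{\splxs}}$, and since $\bdry{\carrier{K}}$ is closed in $\rem$ (being the boundary of the closed set $\carrier{K}$), the inclusion $\intr{\splxs} \subset \bdry{\carrier{K}}$ extends to $\splxt \subset \bdry{\carrier{K}}$, so in particular $\intr{\splxt} \subset \bdry{\carrier{K}}$ and $\splxt \in \bd{K}$. To finish, the dichotomy and the open-simplex partition together yield
\begin{equation*}
\bdry{\carrier{K}} \;=\; \bigsqcup_{\splxs \in \bd{K}}\intr{\splxs},
\end{equation*}
and face-closure of $\bd{K}$ rewrites the right-hand side as $\bigcup_{\splxs \in \bd{K}}\splxs = \carrier{\bd{K}}$. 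The delicate ingredient is the tangent-cone identification underlying the dichotomy; once that is in hand, the subcomplex property and the carrier identity are formal consequences.
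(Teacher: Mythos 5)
Your proof is correct, but it proceeds by a genuinely different mechanism than the paper's. You establish the dichotomy (each open simplex $\intr{\splxs}$ lies entirely in $\intr{\carrier{K}}$ or entirely in $\bdry{\carrier{K}}$) by a local-structure argument: embeddedness forces every simplex meeting a small ball around $x\in\intr{\splxs}$ to be a coface of $\splxs$, and the tangent cone of each coface at $x$ is independent of $x\in\intr{\splxs}$, so the germ of $\carrier{K}$ along $\intr{\splxs}$ is translation-invariant and ``$x$ is an interior point of $\carrier{K}$'' is a property of $\splxs$ alone. The paper instead argues by contradiction with a segment-translation trick: given $x\in\intr{\splxs}\cap\bdry{\carrier{K}}$ and $y\in\intr{\splxs}\cap\intr{\carrier{K}}$, it translates the segment $\seg{x}{y}$ by a small vector pointing from $x$ to a nearby point outside $\carrier{K}$, chosen shorter than the distance from $\seg{x}{y}$ to every simplex not containing $\splxs$; the first point where the translated segment re-enters $\carrier{K}$ must lie in such a simplex, a contradiction. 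Your route buys a cleaner statement (the boundary is the disjoint union of the open simplices of $\bd{K}$, and the dichotomy is reusable), at the cost of the polytope fact that the tangent cone of $\splxt$ at $x$ depends only on the face of $\splxt$ whose relative interior contains $x$ --- which you correctly flag as the delicate ingredient and which does follow from the barycentric computation you sketch, using non-degeneracy of the simplices of an embedded complex. The paper's route is more elementary, needing only finiteness (to get a positive distance to the simplices omitting $\splxs$) and convexity. Both proofs handle the closure under faces the same way, via closedness of $\bdry{\carrier{K}}$ and density of $\intr{\splxs}$ in $\splxs$. I see no gap in your argument.
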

\begin{figure}[ht]
  \begin{center}
    \includegraphics[width=.6\columnwidth]{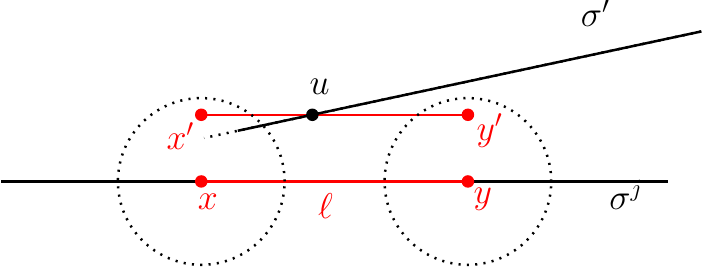} 
  \end{center}
  \caption{Diagram for the proof of \Lemref{lem:bdry.complex}.}
  \label{fig:bdry.cplx}
\end{figure}
%
\begin{proof}
  Since $K$ is finite, $\bdry{\carrier{K}}$ is contained in
  $\carrier{K}$.  Suppose $x \in \bdry{\carrier{K}}$. Then $x \in
  \intr{\splxs^j}$ for some $\splxs^j \in K$. We wish to show that
  $\splxs^j \subset \bdry{\carrier{K}}$. Suppose to the contrary that
  $y \in \intr{\splxs^j}$, but $y$ does not belong to
  $\bdry{\carrier{K}}$. This means that $y \in \intr{\carrier{K}}$.

  Consider the segment $\ell = \seg{x}{y} \subset
  \intr{\splxs^j}$. Let $Z \subset K$ be the subcomplex consisting of
  those simplices that do not contain $\splxs^j$. Let
  \begin{equation*}
    r_1 = \min_{\splxs \in Z} \distEm{\ell}{\splxs}.
  \end{equation*}
  Choosing $r \leq r_1$, and $x' \in \ballEm{x}{r} \setminus
  \carrier{K}$, let $y' = y + (x'-x)$. Since $y \in
  \intr{\carrier{K}}$, we may assume that $r$ is small enough so that
  $y' \in \intr{\carrier{K}}$.

  Consider the segment $\ell' = \seg{x'}{y'}$. By construction, $\ell'
  \cap \carrier{Z} = \emptyset$. However, consider the point $u \in
  \intr{\ell'}$ that is the point in $\ell' \cap \carrier{K}$ that is
  closest to $x'$. The point $u$ lies in the interior of some
  simplex $\splxs' \in K$, but we cannot have $\splxs^j \leq
  \splxs'$. Indeed if this were the case, $x'$ would lie in
  $\affhull{\splxs'}$, and so $u \in \bdry{\splxs'}$, contradicting
  the assumption that $u \in \intr{\splxs'}$.

  But this means that $\splxs' \in Z$, which contradicts the fact that
  $\ell' \cap \carrier{Z} = \emptyset$. Therefore we must have $y \in
  \bdry{\carrier{K}}$ for all $y \in \intr{\splxs^j}$.

  Finally, observe that if $\splxt < \splxs^j$, then $\splxt \subset
  \bdry{\carrier{K}}$, since $\bdry{\carrier{K}}$ is closed.
\end{proof}

\begin{lem}[Pure boundary complex]
  \label{lem:pure.bdry.complex}
  If $K$ is a (finite) pure $m$-complex embedded in $\rem$, then its
  boundary complex is a pure $(m-1)$-complex.
\end{lem}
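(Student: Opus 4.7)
The plan is to prove the claim by induction on the codimension $m - j$, where $j = \dim \splxs$ and $\splxs \in \bd{K}$ is an arbitrary boundary simplex; the target is to exhibit an $(m-1)$-simplex of $\bd{K}$ having $\splxs$ as a face. First, $\splxs$ cannot have dimension $m$: the interior of an $m$-simplex is open in $\rem$ and hence contained in $\intr{\carrier{K}}$, contradicting $\splxs \subset \bdry{\carrier{K}}$ guaranteed by \Lemref{lem:bdry.complex}. The base case $j = m-1$ is immediate, since then $\splxs$ itself serves. For the inductive step with $j < m-1$, it suffices to produce some $\splxt \in \bd{K}$ having $\splxs$ as a \emph{proper} face, because the inductive hypothesis applied to $\splxt$ (of strictly higher dimension than $j$) then yields an $(m-1)$-simplex of $\bd{K}$ containing $\splxt$, and hence $\splxs$, as a face.

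To produce such an intermediate boundary simplex $\splxt$, I would use a walking argument. Pick $x \in \intr{\splxs}$ and a radius $r > 0$ small enough that the ball $B = \ballEm{x}{r}$ meets only simplices of $K$ that contain $\splxs$ as a face; this is possible because $K$ is finite and every simplex of $K$ not containing $x$ is closed and hence at positive distance from $x$. Purity of $K$ supplies an $m$-simplex $\splx{m}_0 \in K$ with $\splxs \leq \splx{m}_0$, from which I pick a point $z \in \intr{\splx{m}_0} \cap B$. Since $x \in \bdry{\carrier{K}}$ and $\carrier{K}$ is closed, $B \setminus \carrier{K}$ is a nonempty open set, so we can choose $y$ in it. Let $u$ be the last point of the segment $\seg{z}{y}$ that lies in $\carrier{K}$; such a $u$ exists because $z \in \intr{\carrier{K}}$ and $y \notin \carrier{K}$, and necessarily $u \in \bdry{\carrier{K}} = \carrier{\bd{K}}$ by \Lemref{lem:bdry.complex}. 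Hence $u$ lies in the interior of some $\splxt \in \bd{K}$, and convexity of $B$ places $u \in B$, which forces $\splxs \leq \splxt$.

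The main obstacle is ensuring that this $\splxt$ is strictly larger than $\splxs$. I would handle this with a genericity argument on $y$, arranging that $\seg{z}{y}$ avoids $\intr{\splxs}$. The set of $y$ for which the segment meets $\intr{\splxs}$ lies in the affine cone $\{z + t(w - z) : t \in \reel, \ w \in \splxs\}$, whose dimension is at most $j+1$; since $j+1 < m$, this cone is a proper lower-dimensional subset of $\rem$, of measure zero, so a generic $y$ in the positive-measure open set $B \setminus \carrier{K}$ avoids it. For such a choice the exit point $u$ cannot lie in $\intr{\splxs}$, so $\splxt \neq \splxs$, making $\splxs$ a proper face of $\splxt$ and closing the induction.
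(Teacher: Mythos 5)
Your proof is correct, and its skeleton coincides with the paper's: pick $x \in \intr{\splxs}$, shrink a ball $B$ around $x$ until it meets only cofaces of $\splxs$, take $y \in B \setminus \carrier{K}$, use purity to produce an incident $m$-simplex, and trace a path from the inside of $\carrier{K}$ out to $y$; the exit point lies in the interior of some boundary simplex $\splxt$ which, being met by $B$, must have $\splxs$ as a face. The two arguments diverge only at the crux, namely ruling out that the exit point falls back into $\intr{\splxs}$ so that $\splxt = \splxs$. The paper arranges this by construction: it walks along the sphere $S^{m-j-1}$ obtained by intersecting $\bdry{\ballEm{x}{\norm{x-y}}}$ with the affine space through $y$ orthogonal to $\affhull{\splxs}$ --- a set disjoint from $\splxs$ outright --- and takes the point of $S^{m-j-1} \cap \carrier{K}$ geodesically nearest to $y$. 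You instead walk along the straight segment $\seg{z}{y}$ from an interior point $z$ of the $m$-simplex and perturb $y$ within the open set $B \setminus \carrier{K}$ so that the segment misses the cone over $\splxs$ with apex $z$; since that cone lies in an affine subspace of dimension at most $j+1 < m$ it has measure zero, so the perturbation is legitimate, and it trades the paper's orthogonality construction for a mild appeal to Lebesgue measure. You also make explicit the induction on codimension that the paper leaves implicit in ``the result then follows,'' which is a point in your favour. One small omission: to conclude that $\bd{K}$ is a pure $(m-1)$-complex you should record that $\bd{K}$ is nonempty (e.g.\ $\carrier{K}$ is compact and nonempty, so $\bdry{\carrier{K}} \neq \emptyset$); the paper notes this in the first sentence of its proof.
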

\begin{figure}[ht]
  \begin{center}
    \includegraphics[width=.4\columnwidth]{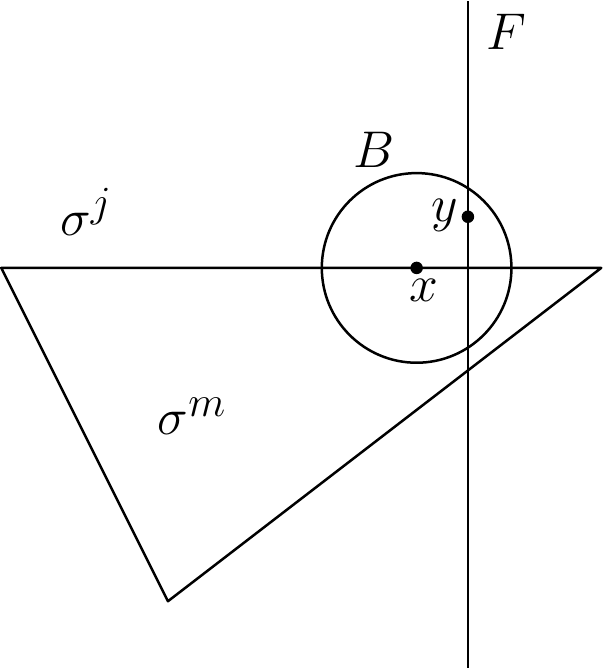} 
  \end{center}
  \caption{Diagram for the proof of \Lemref{lem:pure.bdry.complex}.}
  \label{fig:pure.bdry}
\end{figure}
\begin{proof}
%
  Since $K$ is finite, $\bd{K}$ is nonempty; it contains at least the
  vertices in $\bdry{\convhull{\carrier{K}}}$. We will show that if
  $\splxs^j \in \bd{K}$, is a $j$-simplex, with $0 \leq j < m-1$, then
  there is a $\splxs^k \in \bd{K}$ with $\splxs^j < \splxs^k$. The
  result then follows, since $\bd{K}$ cannot contain $m$-simplices,
  because $K$ is embedded.

  Suppose $\splxs^j \in \bd{K}$, and $x \in \intr{\splxs^j}$. Let $Z
  \subset K$ be the subcomplex consisting of simplices that do not
  contain $\splxs^j$, and let
  \begin{equation*}
    r = \min_{\splxs \in Z} \distEm{x}{\splxs}.
  \end{equation*}
  Let $B = \ballEm{x}{r}$, and choose $y \in B \setminus
  \carrier{K}$. Let $F$ be the $(m-j)$-dimensional affine space
  orthogonal to $\affhull{\splxs^j}$ and containing $y$, and let
  $S^{m-j-1} = F \cap \bdry{\ballEm{x}{r'}}$, where $r' =
  \norm{x-y}$. See \Figref{fig:pure.bdry}.

  Since $K$ is pure, there is an $m$-simplex $\splxs^m$ with $\splxs^j
  < \splxs^m$. We have $\splxs^m \cap S^{m-j-1} \neq
  \emptyset$. Indeed, choose $w \in \intr{\splxs^m}$, and $u \in
  \splxs^j$ different from $x$, and observe that the plane $Q$ defined
  by $x,w,u$ intersects $B \cap \splxs^m$ in a semi-disk, by
  construction of $B$. By the construction of $S^{m-j-1}$, it must
  intersect this semidisk.

  Let $z \in S^{m-j-1}$ be a point that minimises the geodesic
  distance in $S^{m-j-1}$ to $y$. Then $z \in
  \bdry{\carrier{K}}$. Thus $z \in \intr{\splxs^k}$ for some $\splxs^k
  \in \bd{K}$, and since $z \in B$, $\splxs^k$ cannot belong to
  $Z$. Thus $\splxs^j \leq \splxs^k$, but since $S^{m-j-1} \cap
  \splxs^j = \emptyset$, we have $\splxs^j < \splxs^k$.
\end{proof}

From \Lemref{lem:maximal.splx}, \Lemref{lem:embed.del.star}, and
\Lemref{lem:pure.bdry.complex}, one can verify that if $\pts$ is
generic then $\bdry{\convhull{\pts}} = \bdry{\carrier{\delP}}$, and
thus obtain the standard result that $\delP$ is a triangulation of
$\pts$. However, we are interested localizing the result, without the
assumption that the entire point set is generic.  We have the
following local version of Delaunay's triangulation result:
\begin{lem}[Local Delaunay triangulation]
  \label{lem:Delaunay.local.tri}
  If $p \in \pts$ is an interior point, and the Delaunay $m$-simplices
  incident to $p$ are protected, then $\delP$ is a triangulation at $p$.
\end{lem}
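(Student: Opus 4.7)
The plan is to verify the four clauses of \Defref{def:local.triangulation} in turn, using the structural results of \Secref{sec:protection}. First I observe that because $p$ is an interior point of $\pts$, we must have $\affhull{\pts} = \rem$ (otherwise $\convhull{\pts}$ would sit in a proper affine subspace of $\rem$ and consist entirely of topological-boundary points, precluding any interior point of $\pts$), so both \Lemref{lem:maximal.splx} and \Lemref{lem:embed.del.star} are available. Clause~\ref{tri:is.vtx} is immediate since $p$ is trivially a Delaunay $0$-simplex, and clause~\ref{tri:str.embedded} is exactly the conclusion of \Lemref{lem:embed.del.star}, which furthermore gives that $\str{p;\delP}$ is a pure $m$-complex.

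For clause~\ref{tri:interior.pt} I will argue by contradiction. If $p$ lay on $\bdry{\carrier{\str{p}}}$, then \Lemref{lem:bdry.complex} and \Lemref{lem:pure.bdry.complex} would together produce an $(m-1)$-simplex $\splxt \in \bd{\str{p}}$ incident to $p$. Since $p$ is interior to $\pts$, this $\splxt$ is not a boundary simplex of $\delP$, so the second part of \Lemref{lem:maximal.splx} supplies two distinct Delaunay $m$-simplices with $\splxt$ as a face. Each of these contains $p$ and hence lies in $\str{p}$, so $\splxt$ is actually an interior $(m-1)$-face of the pure $m$-complex $\str{p}$, contradicting $\splxt \in \bd{\str{p}}$.

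Clause~\ref{tri:no.conflicts} is the main obstacle. Given $\splxt \in \delP$ and $\splxs \in \str{p}$ with $\intr{\splxt} \cap \splxs \neq \emptyset$, I will first dispose of degenerate $\splxt$ by a short barycentric argument: a nontrivial affine dependence among the vertices of $\splxt$, added to any alleged interior representation and scaled until one barycentric coefficient vanishes, places the point on a proper face of $\splxt$; hence $\intr{\splxt} = \emptyset$ when $\splxt$ is degenerate, and the implication is vacuous. So I may assume $\splxt$ is non-degenerate, and by purity of $\str{p}$ I may also enlarge $\splxs$ to an $m$-coface still lying in $\str{p}$. Pick $y \in \intr{\splxt} \cap \splxs$ and let $\tau' \leq \splxs$ be the unique face with $y \in \intr{\tau'}$. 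By \Lemref{lem:maximal.splx}, $\splxt$ is a face of some non-degenerate Delaunay $m$-simplex $\sigma'$. If $\sigma' = \splxs$ we are done; otherwise I apply the separating-hyperplane argument from the proof of \Lemref{lem:embed.del.star} to the two non-degenerate, empty-ball $m$-simplices $\splxs$ and $\sigma'$, obtaining that $\splxs \cap \sigma' = \upsilon$ is a common face of both and that $y \in \upsilon$. Since $\upsilon$ is non-degenerate, the uniqueness of the face of $\upsilon$ carrying $y$ in its interior forces $\tau' = \splxt \leq \splxs$, so $\splxt \in \str{p}$. The hard part is concentrated in this last case analysis: it depends both on the availability of the non-degenerate coface $\sigma'$ from \Lemref{lem:maximal.splx} and on the non-degeneracy of $\splxs$ (which itself is supplied by the protection hypothesis via \Lemref{lem:embed.del.star}), and if either failed the separating-hyperplane step would collapse.
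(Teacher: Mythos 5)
Your proof is correct and, for Conditions 1--3 of \Defref{def:local.triangulation}, coincides with the paper's: the same appeals to \Lemref{lem:embed.del.star}, \Lemref{lem:bdry.complex}, \Lemref{lem:pure.bdry.complex} and the second part of \Lemref{lem:maximal.splx}. Your explicit remarks that an interior point of $\pts$ forces $\affhull{\pts}=\rem$, and that a degenerate $\splxt$ has empty interior (so Condition 4 is vacuous for it), are details the paper leaves implicit. For Condition 4 you take a genuinely different micro-route. The paper compares $\splxs$ with $\splxt$ directly: with $B_1$ the protected Delaunay ball of $\splxs$ and $B_2$ a Delaunay ball of $\splxt$, the flat $H=\affhull{\bdry{B_1}\cap\bdry{B_2}}$ separates the two simplices, so $x\in\intr{\splxt}\cap\splxs$ forces all vertices of $\splxt$ into $H\cap\bdry{B_2}\subset\bdry{B_1}$, whence protection of $B_1$ makes them vertices of $\splxs$. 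You instead lift $\splxt$ to a non-degenerate $m$-coface $\sigma'$ and deduce $\splxt\leq\splxs$ from the common face $\splxs\cap\sigma'$ together with uniqueness of the carrier face of $y$. Both work; the paper's is shorter because it never needs the coface. One point you should make explicit: in your separating-hyperplane step only $\splxs$ is known to be protected ($\sigma'$ need not lie in $\str{p}$), and it is exactly that one-sided protection --- forcing every point of $\pts$ on $\bdry{B_1}\cap\bdry{B_2}$, in particular every vertex of $\sigma'\cap H$, to be a vertex of $\splxs$ --- that makes $\splxs\cap\sigma'$ a common face; without it, two simplices spanned by cospherical points in $H$ could overlap improperly. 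Your closing sentence attributes the role of protection only to the non-degeneracy of $\splxs$, which understates where it is actually used.
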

\begin{proof}
  By \Lemref{lem:embed.del.star}, $\str{p}$ is a pure $m$-complex, and
  it is embedded. It follows then from \Lemref{lem:pure.bdry.complex},
  that the boundary complex $\bd{\str{p}}$ is a pure
  $(m-1)$-complex. Thus $p$ cannot belong to $\bd{\str{p}}$. Indeed,
  it follows from \Lemref{lem:maximal.splx} that any $(m-1)$-simplex
  $\splxs \in \str{p}$ is the face of at least two $m$-simplices in
  $\delP$, and if $p \in \splxs$, then both of these $m$-simplices
  belong to $\str{p}$, and are embedded, with intersection
  $\splxs$. Thus $p$ cannot belong to an $(m-1)$-simplex in
  $\bd{\str{p}}$, and therefore $p \in \intr{\carrier{\str{p}}}$.

  It remains to verify Condition~\ref{tri:no.conflicts} of
  \Defref{def:local.triangulation}. The argument is similar to the
  proof of \Lemref{lem:embed.del.star}:
  Suppose $x \in (\intr{\splxt}) \cap \splxs$ for $\splxs \in
  \str{p}$. We may assume that $\splxs$ is an $m$-simplex. Then
  consider the Delaunay balls $B_1$ for $\splxs$ and $B_2$ for
  $\splxt$. If $B_1 = B_2$, then, since $\splxs$ is protected,
  $\splxt$ must be a face of $\splxs$, and so belong to
  $\str{p}$. Assume then that $B_1 \neq B_2$, and let $H$ be the
  $(m-1)$-flat defined by $\affhull{\bdry{B_1} \cap \bdry{B_2}}$.
  Since $B_1$ is empty, $x \in \intr{\splxt}$ cannot lie in the open
  half-space defined by $H$ and containing $\splxs$. Since $x \in
  \splxs$ also, it must lie in $H$, and therefore all vertices of
  $\splxt$ lie in $H \cap \bdry{B_2} = H \cap \bdry{B_1}$, and so
  $\splxt$ is a face of $\splxs$.
\end{proof}

\subsubsection{Safe interior simplices}

We wish to consider the properties of Delaunay triangulations in
regions which are comfortably in the interior of $\convhull{\pts}$,
and avoid the complications that arise as we approach the boundary of
the point set. We introduce some terminology to facilitate this.

If none of the vertices of $\splxs$ lie on $\bdry{\convhull{P}}$, then
it is an \defn{interior simplex}.  We wish to identify a subcomplex of
the interior simplices of $\delP$ consisting of those simplices whose
neighbour simplices are also all interior simplices with small
circumradius. An interior simplex near the boundary of
$\convhull{\pts}$ does not necessarily have its circumradius
constrained by the sampling radius. However, we have the following:
\begin{lem}
  \label{lem:interior.splx}
  If $\pts$ is an $\samconst$-sample set, and $\splxs \in \delP$ has a
  vertex $p$ such that $\distEm{p}{\bdry{\convhull{\pts}}} \geq
  2\samconst$, then $\circrad{\splxs} < \samconst$ and $\splxs$ is an
  interior simplex.
\end{lem}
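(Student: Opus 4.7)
The plan is to fix a Delaunay ball $B = \ballEm{c}{r}$ for $\splxs$ (whose boundary then contains $p$), prove the stronger statement $r < \samconst$ by contradiction, and derive both conclusions from this. The circumradius bound follows immediately because $\circrad{\splxs} \leq r$ for any circumscribing ball, and the interior simplex claim follows by noting that any vertex $v$ of $\splxs$ satisfies $\norm{v - p} \leq 2\circrad{\splxs} < 2\samconst \leq \distEm{p}{\bdry{\convhull{\pts}}}$, so $v$ cannot lie on $\bdry{\convhull{\pts}}$.

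The heart of the argument is ruling out $r \geq \samconst$. Assuming this, I would construct a smaller empty ball using the depth of $p$. Concretely, let $c' = p + \samconst (c-p)/r$, the point on the segment $\seg{p}{c}$ at distance exactly $\samconst$ from $p$, and set $B' = \ballEm{c'}{\samconst}$. A one-line triangle-inequality check shows $B' \subset B$ (for $x \in B'$, $\norm{x-c} \leq \norm{x - c'} + \norm{c' - c} < \samconst + (r - \samconst) = r$), so $B'$ is empty.

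Next I would show $c' \in D_\samconst(\pts)$. Since $\norm{c' - p} = \samconst \leq \distEm{p}{\bdry{\convhull{\pts}}}$, the segment $\seg{p}{c'}$ lies in $\convhull{\pts}$, so $c' \in \convhull{\pts}$. Another triangle inequality for point-to-set distance gives
\begin{equation*}
  \distEm{c'}{\bdry{\convhull{\pts}}} \geq \distEm{p}{\bdry{\convhull{\pts}}} - \norm{c' - p} \geq 2\samconst - \samconst = \samconst,
\end{equation*}
so $c' \in D_\samconst(\pts)$. The sampling hypothesis then supplies some $q \in \pts$ with $\norm{c' - q} < \samconst$, i.e., $q \in B' \subset B$. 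Since $\norm{c' - p} = \samconst$, we have $q \neq p$, so $q$ is a point of $\pts$ strictly inside the empty ball $B$, a contradiction.

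The main obstacle is really a bookkeeping one: making sure the constructed $c'$ both lies in $\convhull{\pts}$ and is far enough from $\bdry{\convhull{\pts}}$ to invoke the sampling property. The factor $2$ in the hypothesis $\distEm{p}{\bdry{\convhull{\pts}}} \geq 2\samconst$ is exactly what makes both of these work simultaneously: one $\samconst$ is used to move from $p$ to $c'$, and the other is used to place $c'$ inside $D_\samconst(\pts)$.
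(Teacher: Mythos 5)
Your proof is correct and takes essentially the same route as the paper's: both assume $r \geq \samconst$ for a Delaunay ball $\ballEm{c}{r}$, pass to the point at distance $\samconst$ from $p$ along $\seg{p}{c}$, and apply the sampling hypothesis there to contradict the assumed depth of $p$, then deduce the interior-simplex claim from $\circrad{\splxs} < \samconst$. The only cosmetic difference is that you package the contradiction as a sample point of $\pts$ landing strictly inside the empty ball (via the $D_\samconst(\pts)$ formulation of the sampling condition), whereas the paper concludes directly that $\distEm{p}{\bdry{\convhull{\pts}}} < 2\samconst$.
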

\begin{proof}
  Let $\ballEm{c}{r}$ be a Delaunay ball for $\splxs$. We will show $r
  < \samconst$. Suppose to the contrary. Let $x$ be the point on
  $\seg{c}{p}$ such that $\distEm{p}{x} = \samconst$. Then $p$ is the
  closest point in $\pts$ to $x$, and so the sampling criteria imply
  that $\distEm{x}{\bdry{\convhull{\pts}}} < \samconst$. But then
  $\distEm{p}{\bdry{\convhull{\pts}}} \leq \distEm{p}{x} +
  \distEm{x}{\bdry{\convhull{\pts}}} < 2\samconst$, contradicting the
  hypothesis on $p$.

  Thus $r < \samconst$, and it follows that $\splxs$ is an interior
  simplex because if $q \in \splxs$, then $\distEm{p}{q} \leq 2r <
  \distEm{p}{\bdry{\convhull{\pts}}}$. 
\end{proof}

This suggests the following:
\begin{de}[Deep interior points]
  Suppose $\pts \subset \rem$ is an $\samconst$-sample set.  The
  subset $\dipts \subset \pts$ consisting of all $p \in \pts$ with
  $\distEm{p}{\bdry{\convhull{\pts}}} \geq 4\samconst$ is the set of
  \defn{deep interior points}.
\end{de}
By \Lemref{lem:interior.splx}, all the simplices that include a deep
interior point, as well as all the neighbours of such simplices, will
have a small circumradius. For technical reasons it is inconvenient to
demand that \emph{all} the Delaunay $m$-simplices be
$\delta$-protected. We focus instead on a subset defined with respect
to a set of deep interior points:
\begin{figure}[ht]
  \begin{center}
    \includegraphics[width=.6\columnwidth]{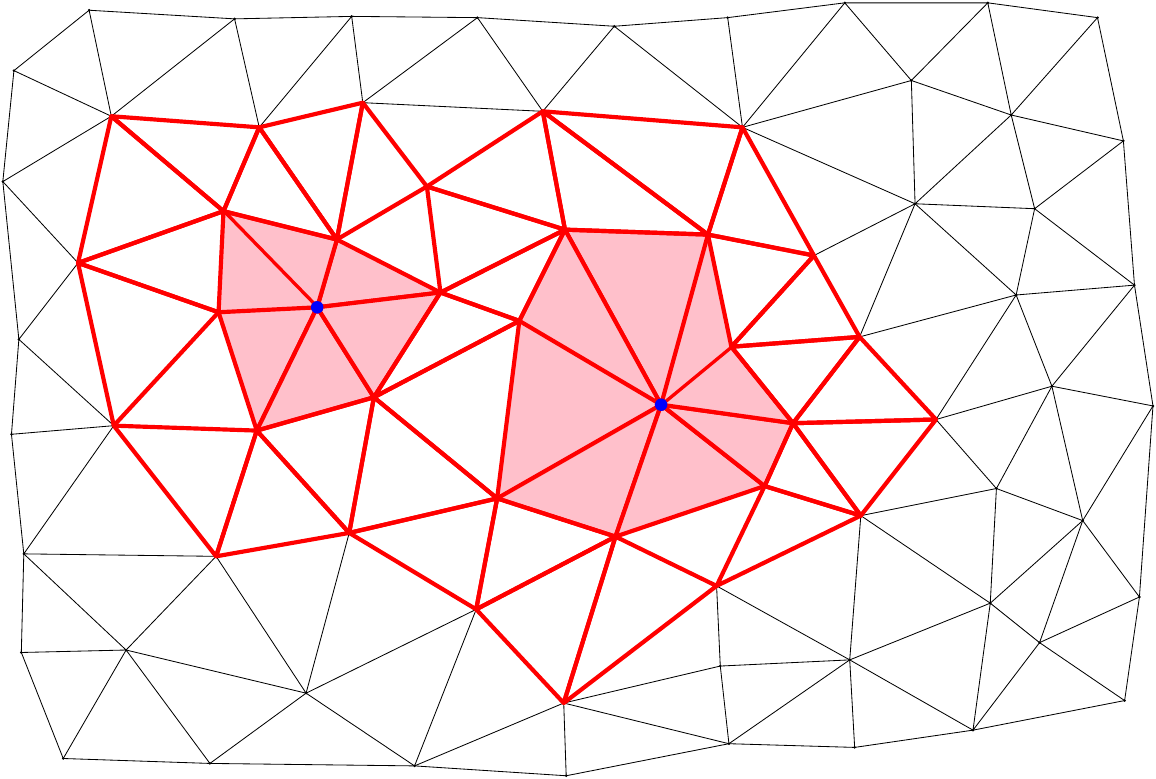} 
  \end{center}
  \caption{If $\pts$ is $\delta$-generic for $\sdipts$, then the safe
    interior simplices are the simplices in $\str{\sdipts}$. Here
    $\sdipts$ consists of the two large vertices (blue). They must be
    at least $4\samconst$ from $\bdry{\convhull{\pts}}$ (which is not
    depicted in the figure). The safe interior simplices are
    shaded. All the simplices in $\str{\str{\sdipts}}$ are
    $\delta$-protected. These simplices have bold outlines (red), but
    are not necessarily shaded. }
  \label{fig:safe.interior}
\end{figure}
\begin{de}[$\delta$-generic for $\sdipts$]
  \label{def:locally.delta.generic}
  The set $\pts \subset \rem$ is \defn{$\delta$-generic for $\sdipts$}
  if $\sdipts \subseteq \dipts$ and all the $m$-simplices in
  $\str{\str{\sdipts; \delP}}$ are $\delta$-protected.  The \defn{safe
    interior simplices} are the simplices in $\str{\sdipts ;\delP}$.
\end{de}
Thus the safe interior simplices are determined by our choice of
$\sdipts \subseteq \dipts$, and our protection requirements ensure
that all the $m$-simplices that share a face with a safe interior
simplex are $\delta$-protected and have a small circumradius. A
schematic example is depicted in \Figref{fig:safe.interior}.

\subsection{Thickness from protection}
\label{sec:protect.thick}

Our goal here is to demonstrate that the safe interior simplices on a
$\delta$-generic point set are $\thickbnd$-thick. If $\delta =
\protconst \samconst$, for some constant $\protconst \leq 1$, then we
obtain a constant $\thickbnd$ which depends only on $\protconst$.
The key observation is that together with \Lemref{lem:separate},
protection imposes constraints on all the Delaunay simplices;
they cannot be too close to being degenerate.  In the particular case
that $j=0$, \Lemref{lem:separate} immediately implies that the
vertices of the safe interior simplices are $\delta$-separated:
\begin{lem}[Separation from protection]
  \label{lem:delta.sparse}
  If $\pts$ is $\delta$-generic for $\sdipts$, then
  $\shortedge{\splxs} > \delta$ for any safe interior simplex
  $\splxs$. 
\end{lem}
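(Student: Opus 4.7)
The plan is to show that any two distinct vertices $p, q$ of a safe interior simplex $\splxs$ satisfy $\distEm{p}{q} > \delta$, which immediately yields the bound on the shortest edge. The idea is to use \Lemref{lem:separate} to produce a $\delta$-protected Delaunay $m$-simplex containing $p$ but not $q$, and then read off the distance bound directly from the protected Delaunay ball.

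First I would check that all vertices of $\splxs$ are interior points (not on $\bdry{\convhull{\pts}}$). Because $\splxs \in \str{\sdipts;\delP}$, either $\splxs$ itself has a vertex $p_0 \in \sdipts$, or $\splxs$ is a face of some simplex $\splx{k}$ that does. In either case, $p_0$ is a deep interior point, so $\distEm{p_0}{\bdry{\convhull{\pts}}} \geq 4\samconst$, and \Lemref{lem:interior.splx} gives $\circrad{\splx{k}} < \samconst$. Every other vertex $v$ of $\splx{k}$ thus lies within $2\samconst$ of $p_0$, and hence strictly more than $2\samconst$ from $\bdry{\convhull{\pts}}$. In particular, both $p$ and $q$ are interior points.

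Now let $\splxt = \splxsp[{\splxs}]$ denote the face of $\splxs$ opposite $q$; this is a Delaunay simplex containing $p$ whose vertices are all interior, so it is not a boundary simplex. Applying \Lemref{lem:separate} to $\splxt$ and the vertex $q \in \pts \setminus \splxt$ produces a Delaunay $m$-simplex $\splx{m}$ that has $\splxt$ as a face but does not contain $q$ as a vertex. Since $\splxt \leq \splxs$ and $\splxt \leq \splx{m}$, the $m$-simplex $\splx{m}$ shares a face with $\splxs \in \str{\sdipts;\delP}$, and therefore $\splx{m} \in \str{\str{\sdipts;\delP}}$. By the hypothesis that $\pts$ is $\delta$-generic for $\sdipts$, $\splx{m}$ admits a $\delta$-protected Delaunay ball $B = \ballEm{c}{r}$.

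To conclude, note that $p \in \splx{m}$ implies $\distEm{p}{c} = r$, while $q \in \pts \setminus \splx{m}$ together with the $\delta$-protection condition gives $\distEm{q}{\bdry{B}} > \delta$; since $B$ is empty, $q$ lies outside $B$ and this last inequality reads $\distEm{q}{c} > r + \delta$. The triangle inequality now yields $\distEm{p}{q} \geq \distEm{q}{c} - \distEm{p}{c} > \delta$, as required. The only delicate point in this argument is the bookkeeping that places $\splx{m}$ in $\str{\str{\sdipts;\delP}}$ so that the protection hypothesis actually applies; once the face $\splxt$ is chosen as a bridge between $\splxs$ and $\splx{m}$, everything else follows directly from the definitions and from \Lemref{lem:separate}.
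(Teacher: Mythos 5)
Your argument is correct and follows essentially the same route as the paper, which obtains this lemma as the $j=0$ case of Lemma~\ref{lem:separate}: a protected Delaunay $m$-simplex containing one vertex but not the other forces the separation via its protected ball. Your only deviation is applying Lemma~\ref{lem:separate} to the facet of $\splxs$ opposite $q$ rather than to the single vertex $p$, which changes nothing of substance, and your bookkeeping (interiority of the vertices via Lemma~\ref{lem:interior.splx}, and membership of the resulting $m$-simplex in $\str{\str{\sdipts}}$ so that the protection hypothesis applies) is exactly what is needed.
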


\begin{figure}[th]
  \begin{center}
    \subfloat[$H$ separates $c$ and $q$]{ 
      \label{sfig:opposite.centres}
      \includegraphics[width=.3\columnwidth]{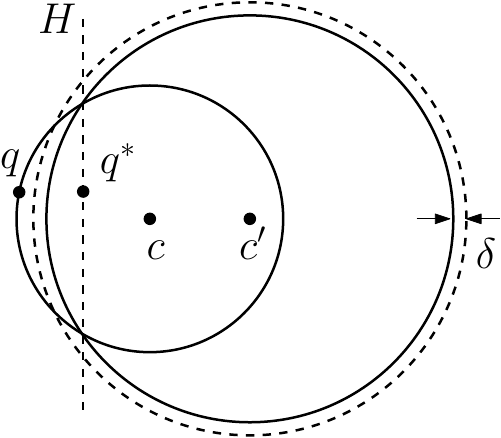} }
    \subfloat[$q$ and $c$ on same side of $H$]{
      \label{sfig:q.with.centre}
      \includegraphics[width=.6\columnwidth]{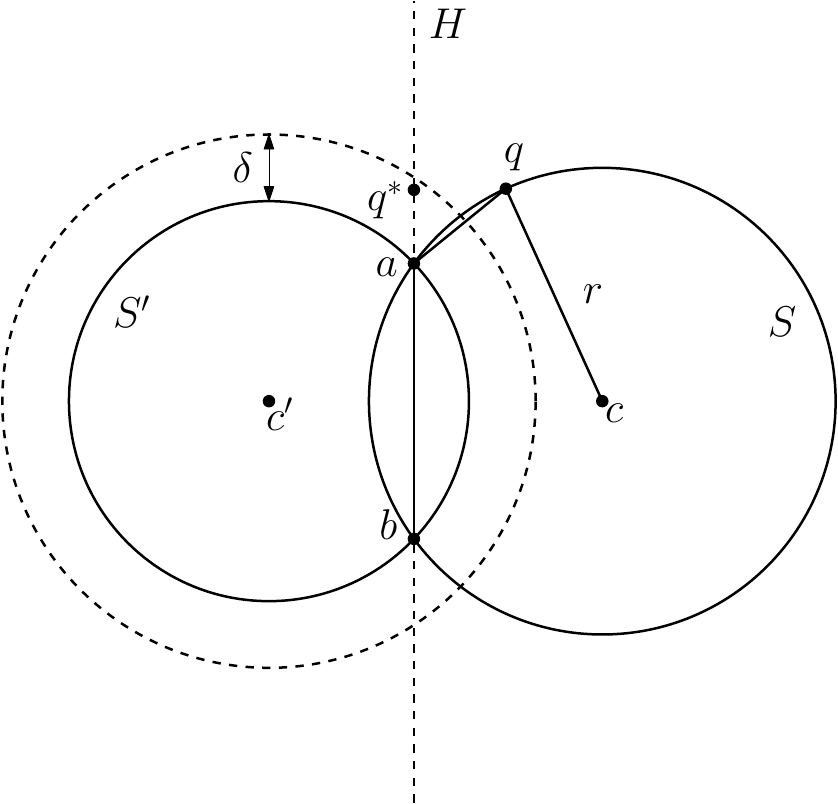} }
  \end{center}
  \caption{Diagram for \Lemref{lem:bound.height}. \protect\subref{sfig:opposite.centres} When $H$ separates $q$ and $c$ then $\distEd{q}{q^*} > \delta$.  \protect\subref{sfig:q.with.centre} Otherwise, a lower bound on the distance between $q$ and its projection $q^*$ on $H$ is obtained by an upper bound on the angle $\angle qab$.}
  \label{fig:bound.height}
\end{figure}
\begin{lem}
  \label{lem:bound.height}
  Suppose that $B = \ballEm{c}{r}$ is a Delaunay ball for $\splxs =
  \splxjoin{q}{\splxt}$ with $r<\samconst$ and that
  $\shortedge{\splxt} \geq \sparsity$ for some $\sparsity \leq
  \samconst$. Suppose also that $\splxt \leq \splxs'$ and that
  $\splxs$ is not a face of $\splxs'$. 

  If $B'$ is a $\delta$-protected Delaunay ball for $\splxs'$, and $H
  = \affhull{\bdry{B} \cap \bdry{B'}}$, then
  \begin{equation*}
    \distEm{q}{H} >
    \frac{\sqrt{3}\delta}{4\samconst}(\sparsity + \delta).
  \end{equation*}

  It follows that, if $\pts$ is $\delta$-generic for $\sdipts$, with
  sampling radius $\samconst$, and $\splxt$ is a safe interior
  simplex, then
  \begin{equation*}
    \splxalt{q}{\splxs} > \frac{\sqrt{3}\delta^2}{2\samconst}.
  \end{equation*}
\end{lem}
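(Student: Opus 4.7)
The plan is to introduce coordinates along the common perpendicular to $H$ through $c$ and $c'$, derive a single master inequality from the Delaunay ball and $\delta$-protection conditions, and then dispatch two geometric cases. Both $c$ and $c'$ lie on the line $\ell$ orthogonal to $H$ through the center of the sphere $\bdry{B}\cap\bdry{B'}$, because each is equidistant from every point of this sphere. Choose coordinates so that $H=\{x_1=0\}$ and $\ell$ is the $x_1$-axis, writing $c=(s_c,0,\dots,0)$ and $c'=(s_{c'},0,\dots,0)$; letting $\rho$ be the radius of $\bdry{B}\cap\bdry{B'}$, we have $s_c^2+\rho^2=r^2$ and $s_{c'}^2+\rho^2=r'^2$. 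Since $\splxs$ is not a face of $\splxs'$, the vertex $q$ lies in $\pts\setminus\splxs'$, and $\delta$-protection of $B'$ gives $\distEm{q}{c'}>r'+\delta$. Combining this with $\distEm{q}{c}=r$ and subtracting the squared distances yields the master inequality
\begin{equation*}
2\,q_1\,(s_c-s_{c'})\;>\;\delta\,(2r'+\delta),
\end{equation*}
where $q_1$ denotes the signed distance of $q$ from $H$.

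I then split on whether or not $H$ separates $q$ from $c$. In the first case (i.e.\ $q_1 s_c<0$), the master inequality forces $s_c$ and $s_{c'}$ to lie on the same side of $H$ (opposite to $q$), with $|s_{c'}|>|s_c|$. Then $|s_c-s_{c'}|=|s_{c'}|-|s_c|\le|s_{c'}|\le r'$, and the master inequality gives $|q_1|>\delta(2r'+\delta)/(2r')>\delta$. This already implies the claim, because $\sparsity,\delta\le\samconst$ makes $\sqrt{3}(\sparsity+\delta)/(4\samconst)\le\sqrt{3}/2<1$.

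In the second case, $q$ and $c$ lie on the same side of $H$, so after reflection I may assume $q_1,s_c>0$; then $s_c-s_{c'}\le s_c+|s_{c'}|$, and I bound $s_c\le\sqrt{\samconst^2-\rho^2}$ (from $r<\samconst$) and $|s_{c'}|=\sqrt{r'^2-\rho^2}$. Using $\rho\ge\circrad{\splxt}\ge\sparsity/2$ (which is where $\shortedge{\splxt}\ge\sparsity$ enters), the master inequality reduces the claim to the algebraic statement
\begin{equation*}
2\samconst(2r'+\delta)\;\ge\;\sqrt{3}\,(\sparsity+\delta)\bigl(\sqrt{\samconst^2-\rho^2}+\sqrt{r'^2-\rho^2}\bigr).
\end{equation*}
I expect verifying this inequality to be the main technical obstacle. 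It is tight, with equality attained at $\sparsity=\delta=\samconst$, $r'=\samconst$, and $\rho=\sparsity/2$; after squaring and simplification at this extremal choice of $\rho$, it collapses to a perfect-square identity of the form $(r'-\samconst)^2\ge 0$, and a convexity argument in $r'$ completes the case.

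For the second assertion, I observe that if $\splxt$ is safe interior then any Delaunay $m$-simplex $\splxs=\splxjoin{q}{\splxt}$ containing $\splxt$ lies in $\str{\str{\sdipts;\delP}}$, hence has $r<\samconst$ by \Lemref{lem:interior.splx} and is $\delta$-protected by the hypothesis of $\delta$-genericity for $\sdipts$. \Lemref{lem:separate} produces a second Delaunay $m$-simplex $\splxs'\supseteq\splxt$ with $q\notin\splxs'$, which is likewise $\delta$-protected. The first statement then applies with $\sparsity=\delta$, valid because \Lemref{lem:delta.sparse} gives $\shortedge{\splxt}>\delta$. Since $\affhull{\splxt}\subset H$, we have $\splxalt{q}{\splxs}=\distEm{q}{\affhull{\splxt}}\ge\distEm{q}{H}>\sqrt{3}\,\delta^2/(2\samconst)$.
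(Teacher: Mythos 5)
Your argument is correct in structure but takes a genuinely different route from the paper's. The paper works entirely inside the $2$-plane $Q$ spanned by $c$, $c'$ and $q$: with $a,b$ the intersection points of the circles $Q\cap\bdry{B}$ and $Q\cap\bdry{B'}$, it bounds the inscribed angle $\gamma=\angle qab$ above by $\arccos\frac{\delta}{2\samconst}+\arccos\frac{\sparsity}{2\samconst}$ via the chord relations $\cos(\angle qac)=\distEm{a}{q}/2r$ and $\cos(\angle cab)=\distEm{a}{b}/2r$, together with $\distEm{a}{q}>\delta$ (protection) and $\distEm{a}{b}\geq 2\circrad{\splxt}\geq\sparsity$; the stated bound then falls out of $\distEm{q}{q^*}=\distEm{a}{q}\sin\gamma$, the addition formula, and $\sin(\arccos t)\geq\sqrt{3}/2$ for $t\leq 1/2$. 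Your master inequality $2q_1(s_c-s_{c'})>\delta(2r'+\delta)$ is the same geometric content repackaged as a power-of-the-point computation (I checked the derivation; it is correct, as is your Case 1, which coincides with the paper's separating case). The trigonometric route has the advantage that the single elementary estimate $\sin(\arccos t)\geq\sqrt{3}/2$ absorbs all the parameters at once; your algebraic route instead defers the entire difficulty to the residual inequality in Case 2. Your treatment of the second assertion is essentially the paper's (modulo calling $\splxs=\splxjoin{q}{\splxt}$ an ``$m$-simplex''; it is a $(\dim\splxt+1)$-simplex, but the invocations of \Lemref{lem:interior.splx}, \Lemref{lem:separate} and \Lemref{lem:delta.sparse} are exactly right).

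The one point that needs more work is the final inequality of Case 2. It is true, and you have correctly located the tight configuration, but ``a convexity argument in $r'$ completes the case'' does not: the difference of the two sides is convex in $r'$ with its minimum at an interior critical point $r'_*$ that depends on $\sparsity$, $\delta$ and $\rho$, so convexity only reduces the problem to checking nonnegativity at $r'_*$ for \emph{all} admissible parameters, not just the extremal ones. Carrying this out (and taking $\rho=\sparsity/2$, the worst case) leaves the two-parameter inequality
\begin{equation*}
  \frac{\sparsity}{2}\sqrt{16\samconst^2-3(\sparsity+\delta)^2}
  +2\samconst\delta
  \;\geq\;
  \sqrt{3}\,(\sparsity+\delta)\sqrt{\samconst^2-\sparsity^2/4},
\end{equation*}
which does hold for $\sparsity,\delta\leq\samconst$, with equality exactly at $\sparsity=\delta=\samconst$ (where indeed $r'_*=\samconst$, matching your perfect-square observation) --- but this still has to be proved; crude term-by-term bounds do not suffice. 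So either complete this verification or adopt the paper's angle argument for the final step.
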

\begin{proof}
  Let $B' = \ballEm{c'}{r'}$ be the $\delta$-protected Delaunay ball
  for $\splxs'$. Our geometry will be performed in the plane, $Q$,
  defined by $c$, $c'$, and $q$. This plane is orthogonal to the
  $(m-1)$-flat $H$, and it follows that the distance $\distEm{q}{H}$
  is realized by a segment in the plane $Q$: the projection, $q^*$, of
  $q$ onto $H$ lies in $Q$, and $\distEm{q}{H} = \distEm{q}{q^*}$.

  If $H$ separates $q$ from $c$, then $\bdry{B'}$ separates $q$ from
  $q^*$, and $\distEm{q}{q^*} > \distEm{q}{\bdry{B'}} > \delta$, since
  $B'$ is $\delta$-protected
  (\Figref{fig:bound.height}\subref{sfig:opposite.centres}). The lemma
  then follows since $\sparsity$ and $\delta$ are each no larger than
  $\epsilon$.  Thus assume that $q$ and $c$ lie on the same side of
  $H$, as shown in
  \Figref{fig:bound.height}\subref{sfig:q.with.centre}. Let $S' = Q
  \cap \bdry{B'}$, and $S = Q \cap \bdry{B}$, and let $a$ and $b$ be
  the points of intersection $S' \cap S$. Thus $H \cap Q$ is the line
  through $a$ and $b$.

  We will bound $\distEm{q}{q^*}$ by finding an upper bound on the
  angle $\gamma = \angle qab$. This is the same as the standard
  calculation for upper-bounding the angles in a triangle with bounded
  circumradius to shortest edge ratio. Without loss of generality, we
  may assume that $\gamma \geq \angle qba$, and we will assume that
  $\gamma \geq \pi/2$ since otherwise $q* \in B'$ and thus
  $\distEm{q}{q*} > \delta$ and the lemma is again trivially
  satisified.

  Since $\distEm{a}{q} > \delta$, we have $\distEm{q}{q^*} =
  \distEm{a}{q}\sin \gamma > \delta \sin \gamma$. Also note that
  $\dist{a}{b} \geq 2\circrad{\splxt} \geq \shortedge{\splxt} \geq
  \sparsity$.  Let $\alpha = \angle qac$. Then $\cos \alpha =
  \frac{\distEm{a}{q}}{2r} \geq \frac{\delta}{2\samconst}$, which
  means that $\alpha
  \leq \arccos \frac{\delta}{2\samconst} \leq \frac{\pi}{2}$.  Similarly,
  with $\beta = \angle cab$, we have $\beta \leq \arccos
  \frac{\sparsity}{2\samconst} \leq \frac{\pi}{2}$. Thus $\frac{\pi}{2}
  \leq \gamma = \alpha + \beta \leq \gamma'$, where
  \begin{equation*}
    \gamma' = \arccos \frac{\delta}{2\samconst} + \arccos
    \frac{\sparsity}{2\samconst}.
  \end{equation*}
  Since $\sin \gamma \geq \sin \gamma'$, when $\frac{\pi}{2} \leq
  \gamma \leq \gamma' \leq \pi$, we have
  \begin{equation*}
    \begin{split}
      \distEm{q}{q^*} &> \delta \sin \gamma  \geq \delta \sin \gamma'\\
      &= \delta \sin \left( \arccos
      \frac{\delta}{2\samconst} + \arccos
      \frac{\sparsity}{2\samconst} \right)\\ 
      &\geq \delta \left(\frac{\sparsity}{2\samconst}\sin \left(\arccos
      \frac{\delta}{2\samconst} \right)  +
      \frac{\delta}{2\samconst}\sin \left( \arccos
        \frac{\sparsity}{2\samconst} \right) \right)\\
      &\geq \frac{\sqrt{3}\delta}{4\samconst}(\sparsity +
      \delta), 
    \end{split}
  \end{equation*}
  where the last inequality follows from $\sparsity \leq \samconst$
  and $\delta \leq \samconst$.

  Since $\affhull{\splxt} \subset H$, it follows that
  $\splxalt{q}{\splxs} \geq \distEm{q}{H}$, and if $\pts$ is
  $\delta$-generic for $\sdipts$, then $\sparsity \geq \delta$, and
  \Lemref{lem:separate} ensures that there is a $\delta$-protected
  $\splxs'$ that contains $\splxt$ but not $q$.
\end{proof}

We thus obtain a bound on the thickness of the safe interior simplices
when $\pts$ is $\delta$-generic for $\sdipts$. Since
\Lemref{lem:bound.height} yields a lower bound of
$\frac{\sqrt{3}\delta^2}{2\samconst}$ on the altitudes of the safe
interior simplices, and since $\longedge{\splxs} \leq 2\samconst$, we
have that $\thickness{\splxs} \geq
\frac{\sqrt{3}\delta^2m}{4\samconst^2}$ for all safe interior
$\splxs$. If $\delta = \protconst \samconst$, we obtain a constant
thickness bound.
\begin{thm}[Thickness from protection]
  \label{thm:prot.thick}
  If $\pts \subset \rem$ is $\delta$-generic for $\sdipts$ with $\delta
  = \protconst \samconst$, where $\samconst$ is a sampling radius for
  $\pts$, then the safe interior simplices are $\thickbnd$-thick, with
  \begin{equation*}
    \thickbnd = \frac{\sqrt{3}\protconst^2}{4m}.
  \end{equation*}
\end{thm}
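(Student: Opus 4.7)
The plan is to combine the altitude lower bound from \Lemref{lem:bound.height} with the diameter upper bound that follows from \Lemref{lem:interior.splx}, and then read the thickness directly from its definition. The sharp geometric step has already been carried out in \Lemref{lem:bound.height}; what remains is to verify, for every vertex of an arbitrary safe interior simplex, that its hypotheses hold.

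I would first dispose of the trivial case $\dim \splxs = 0$, where $\thickness{\splxs} = 1 \geq \thickbnd$. Assume then that $\splxs$ is a safe interior $j$-simplex with $j \geq 1$, and fix an arbitrary vertex $q \in \splxs$. Set $\splxt = \splxsp$; since subcomplexes are closed under faces, $\splxt \in \str{\sdipts;\delP}$ is itself a safe interior simplex. Next I would assemble the three ingredients \Lemref{lem:bound.height} requires. First, a safe interior simplex either contains a deep interior vertex or is a face of one that does, so \Lemref{lem:interior.splx} supplies a Delaunay ball for $\splxs$ of radius $r < \samconst$, which also gives $\longedge{\splxs} \leq 2\circrad{\splxs} < 2\samconst$. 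Second, \Lemref{lem:delta.sparse} yields $\shortedge{\splxt} > \delta$, playing the role of $\sparsity$ (and $\delta \leq \samconst$ by convention). Third, because safe interior simplices are interior, \Lemref{lem:separate} produces a Delaunay $m$-simplex $\splxs'$ with $\splxt \leq \splxs'$ and $q \notin \splxs'$; since $\splxs'$ shares the face $\splxt$ with $\splxs \in \str{\sdipts;\delP}$, it lies in $\str{\str{\sdipts;\delP}}$, and hence is $\delta$-protected by the $\delta$-generic hypothesis.

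With these ingredients, \Lemref{lem:bound.height} delivers $\splxalt{q}{\splxs} > \frac{\sqrt{3}\delta^2}{2\samconst}$ for every vertex $q$ of $\splxs$. Combining this with $\longedge{\splxs} < 2\samconst$ in the definition of thickness and substituting $\delta = \protconst\samconst$ gives
\begin{equation*}
\thickness{\splxs} \;>\; \frac{\sqrt{3}\delta^2}{4\samconst^2} \;=\; \frac{\sqrt{3}\protconst^2}{4} \;=\; \thickbnd.
\end{equation*}

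There is no genuine technical obstacle at this final stage. The sharp angular argument is already packaged inside \Lemref{lem:bound.height}, and the present theorem is really bookkeeping to confirm that its hypotheses hold uniformly over every vertex of every safe interior simplex. The one subtle point, which alone requires real care, is that the auxiliary $\splxs'$ must lie in the iterated star $\str{\str{\sdipts;\delP}}$ rather than in $\str{\sdipts;\delP}$ itself; this is precisely why \Defref{def:locally.delta.generic} is formulated with the double star.
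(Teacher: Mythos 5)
Your argument is correct and follows essentially the same route as the paper, which packages the hypothesis-checking you spell out (via Lemmas~\ref{lem:separate}, \ref{lem:delta.sparse} and \ref{lem:interior.splx}) into the second assertion of Lemma~\ref{lem:bound.height} and then concludes, exactly as you do, by combining the altitude bound with $\longedge{\splxs} \leq 2\samconst$. The only remark worth making is that both you and the paper silently drop the factor $1/j$ from the definition $\thickness{\splxs} = \min_{p}\splxalt{p}{\splxs}/(j\longedge{\splxs})$, so the computation as written actually yields $\sqrt{3}\protconst^{2}/(4j)$ rather than $\sqrt{3}\protconst^{2}/4$; this is a shared cosmetic slip, not a gap in your reasoning relative to the paper's.
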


%

\section{Delaunay stability}
\label{sec:stability}

We find upper bounds on the magnitude of a perturbation for which a protected Delaunay ball remains a Delaunay ball. We consider both perturbations of the sample points in Euclidean space, and perturbations of the metric itself.  The primary technical challenge is bounding the effect of a perturbation on the circumcentre of an $m$-simplex. We then find the relationship between the perturbation parameter $\pertconst$ and the protection parameter $\delta$ which ensures that a $\delta$-protected Delaunay simplex will remain a Delaunay simplex.

\subsection{Perturbations and circumcentres}

As expected, a bound on the displacement of the circumcentre requires
a bound on the thickness of the simplex.

\subsubsection{Almost circumcentres}

If $\splxs$ is thick, a point whose distances to the vertices of $\splxs$ are all almost the same, will lie close to $\normhull{\splxs}$. 
\begin{lem}
  \label{lem:power.close.centres}
  If $\splxs = \simplex{p_0,...,p_k} \subset \rem$ is a
  non-degenerate $k$-simplex, and $x \in \rem$ is such that
  \begin{equation}
    \label{eq:sqr.nr.ctr}
    \abs{\norm{p_i-x}^2 - \norm{p_j-x}^2} \leq \trelconst^2 \quad
    \text{ for all } i,j \in [0,...,k],
  \end{equation}
  then there is a point $c \in \normhull{\splxs}$ such that $\norm{c-x} \leq
  \localconst$, where
  \begin{equation*}
    \localconst =
    \frac{\trelconst^2}{2\thickness{\splxs}\longedge{\splxs}}. 
  \end{equation*}
  In particular, if $\splxs$ is an $m$-simplex then $x \in
  \cballEm{\circcentre{\splxs}}{\localconst}$.

  If the inequalities in Equations~\eqref{eq:sqr.nr.ctr} are made
  strict, then the conclusions may also be stated with strict
  inequalities.
\end{lem}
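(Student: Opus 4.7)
The plan is to reduce the statement to a linear-algebraic perturbation bound on a system of equations whose solution set is exactly $\normhull{\splxs}$, and then apply \Lemref{lem:bound.skP} to control the smallest singular value of the coefficient matrix.

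First, I would characterize $\normhull{\splxs}$ algebraically. A point $c \in \rem$ lies in $\normhull{\splxs}$ if and only if $\norm{p_i - c}^2 = \norm{p_0 - c}^2$ for every $i = 1, \ldots, k$. Expanding the squares, this is the linear system
\begin{equation*}
   \transp{(p_i - p_0)} c \;=\; \tfrac{1}{2}\bigl(\norm{p_i}^2 - \norm{p_0}^2\bigr), \qquad i = 1,\ldots,k.
\end{equation*}
With $P$ the $m \times k$ matrix whose $i^{\text{th}}$ column is $p_i - p_0$, and $b \in \reel^k$ the right-hand side above, this reads $\transp{P} c = b$, and $\normhull{\splxs} = \{c \in \rem : \transp{P} c = b\}$. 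Since $\splxs$ is non-degenerate, $P$ has rank $k$.

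Next, I would compute $\transp{P} x$. Using the identity $\norm{p_i - x}^2 = \norm{p_i}^2 - 2\transp{p_i}x + \norm{x}^2$, we find
\begin{equation*}
  \transp{(p_i - p_0)} x \;=\; \tfrac{1}{2}\bigl(\norm{p_i}^2 - \norm{p_0}^2\bigr) - \tfrac{1}{2}\bigl(\norm{p_i - x}^2 - \norm{p_0 - x}^2\bigr),
\end{equation*}
so $\transp{P} x = b - \epsilon$, where $\epsilon \in \reel^k$ has entries $\epsilon_i = \tfrac{1}{2}(\norm{p_i-x}^2 - \norm{p_0-x}^2)$. By hypothesis $|\epsilon_i| \leq \trelconst^2 / 2$, hence $\norm{\epsilon} \leq \sqrt{k}\,\trelconst^2/2$.

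Finally, I take $c = x + y^\star$ where $y^\star$ is the minimum-norm solution of $\transp{P} y = \epsilon$, namely $y^\star = \pseudoinv{(\transp{P})}\epsilon$. Then $\transp{P} c = \transp{P} x + \epsilon = b$, so $c \in \normhull{\splxs}$, and $\norm{c - x} = \norm{y^\star} \leq \sing{1}{\pseudoinv{(\transp{P})}}\,\norm{\epsilon}$. Since $\pseudoinv{(\transp{P})} = \transp{\pseudoinv{P}}$ and transposition preserves singular values, \Lemref{lem:svd.pseudo.inv} yields $\sing{1}{\pseudoinv{(\transp{P})}} = \sing{k}{P}^{-1}$, and \Lemref{lem:bound.skP} gives $\sing{k}{P} \geq \sqrt{k}\,\thickness{\splxs}\longedge{\splxs}$. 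Combining,
\begin{equation*}
   \norm{c - x} \;\leq\; \frac{\sqrt{k}\,\trelconst^2/2}{\sqrt{k}\,\thickness{\splxs}\longedge{\splxs}} \;=\; \frac{\trelconst^2}{2\,\thickness{\splxs}\longedge{\splxs}} \;=\; \localconst.
\end{equation*}
If $\splxs$ is an $m$-simplex, then $P$ is square and invertible and $\normhull{\splxs}$ is the single point $\circcentre{\splxs}$, so $c = \circcentre{\splxs}$. The strict-inequality version is immediate from the same chain of estimates since each step preserves strict inequality. The only ``obstacle'' here is recognizing the correct matrix formulation: once $\normhull{\splxs}$ is written as $\{c : \transp{P} c = b\}$, the rest is a pseudo-inverse bound plus \Lemref{lem:bound.skP}.
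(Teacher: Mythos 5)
Your proof is correct and follows essentially the same route as the paper: write $\normhull{\splxs}$ as the solution set of $\transp{P}c=b$, observe $\transp{P}x=b-\epsilon$ with $\norm{\epsilon}\leq\sqrt{k}\,\trelconst^2/2$, and convert this to a distance bound via \Lemref{lem:svd.pseudo.inv} and \Lemref{lem:bound.skP}. The only (cosmetic) difference is that you treat all $k\leq m$ uniformly by adding the minimum-norm correction $\transp{\pseudoinv{P}}\epsilon$, whereas the paper first does the square case $k=m$ and then reduces $k<m$ to it by projecting $x$ orthogonally onto $\affhull{\splxs}$.
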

\begin{proof}
  First suppose $k=m$. The circumcentre of $\splxs$
  is given by the linear equations $\norm{\circcentre{\splxs} -
    p_i}^2 = \norm{\circcentre{\splxs} - p_0}^2$, or
  \begin{equation*}
    \transp{(p_i - p_0)}\circcentre{\splxs} = \frac{1}{2}(\norm{p_i}^2
    - \norm{p_0}^2).
  \end{equation*}
  Letting $b$ be the vector whose $i^{\text{th}}$ component is defined
  by the right hand side of the equation, and letting $P$ be the $m
  \times m$ matrix, whose $i^\text{th}$ column is $(p_{i} - p_{0})$,
  we write the equations in matrix form:
  \begin{equation}
    \label{eq:cc}
    \transp{P}\circcentre{\splxs} = b.
  \end{equation}

  Without loss of generality, assume $p_0$ is the vertex that
  minimizes the distance to $x$. Then, defining $x_a$ to be the vector
  whose $i^{\text{th}}$ component is $\frac{1}{2}(\norm{p_i - x}^2 -
  \norm{p_0 - x}^2)$, we have $\norm{p_i-x}^2 = \norm{p_0-x}^2 +
  2(x_a)_i$, and we find
  \begin{equation}
    \label{eq:almost.cc}
    \transp{P}x = b - x_a.
  \end{equation}

  From Equations \eqref{eq:cc} and \eqref{eq:almost.cc} we have
  \begin{equation*}
    \norm{\circcentre{\splxs} - x} = \norm{\inv{(\transp{P})}x_a} \leq
    \norm{\inv{P}}\norm{x_a}. 
  \end{equation*}
  From \Eqnref{eq:sqr.nr.ctr}, it follows that $\norm{x_a} \leq
  \frac{\sqrt{m}\trelconst^2}{2}$, and from Lemmas
  \ref{lem:svd.pseudo.inv} and \ref{lem:bound.skP} we have
  $\norm{\inv{P}} \leq
  (\sqrt{m}\thickness{\splxs}\longedge{\splxs})^{-1}$, and thus the
  result holds for full dimensional simplices.

  If $\splxs$ is a $k$-simplex with $k \leq m$, then we consider
  $\hat{x}$, the orthogonal projection of $x$ into
  $\affhull{\splxs}$. We observe that $\hat{x}$ also must satisfy
  \Eqnref{eq:sqr.nr.ctr}, and we conclude from the above argument that
  $\norm{\circcentre{\splxs} - \hat{x}} \leq \localconst$. Then
  letting $c = \circcentre{\splxs} + (x - \hat{x})$ we have that $c
  \in \normhull{\splxs}$ and $\norm{c-x} \leq \localconst$.
\end{proof}

It will be convenient to have a name for a point that is almost
equidistant to the vertices of a simplex:
\begin{de}
  \label{def:rel.centre}
  A \defn{$\relconst$-centre} for a simplex $\splxs =
  \simplex{p_0,\ldots ,p_k} \subset \rem$ is a point $x$ that
  satisfies
  \begin{equation}
    \label{eq:relaxed.cc}
    \big| \norm{p_i - x} - \norm{p_j -x} \big| \leq \relconst \qquad \text{
      for all } i,j \leq k.
  \end{equation}
\end{de}
With a bound on the distance from $x$ to the vertices of $\splxs$,
\Lemref{lem:power.close.centres} can be transformed into a bound on
the distance from a $\relconst$-centre to the closest true centre in
$\normhull{\splxs}$:
\begin{lem}
  \label{lem:close.to.centres}
  If $\splxs = \simplex{p_0,\ldots ,p_k} \subset \rem$ is
  non-degenerate, and for some $\relconst > 0$ the point $x \in \rem$
  is a $\relconst$-centre such that
  \begin{equation*}
    \norm{p_i - x} < \tsamconst \qquad \text{ for all } i,j \leq k,
  \end{equation*}
  then there exists a $c \in \normhull{\splxs}$ such that $\norm{x-c}
  < \localconst$, where
  \begin{equation*}
    \localconst = 
    \frac{\tsamconst\relconst}{\thickness{\splxs}\longedge{\splxs}}. 
  \end{equation*}
  In particular, if $\splxs$ is an $m$-simplex, then $x \in
  \ballEm{\circcentre{\splxs}}{\localconst}$. 
\end{lem}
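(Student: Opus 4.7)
The plan is to reduce this statement to \Lemref{lem:power.close.centres} by passing from first-order distance differences to squared-distance differences via the elementary factorization $a^{2}-b^{2}=(a-b)(a+b)$.

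First, I would observe that for any two vertices $p_i, p_j$ of $\splxs$,
\begin{equation*}
\bigl|\norm{p_i-x}^{2}-\norm{p_j-x}^{2}\bigr|
= \bigl|\norm{p_i-x}-\norm{p_j-x}\bigr|\cdot\bigl(\norm{p_i-x}+\norm{p_j-x}\bigr).
\end{equation*}
The hypothesis that $x$ is a $\relconst$-centre bounds the first factor by $\relconst$, while the hypothesis $\norm{p_i-x}<\tsamconst$ bounds the second factor by $2\tsamconst$. Hence the squared-distance differences are strictly bounded by $2\tsamconst\relconst$.

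Next I would set $\trelconst^{2}=2\tsamconst\relconst$ and apply \Lemref{lem:power.close.centres} (in its strict-inequality form, since our bounds on $\norm{p_i-x}$ and on the differences of distances are strict). That lemma produces a point $c\in\normhull{\splxs}$ with
\begin{equation*}
\norm{c-x}
<\frac{\trelconst^{2}}{2\thickness{\splxs}\longedge{\splxs}}
=\frac{2\tsamconst\relconst}{2\thickness{\splxs}\longedge{\splxs}}
=\frac{\tsamconst\relconst}{\thickness{\splxs}\longedge{\splxs}}=\localconst,
\end{equation*}
which is exactly the desired bound. When $\splxs$ is an $m$-simplex, $\normhull{\splxs}=\{\circcentre{\splxs}\}$, so the conclusion reduces to $x\in\ballEm{\circcentre{\splxs}}{\localconst}$.

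There is no real obstacle here; the only thing worth double-checking is that the strict-inequality version of \Lemref{lem:power.close.centres} is available, which it is, as noted in that lemma's statement. The whole argument is a one-line identity followed by invocation of the previous lemma, so the write-up should be quite short.
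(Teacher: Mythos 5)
Your proposal is correct and follows essentially the same route as the paper's proof: factor the squared-distance difference as $(a-b)(a+b)$, bound it by $2\tsamconst\relconst$, and invoke \Lemref{lem:power.close.centres} with $\trelconst^2 = 2\tsamconst\relconst$. The paper phrases this with $R=\max_i\norm{p_i-x}$ and $r=\min_i\norm{p_i-x}$ rather than pairwise, but the argument is identical.
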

\begin{proof}
  Let $R = \max_i \norm{p_i -x}$ and $r = \min_i \norm{p_i-x}$. Then
  \begin{equation*}
    R^2 - r^2 = (R+r)(R-r) < 2\tsamconst(R-r) \leq
    2\tsamconst\relconst,
  \end{equation*}
  and the result then follows from \Lemref{lem:power.close.centres}.
\end{proof}


%
%
%

\subsubsection{Circumcentres and metric perturbations}
\label{sec:metric.cc.pert}

We will show here that for an $\thickbnd$-thick $m$-simplex $\splxs$
in $\rem$, and a metric $\gdist$ that is close to $\gdistEm$, there
will be a metric circumcentre $c$ near $\circcentre{\splxs}$.
We require the metric $\gdist$ to be continuous in the topology
defined by $\gdistEm$. 
Henceforth, whenever we refer to a \defn{perturbation of the Euclidean
  metric}, this topological compatibility will always be assumed. 

\begin{figure}[ht]
  \begin{center}
    \includegraphics[width=.6\columnwidth]{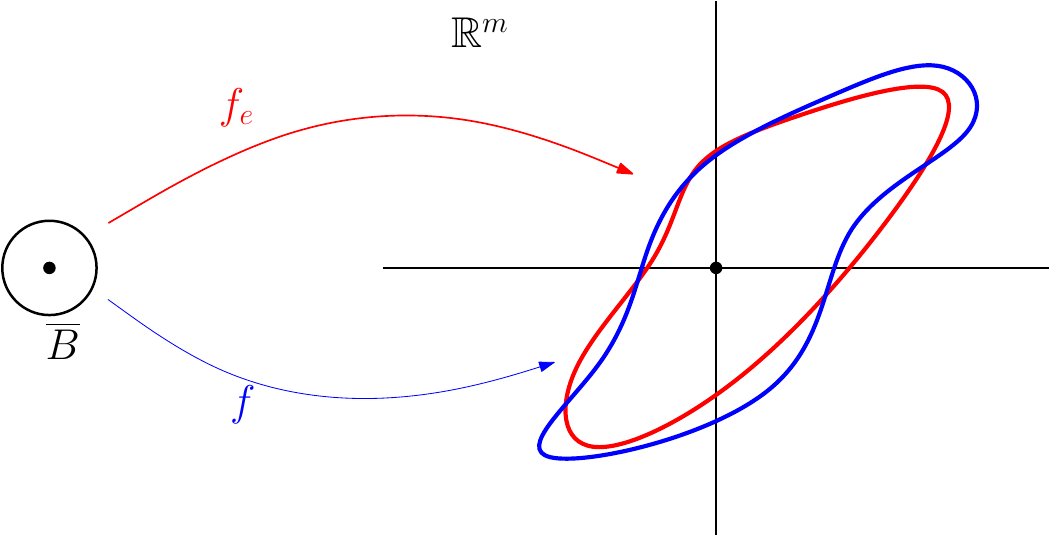} 
  \end{center}
  \caption{The maps $f_e$ and $f$ (described in the main text) map
    circumcentres to the origin. Since $f_e^{-1}(0)$ contains a unique
    point, and $f_e(\bdry{B})$ lies far from the origin, a
    consideration of the degree of the mappings, together with the
    fact that $f_e(\bdry{B})$ and $f(\bdry{B})$ are close, reveals
    that $f^{-1}(0)$ cannot be empty, and thus $B$ must contain a
    circumcentre of $\splxs^m$ with respect to the metric $\gdist$.}
  \label{fig:map.degree}
\end{figure}
The proof is a topological argument based on considering a mapping
into $\rem$ of a small ball around the circumcentre of $\splxs$. The
mapping is based on the metric and is such that circumcentres get
mapped to the origin. In the mapping associated to the Euclidean
metric, points that get mapped close to the origin are
$\relconst$-centres, and since the $\relconst$-centres are in the
interior of the ball, the boundary of the ball does not get mapped
close to the origin. A small perturbation of the metric yields a small
perturbation in the mapping, and so we can argue that there is a
homotopy between the mapping associated with the Euclidean metric and
the one associated to the perturbed metric, such that no point on the
boundary of the ball ever gets mapped to the origin. The situation is
depicted schematically in \Figref{fig:map.degree}. A consideration of
the degree of the mapping allows us to conclude that the ball must
contain a circumcentre for the perturbed metric.

We will demonstrate the following:
\begin{lem}[Circumcentres: metric perturbation]
  \label{lem:metric.cc.pert}
  Let $U \subset \rem$, and let $\gdist: U \times U \to \reel$ be a
  continuous metric with respect to the topology defined by $\gdistEm$,
  and such that for any $x,y \in U$ with $\distEm{x}{y} < 2\samconst$,
  we have $\abs{ \dist{x}{y} - \distEm{x}{y} } \leq \pertconst$, with
  \begin{equation*}
    \pertconst \leq \frac{\thickbnd \sparseconst
      \samconst}{8}. 
  \end{equation*}
  If $\splxs = \simplex{p_0,\ldots, p_m} \subset U$ is an
  $\thickbnd$-thick $m$-simplex with $\circrad{\splxs} < \samconst$,
  and $\shortedge{\splxs} \geq \sparseconst \samconst$, and such that
  $\distEm{p_i}{\bdry{U}} \geq 2\samconst$, then there is a point
  \begin{equation*}
    c \in B = \ballEm{\circcentre{\splxs}}{\localconst} \quad \text{
      with } \localconst = \frac{8 \pertconst}{\thickbnd \sparseconst},
  \end{equation*}
  and such that $\dist{c}{p_i}=\dist{c}{p_j}$ for all $p_i,p_j \in \splxs$.
\end{lem}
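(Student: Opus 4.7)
The plan is to carry out the topological degree argument sketched in \Figref{fig:map.degree}. Define two continuous maps
\begin{equation*}
  f_e, f : \cballEm{\circcentre{\splxs}}{\localconst} \to \rem,
\end{equation*}
whose $i^{\text{th}}$ coordinates ($1 \leq i \leq m$) are
\begin{equation*}
  (f_e(x))_i = \norm{p_i - x}^2 - \norm{p_0 - x}^2,
  \qquad (f(x))_i = \dist{p_i}{x}^2 - \dist{p_0}{x}^2.
\end{equation*}
A zero of $f_e$ is a Euclidean circumcentre, and since $\splxs$ is a non-degenerate $m$-simplex we have $f_e^{-1}(0) = \{\circcentre{\splxs}\}$. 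A zero of $f$ in $B$ is the desired $\gdist$-circumcentre, so it suffices to show $f^{-1}(0) \cap B \neq \emptyset$. As a preliminary, the hypothesis $\pertconst \leq \thickbnd \sparseconst \samconst / 8$ together with the definition of $\localconst$ gives $\localconst \leq \samconst$; combined with $\circrad{\splxs} < \samconst$, this yields $\norm{p_i - x} < 2\samconst$ for every $x \in B$ and every vertex $p_i$, so the metric comparison bound applies throughout $B$, and $B \subset U$ since each vertex lies at distance at least $2\samconst$ from $\bdry{U}$.

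Next I would establish the two estimates that drive the degree argument. The upper bound on $\norm{f - f_e}$ comes from $a^2 - b^2 = (a+b)(a-b)$: every coordinate of $f - f_e$ is the difference of two terms of the form $\dist{p_k}{x}^2 - \norm{p_k - x}^2$, each of which is at most $\pertconst$ times a bounded sum of distances of order $\samconst$, so $\norm{f - f_e}$ is of order $\pertconst \samconst$ on all of $B$. The lower bound on $\norm{f_e}$ on $\bdry{B}$ is the contrapositive of \Lemref{lem:power.close.centres}: for $x$ with $\norm{x - \circcentre{\splxs}} = \localconst$, there must exist $i,j$ with $\abs{\norm{p_i - x}^2 - \norm{p_j - x}^2} > \trelconst^2$ whenever $\trelconst^2 \leq 2 \thickbnd \longedge{\splxs} \localconst$. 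Using $\longedge{\splxs} \geq \shortedge{\splxs} \geq \sparseconst \samconst$ together with the definition of $\localconst$, this yields a lower bound on $\norm{f_e(x)}$ also of order $\pertconst \samconst$. The factor $8$ in the definition of $\localconst$ is tuned precisely so that this lower bound strictly exceeds the upper bound on $\norm{f - f_e}$.

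Finally, the straight-line homotopy $H_t(x) = (1-t) f_e(x) + t f(x)$ is continuous in $(t,x)$ because $\gdist$ is continuous in the Euclidean topology, and on $\bdry{B}$ one has $\norm{H_t(x) - f_e(x)} \leq \norm{f(x) - f_e(x)} < \norm{f_e(x)}$, so $H_t$ avoids $0$ on $\bdry{B}$ for every $t \in [0,1]$. Because $f_e$ is affine with linear part $-2 \transp{P}$, where $P$ is the matrix of edge vectors from \Lemref{lem:bound.skP}, and $P$ is invertible by the thickness hypothesis, $\deg(f_e, B, 0) = \pm 1$. Homotopy invariance of the Brouwer degree then gives $\deg(f, B, 0) \neq 0$, so $f$ has a zero in $B$, which is the claimed $\gdist$-circumcentre $c$. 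The main technical obstacle is the constant-bookkeeping in the middle step: both the upper bound on $\norm{f - f_e}$ and the lower bound on $\norm{f_e}$ on $\bdry{B}$ are of the same order $\pertconst \samconst$, and the seemingly arbitrary factor $8$ in the hypothesis $\pertconst \leq \thickbnd \sparseconst \samconst/8$ is exactly what is needed to make the strict comparison hold.
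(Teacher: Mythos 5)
Your strategy is the paper's: compare a Euclidean centre-finding map with its $\gdist$-analogue on $\close{B}$, show the straight-line homotopy avoids $0$ on $\bdry{B}$, and conclude via homotopy invariance of the degree together with $\deg(f_e,0,B)=\pm1$. But the one step you explicitly defer --- the constant bookkeeping showing that the lower bound for $f_e$ on $\bdry{B}$ strictly exceeds the upper bound for $f-f_e$ --- is precisely where your choice of \emph{squared}-distance maps breaks down. For $x\in\bdry{B}$, the contrapositive of \Lemref{lem:power.close.centres} yields at best
\begin{equation*}
  \infnorm{f_e(x)} \;\geq\; \tfrac{1}{2}\max_{i,j}\abs{\norm{p_i-x}^2-\norm{p_j-x}^2}
  \;\geq\; \thickness{\splxs}\longedge{\splxs}\,\localconst,
\end{equation*}
whose guaranteed value, using $\thickness{\splxs}\longedge{\splxs}\geq\thickbnd\sparseconst\samconst$ and $\localconst=8\pertconst/(\thickbnd\sparseconst)$, is exactly $8\pertconst\samconst$. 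On the other side, each coordinate of $f-f_e$ is a difference of two terms of the form $(\dist{p_k}{x}-\norm{p_k-x})(\dist{p_k}{x}+\norm{p_k-x})$, and since $\norm{p_k-x}$ can approach $\circrad{\splxs}+\localconst\approx 2\samconst$, the best available bound is
\begin{equation*}
  \infnorm{f(x)-f_e(x)} \;\leq\; 2\pertconst\,(4\samconst+\pertconst) \;=\; 8\pertconst\samconst+2\pertconst^2.
\end{equation*}
The two $\samconst$-scale factors do not cancel: the lower bound carries $\longedge{\splxs}$, which may be as small as $\sparseconst\samconst$, while the upper bound carries the diameter of $B\cup\splxs$, which is of order $4\samconst$. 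So the guaranteed lower bound does not strictly exceed the guaranteed upper bound, the homotopy is not shown to miss $0$ on $\bdry{B}$, and the factor $8$ is \emph{not} ``exactly what is needed'' for your maps.

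The paper sidesteps this by taking unsquared distance differences, $(f_e(x))_i=\distEm{x}{p_i}-\distEm{x}{p_0}$. Then $\infnorm{f-f_e}\leq 2\pertconst$ with no length-scale factor at all, while \Lemref{lem:close.to.ctr} (the $\relconst$-centre formulation of \Lemref{lem:close.to.centres}, with $\relconst=4\pertconst$) shows that no point of $\bdry{\close{B}}$ is a $\relconst$-centre, whence $\infnorm{f_e}>\relconst/2=2\pertconst$ there --- a clean, scale-free strict comparison. The price is that the unsquared $f_e$ is no longer affine, so the paper must separately verify (\Lemref{lem:fe.regular}, by computing the Jacobian at $\circcentre{\splxs}$) that $0$ is a regular value; your affine $f_e$ with linear part $-2\transp{P}$ gets $\deg(f_e,0,B)=\pm1$ for free. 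To keep your squared maps you would have to tighten the hypothesis on $\pertconst$ (or enlarge $\localconst$) by a factor depending on $\sparseconst$, which changes the statement; otherwise switch to the unsquared maps.
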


In order to prove \Lemref{lem:metric.cc.pert}, we will use a
particular case of \Lemref{lem:close.to.centres}:
\begin{lem}
  \label{lem:close.to.ctr}
  Suppose $\splxs$ is an $\thickbnd$-thick $m$-simplex such that
  $\shortedge{\splxs} \geq \sparseconst\samconst$. If $x$ is a
  $\relconst$-centre for $\splxs$ with $\distEm{x}{p} < 2\samconst$
  for all $p \in \splxs$, then $x \in
  \ballEm{\circcentre{\splxs}}{\localconst}$, where $\localconst =
  \frac{2\relconst}{\thickbnd \sparseconst}$.
\end{lem}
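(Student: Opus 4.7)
The plan is to derive this as a direct specialization of \Lemref{lem:close.to.centres} to the case of a full-dimensional simplex with controlled shortest edge. First, since $\splxs$ is an $m$-simplex in $\rem$, the affine space $\normhull{\splxs}$ of circumball centres is $0$-dimensional and consists of the single point $\circcentre{\splxs}$. So the ``$c \in \normhull{\splxs}$'' provided by \Lemref{lem:close.to.centres} can only be $\circcentre{\splxs}$ itself, and the conclusion of that lemma becomes a bound on $\norm{x-\circcentre{\splxs}}$.

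Next I would invoke \Lemref{lem:close.to.centres} with $\tsamconst = 2\samconst$, which is legitimate by the hypothesis $\distEm{x}{p} < 2\samconst$ for every vertex $p$ of $\splxs$. This yields
\begin{equation*}
  \norm{x - \circcentre{\splxs}} < \frac{2\samconst\,\relconst}{\thickness{\splxs}\longedge{\splxs}}.
\end{equation*}
It remains to replace the two simplex-dependent quantities in the denominator by the uniform bounds available from the hypotheses. By assumption $\thickness{\splxs} \geq \thickbnd$, and since $\longedge{\splxs} \geq \shortedge{\splxs} \geq \sparseconst\samconst$, we can bound the right-hand side from above by
\begin{equation*}
  \frac{2\samconst\,\relconst}{\thickbnd \cdot \sparseconst\samconst} \;=\; \frac{2\relconst}{\thickbnd\,\sparseconst} \;=\; \localconst,
\end{equation*}
establishing $x \in \ballEm{\circcentre{\splxs}}{\localconst}$.

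There is essentially no obstacle here: the only ingredient beyond the invocation of \Lemref{lem:close.to.centres} is the trivial observation that $\longedge{\splxs} \geq \shortedge{\splxs}$, which lets the lower bound on the shortest edge serve as a lower bound on the longest edge in the denominator. The statement is really just a convenient repackaging of \Lemref{lem:close.to.centres} in the parameters $\thickbnd$, $\sparseconst$, $\samconst$ that will be used throughout \Secref{sec:stability}.
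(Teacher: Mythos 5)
Your proposal is correct and is exactly the derivation the paper intends: it presents \Lemref{lem:close.to.ctr} as ``a particular case of \Lemref{lem:close.to.centres}'' with no further proof, and the specialization $\tsamconst = 2\samconst$ together with the bounds $\thickness{\splxs} \geq \thickbnd$ and $\longedge{\splxs} \geq \shortedge{\splxs} \geq \sparseconst\samconst$ is precisely the intended filling-in of that gap.
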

Let $B = \ballEm{\circcentre{\splxs}}{\localconst}$ be the open ball
which contains the $\relconst$-centres for $\splxs$.  We will show
that if $\relconst = 4\pertconst$, then a circumcentre $c$ for
$\splxs$ with respect to $\gdist$ will also lie in $B$.  However, we
make no claim that $c$ is unique.  Note that $\close{B} \subset U$.

Consider the function $f_e: \close{B} \to \rem$ given by
\begin{equation}
  \label{eq:euclidean.f}
  f_e(x) =
  \transp{(\distEm{x}{p_1} - \distEm{x}{p_0}, \ldots,
    \distEm{x}{p_m} - \distEm{x}{p_0})}.
\end{equation}
Observe that $f_e$ maps the circumcentre of $\splxs$, and only the
circumcentre, to the origin: $f_e^{-1}(0) = \{\circcentre{\splxs} \}$.

We construct a similar function for the metric $\gdist$, 
\begin{equation}
  \label{eq:metric.f}
  f(x) = \transp{(\dist{x}{p_1} - \dist{x}{p_0}, \ldots,
    \dist{x}{p_m} - \dist{x}{p_0})},
\end{equation}
and we will show that there must be a $c \in f^{-1}(0) \subset B$. We
first show that there is a homotopy between $f$ and $f_e$ such that
the image of $\bdry{\close{B}}$ never touches the origin: 
%
\begin{lem}
  \label{lem:fef.homotopic}
  Under the hypotheses of \Lemref{lem:metric.cc.pert},
  if $\relconst = 4\pertconst \leq \frac{\thickbnd \sparseconst
    \samconst}{2}$, then there is a homotopy $F: \close{B} \times
  [0,1] \to \rem$ between $f_e(x) = F(x,0)$ and $f(x) = F(x,1)$ with
  $F(x,t) \neq 0$ for all $x \in \bdry{\close{B}}$ and $t \in [0,1]$.
\end{lem}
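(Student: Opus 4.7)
The plan is to use the straight-line homotopy $F(x,t) = (1-t)f_e(x) + tf(x)$, which is continuous because $f$ inherits continuity from $\gdist$ (via the assumed topological compatibility). The task then reduces to verifying $F(x,t) \neq 0$ for every $x \in \bdry{\close{B}}$ and every $t \in [0,1]$, and I plan to obtain this by combining a lower bound on $\|f_e(x)\|_\infty$ along the boundary sphere with an upper bound on $\|f(x)-f_e(x)\|_\infty$ valid on all of $\close{B}$.

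For the upper bound, observe that for any $x \in \close{B}$ and any vertex $p_i$ one has $\distEm{x}{p_i} \leq \localconst + \circrad{\splxs} < 2\samconst$, using the hypothesis $\pertconst \leq \thickbnd \sparseconst \samconst/8$ (which gives $\localconst \leq \samconst$) together with $\circrad{\splxs} < \samconst$. The metric perturbation hypothesis therefore applies to each $\distEm{x}{p_i}$, and since each component of $f(x) - f_e(x)$ is a difference of two such perturbations, the triangle inequality yields $\|f(x) - f_e(x)\|_\infty \leq 2\pertconst$.

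For the lower bound on $\|f_e(x)\|_\infty$ along $\bdry{\close{B}}$, I plan to use the contrapositive of \Lemref{lem:close.to.ctr}. If $x \in \bdry{\close{B}}$, then $\norm{x - \circcentre{\splxs}} = \localconst$, so $x \notin \ballEm{\circcentre{\splxs}}{\localconst}$. Since we just verified $\distEm{x}{p_i} < 2\samconst$ for every vertex, \Lemref{lem:close.to.ctr} (in its strict form, inherited from the strict variant of \Lemref{lem:power.close.centres}) forces $x$ not to be a $\relconst$-centre. Thus there exist vertices $p_i,p_j$ of $\splxs$ with $|\distEm{x}{p_i} - \distEm{x}{p_j}| > \relconst = 4\pertconst$, and splitting this difference through the reference vertex $p_0$ gives $\|f_e(x)\|_\infty > 2\pertconst$.

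Combining the bounds finishes the argument: for $t \in [0,1]$ and $x \in \bdry{\close{B}}$, writing $F(x,t) = f_e(x) + t(f(x)-f_e(x))$ yields
\begin{equation*}
\|F(x,t)\|_\infty \;\geq\; \|f_e(x)\|_\infty - t\|f(x)-f_e(x)\|_\infty \;>\; 2\pertconst - 2t\pertconst \;=\; 2\pertconst(1-t),
\end{equation*}
which is positive for $t < 1$; at $t=1$, the same estimate with $t=1$ still gives $\|f(x)\|_\infty > 2\pertconst - 2\pertconst = 0$ because the first inequality is strict. Hence $F(x,t) \neq 0$ throughout $\bdry{\close{B}} \times [0,1]$. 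The main delicacy I anticipate is precisely this strictness bookkeeping: since the hypothesis $\relconst = 4\pertconst$ is an equality rather than a strict inequality, the argument only closes because the lower bound $\|f_e(x)\|_\infty > 2\pertconst$ is strict, which in turn relies on using the strict-inequality version of the ``almost circumcentre'' lemma.
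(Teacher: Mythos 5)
Your proof is correct and follows essentially the same route as the paper's: the straight-line homotopy, the observation that points of $\close{B}$ are within $2\samconst$ of the vertices so the metric hypothesis gives $\infnorm{f-f_e}\leq 2\pertconst$, the contrapositive of \Lemref{lem:close.to.ctr} to show boundary points are not $\relconst$-centres and hence $\infnorm{f_e(x)}>\relconst/2$, and the max-norm triangle inequality to conclude. Your explicit attention to the strictness bookkeeping (the open ball in \Lemref{lem:close.to.ctr} giving the strict lower bound that survives at $t=1$) is a point the paper handles only implicitly, but the argument is the same.
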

\begin{proof}
  We define the homotopy $F: \close{B} \times [0,1] \to \rem$ by
  $F(x,t) = (1-t)f_e(x) + tf(x)$.  By the bounds on $\relconst$ and
  $\circrad{\splxs}$, for every $x \in \close{B}$, and $p \in \splxs$,
  we have $\distEm{x}{p} \leq \frac{2\relconst}{\thickbnd\sparseconst}
  + \circrad{\splxs} < 2\samconst$.  Thus it follows from
  \Lemref{lem:close.to.ctr} that $x \in \bdry{\close{B}}$ cannot be a
  $\relconst$-centre.

  It is convenient to consider the max norm on $\rem$ defined by the
  largest magnitude of the components: $\infnorm{f_e(x)} = \max_{i}
  \abs{f_e(x)_i}$. (This gives us a better bound than working with the
  standard Euclidean norm.)  If $\infnorm{f_e(y)} \leq
  \frac{\relconst}{2}$, then $y$ must be a $\relconst$-centre. Indeed,
  we would have $\abs{ \norm{p_i - y} - \norm{p_j - y} } \leq \abs{
    \norm{p_i - y} - \norm{p_0 - y} } + \abs{ \norm{p_0 - y} -
    \norm{p_j - y} } \leq \frac{\relconst}{2} + \frac{\relconst}{2} =
  \relconst$ for all $i$ and $j$. Thus, since $x \in \bdry{\close{B}}$
  is not a $\relconst$-centre, we must have $\infnorm{f_e(x)} >
  \frac{\relconst}{2}$.

  Also, from the hypothesis on $\gdist$, we have
  $\infnorm{f_e(x)-f(x)} \leq 2\pertconst = \frac{\relconst}{2}$, for
  all $x \in \bdry{\close{B}}$.  It follows that $\infnorm{F(x,t)}
  \geq \infnorm{f_e(x)} - t\infnorm{f(x) - f_e(x)} > 0$ when $x \in
  \bdry{\close{B}}$.
\end{proof}

We will need the following observation:
\begin{lem}
  \label{lem:fe.regular}
  The origin is a regular value for the function $f_e$ defined in
  \Eqnref{eq:euclidean.f}.
\end{lem}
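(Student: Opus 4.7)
The plan is to check the definition of regular value directly: since $f_e^{-1}(0) = \{\circcentre{\splxs}\}$ is a single point (as already noted), it suffices to show that the Jacobian of $f_e$ at $c = \circcentre{\splxs}$ is nonsingular as a map $\rem \to \rem$. The map $f_e$ is smooth on a neighbourhood of $c$ because each component is a difference of Euclidean distances, and $c \neq p_i$ for any $i$ (since $\splxs$ is non-degenerate, $\circrad{\splxs} > 0$).

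Next I would compute the Jacobian explicitly. The gradient of $x \mapsto \norm{x - p_i}$ at $x$ is $(x - p_i)/\norm{x - p_i}$, so the gradient of the $i$-th component of $f_e$ is
\begin{equation*}
\nabla\!\left(\norm{x - p_i} - \norm{x - p_0}\right) = \frac{x - p_i}{\norm{x - p_i}} - \frac{x - p_0}{\norm{x - p_0}}.
\end{equation*}
Evaluated at $x = c$, every denominator equals the circumradius $R = \circrad{\splxs}$, so the $i$-th row of the Jacobian becomes $\frac{1}{R}(p_0 - p_i)^{\mathsf T}$. Hence, if $P$ is the $m \times m$ matrix whose $i$-th column is $p_i - p_0$ (for $i = 1, \ldots, m$), then
\begin{equation*}
Df_e(c) = -\frac{1}{R}\transp{P}.
\end{equation*}

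Finally, because $\splxs$ is a non-degenerate $m$-simplex in $\rem$, its edge vectors $p_i - p_0$ form a basis of $\rem$, so $P$ is invertible; equivalently, by \Lemref{lem:bound.skP}, $\sing{m}{P} \geq \sqrt{m}\thickness{\splxs}\longedge{\splxs} > 0$. Therefore $Df_e(c)$ is invertible and $0$ is a regular value. There is no real obstacle here: the only thing to watch for is simply ensuring that $c$ does not coincide with any vertex (so that $f_e$ is smooth there), which is immediate from non-degeneracy of $\splxs$.
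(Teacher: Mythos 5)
Your proof is correct and follows essentially the same route as the paper: both compute the Jacobian of $f_e$ at the unique preimage $\circcentre{\splxs}$ of the origin, obtain $-\frac{1}{\circrad{\splxs}}\transp{P}$, and conclude nonsingularity from the non-degeneracy of $\splxs$ (the paper phrases this via $|\det P| = m!\,\vol(\splxs)$, you via invertibility of $P$ and \Lemref{lem:bound.skP}, which is an immaterial difference).
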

\begin{proof}
  Choose a coordinate system such that $\circcentre{\splxs} \in B$ is
  the origin. We show by a direct calculation that $\det J(f_e)_0 \neq
  0$, where $J(f_e)_0$ is the Jacobian matrix representing the
  derivative of $f_e$ in our coordinate system.

  Let $p_{i} = \transp{(p_{i0}, \, \dots, \, p_{im})}$ for all $p_{i}
  \in \{ p_{0}, \, \dots, \, p_{m}\}$.  For $x = \transp{(x_{1}, \,
    \dots, \, x_{m})} \in \reel^{m}$, let $f_e(x) = \transp{(f_{0}(x),
    \,\dots,\, f_{m}(x))}$, where
  \begin{equation*}
    f_{i}(x) = \|p_{i} - x\| - \| p_{0} - x\| =
    \sqrt{\sum_{k=1}^{m} (p_{ik} - x_{k})^{2} }  -
    \sqrt{\sum_{k=1}^{m} (p_{0k} - x_{k})^{2} }\, . 
  \end{equation*}

  We find
  \begin{equation*} 
    \frac{\partial f_{i}}{\partial x_{j}} \Big|_0 =  \frac{p_{0j} -
      p_{ij}}{\circrad{\sigma}}, 
  \end{equation*}
  and thus
  \begin{equation}
    \label{eq:jacobian.f}
    J(f_e)_0 = -\frac{1}{\circrad{\splxs}} \transp{P},
  \end{equation}
  where as usual $P$ is the matrix whose columns are $p_i-p_0$. Since
  $\vol(\sigma^{m}) = \frac{|\det(P)|}{m!}$, \Eqnref{eq:jacobian.f} implies
  \begin{equation*}
    \left| \det  J(f_e)_0 \right| = \frac{m!\,
      \vol(\splxs^{m})}{\circrad{\splxs}^{m}}. 
  \end{equation*}
  Thus since $f_e^{-1}(0) = \{0\}$, $0$ is a regular value for $f_e$
  provided $\splxs$ is non-degenerate.
\end{proof}

\Lemref{lem:metric.cc.pert} now follows from a consideration of the
degree of the mappings $f$ and $f_e$ relative to zero. The
\defn{degree} of a smooth map $g: \close{B} \to \rem$ at a regular
point $p \in g(B)$ is defined by
\begin{equation*}
  \deg (g,p,B) = \sum_{x \in g^{-1}(p)} \sgn \det J(g)_x \, ,
\end{equation*}
where $J(g)_x$ is the Jacobian matrix of $g$ at $x$.
The exposition by Dancer~\cite{dancer2000} is a good reference for the
degree of maps from manifolds with boundary. It is shown that the
definition of $\deg(g,p,B)$ extends to continuous maps $g$ that are
not necessarily differentiable. If $h: \close{B} \to \rem$ is
homotopic to $g$ by a homotopy $H: \close{B} \times [0,1] \to \rem$
such that $H(x,t) \neq p$ for all $t \in [0,1]$, and $x \in \bdry{B}$,
then $\deg (g,p,B) = \deg(h,p,B)$.

Since $f_e^{-1}(0) = \{ \circcentre{\splxs} \}$, it follows from
\Lemref{lem:fe.regular} that $\deg(f_e,0,B) = \pm 1$. Then
\Lemref{lem:fef.homotopic} implies $\deg(f,0,B) = \deg(f_e,0,B)$,
and since this is nonzero, it must be that $f^{-1}(0) \neq \emptyset$.
The demonstration of \Lemref{lem:metric.cc.pert} is complete.

\subsubsection{Circumcentres and point perturbations}

The exact same argument as was used to demonstrate
\Lemref{lem:metric.cc.pert} can be used to show that an $m$-simplex
$\tsplxs = \simplex{\tilde{p}_0, \ldots, \tilde{p}_m}$ whose vertices
lie close to a thick $m$-simplex $\splxs$, will also have a
circumcentre that lies close to $\circcentre{\splxs}$. We replace the
function $f$ defined in \Eqnref{eq:metric.f} by the function
\begin{equation*}
  \tilde{f}(x) =
  \transp{(\distEm{x}{\tilde{p}_1} -
    \distEm{x}{\tilde{p}_0}, \ldots, \distEm{x}{\tilde{p}_m} -
    \distEm{x}{\tilde{p}_0})}, 
\end{equation*}
and the same argument goes through. We obtain:
\begin{lem}[Circumcentres: point perturbation]
  \label{lem:point.cc.pert}
  Suppose that $\splxs = \simplex{p_0,\ldots,p_m}$ is an
  $\thickbnd$-thick $m$-simplex with $\circrad{\splxs} < \samconst$
  and $\shortedge{\splxs} \geq \sparseconst\samconst$. Suppose also
  that $\tilde{\splxs} = \simplex{\tilde{p}_0,\ldots, \tilde{p}_m}$ is
  such that $\norm{\tilde{p}_i - p_i} \leq \pertconst$ for all $i \in
  [0,\ldots,m]$. If
  \begin{equation*}
   \pertconst \leq \frac{\thickbnd \sparseconst \samconst}{8},
   \qquad
   \text{then}
   \qquad
   \distEm{\circcentre{\tsplxs}}{\circcentre{\splxs}} < 
    \frac{8\pertconst}{\thickbnd\sparseconst}.
  \end{equation*}
\end{lem}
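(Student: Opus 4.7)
The plan is to mimic the degree-theoretic argument used for \Lemref{lem:metric.cc.pert}, substituting the perturbed-point function $\tilde{f}$ for the perturbed-metric function $f$. Concretely, I would set $\relconst = 4\pertconst$ and define the open ball $B = \ballEm{\circcentre{\splxs}}{\localconst}$ with $\localconst = \frac{8\pertconst}{\thickbnd \sparseconst} = \frac{2\relconst}{\thickbnd \sparseconst}$. The hypothesis $\pertconst \leq \frac{\thickbnd \sparseconst \samconst}{8}$ then gives $\localconst \leq \samconst$, which together with $\circrad{\splxs} < \samconst$ ensures that any $x \in \close{B}$ satisfies $\distEm{x}{p_i} < 2\samconst$ for each vertex of $\splxs$.

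Next I would recall the Euclidean reference map $f_e(x) = \transp{(\distEm{x}{p_1}-\distEm{x}{p_0},\ldots,\distEm{x}{p_m}-\distEm{x}{p_0})}$, whose unique zero is $\circcentre{\splxs}$ and for which the origin is a regular value, by \Lemref{lem:fe.regular}. Using the triangle inequality, for every $x \in \close{B}$ and each $i$ we have $|\distEm{x}{\tilde{p}_i} - \distEm{x}{p_i}| \leq \norm{\tilde{p}_i - p_i} \leq \pertconst$, so each component of $\tilde{f}(x) - f_e(x)$ is bounded by $2\pertconst = \frac{\relconst}{2}$ in absolute value; hence $\infnorm{\tilde{f}(x) - f_e(x)} \leq \frac{\relconst}{2}$.

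Now I would rerun the homotopy argument of \Lemref{lem:fef.homotopic}: define $F(x,t) = (1-t)f_e(x) + t\tilde{f}(x)$ on $\close{B}\times[0,1]$. For any $x \in \bdry{\close{B}}$, \Lemref{lem:close.to.ctr} (applied with tolerance $\relconst$) says $x$ cannot be a $\relconst$-centre of $\splxs$, because any $\relconst$-centre lies strictly inside $B$. As in \Lemref{lem:fef.homotopic}, if $\infnorm{f_e(x)} \leq \relconst/2$, then $x$ would be a $\relconst$-centre, so on $\bdry{\close{B}}$ we have $\infnorm{f_e(x)} > \relconst/2$. Combined with the bound $\infnorm{\tilde f - f_e} \leq \relconst/2$, this gives $\infnorm{F(x,t)} > 0$ for every $(x,t) \in \bdry{\close{B}}\times[0,1]$, so the homotopy avoids the origin on the boundary.

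Finally I would invoke the homotopy invariance of the Brouwer degree: since the origin is a regular value of $f_e$ and $f_e^{-1}(0) = \{\circcentre{\splxs}\}$, we have $\deg(f_e,0,B) = \pm 1$, and the homotopy yields $\deg(\tilde{f},0,B) = \deg(f_e,0,B) \neq 0$. Hence $\tilde{f}^{-1}(0) \cap B$ is nonempty, and any point in it is equidistant in $\gdistEm$ from all vertices of $\tsplxs$, i.e.\ is $\circcentre{\tsplxs}$. Since this point lies in $B$, we obtain $\distEm{\circcentre{\tsplxs}}{\circcentre{\splxs}} < \localconst = \frac{8\pertconst}{\thickbnd\sparseconst}$. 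There is no real obstacle here beyond the bookkeeping of the two bounds ($\relconst/2$ from below on $\infnorm{f_e}$ and $\relconst/2$ from above on the perturbation); the main mild subtlety is simply verifying that the triangle-inequality bound $|\distEm{x}{\tilde p_i}-\distEm{x}{p_i}| \leq \pertconst$ plays exactly the role that the hypothesis $|\dist{x}{y}-\distEm{x}{y}|\leq \pertconst$ did in the metric version.
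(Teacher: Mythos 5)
Your proposal is correct and follows essentially the same route as the paper, which itself proves this lemma by stating that the degree-theoretic argument of \Lemref{lem:metric.cc.pert} goes through verbatim with $f$ replaced by $\tilde{f}$. You have simply written out the details (the $2\pertconst$ bound on $\infnorm{\tilde f - f_e}$ from the triangle inequality, the non-vanishing of the homotopy on $\bdry{\close{B}}$, and the degree count) that the paper leaves implicit.
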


\subsection{Perturbations and protection}
\label{sec:pert.prot.ball}

Suppose $\pert: \pts \to \tpts$ is a $\pertconst$-perturbation.  If
$\splxs$ is a $\delta$-protected $m$-simplex in $\delP$, then we want
an upper bound on $\pertconst$ that will ensure that $\tsplxs =
\pert(\splxs)$ is protected in $\delof{\tpts}$.
The following definition will be convenient:
\begin{de}[Secure simplex]
  \label{def:secure}
  A simplex $\splxs \in \delP$ is \defn{secure} if it is a
  $\delta$-protected $m$-simplex that is $\thickbnd$-thick and
  satisfies $\circrad{\splxs} < \samconst$ and $\shortedge{\splxs} \geq
  \sparseconst \samconst$.
\end{de}
Our stability results apply to subcomplexes of secure simplices, the
definition of which employs multiple parameters. For safe interior
simplices \Lemref{lem:delta.sparse} and \Thmref{thm:prot.thick} allow
us to consolidate some of these parameters with the ratio
$\delta/\samconst$:
\begin{lem}[Safe interior simplices are secure]
  \label{lem:safe.int.secure}
  If $\pts$ satisfies a sampling radius $\samconst$ and is
  $\delta$-generic for $\sdipts$, with $\delta = \protconst\samconst$,
  then the safe interior $m$-simplices are secure, with $\sparseconst =
  \protconst$, and $\thickbnd = \frac{\sqrt{3}\protconst^2}{4m}$.
\end{lem}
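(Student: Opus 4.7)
The plan is to verify in turn the four requirements in Definition~\ref{def:secure} for an arbitrary safe interior $m$-simplex $\splxs \in \str{\sdipts;\delP}$. Three of the four will be direct appeals to results already established in this section, and the remaining one needs only a brief unpacking of the definition of the star.

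First I would observe that $\delta$-protection of $\splxs$ is built into the hypothesis: since $\splxs \in \str{\sdipts} \subseteq \str{\str{\sdipts}}$ is an $m$-simplex, the definition of $\delta$-generic for $\sdipts$ asserts exactly this. Next, the thickness bound $\thickness{\splxs} \geq \frac{\sqrt{3}\protconst^{2}}{4}$ is precisely what Theorem~\ref{thm:prot.thick} gives under the hypothesis $\delta = \protconst \samconst$. Finally, the short-edge bound $\shortedge{\splxs} > \delta = \protconst \samconst$ is the content of Lemma~\ref{lem:delta.sparse}, yielding the desired inequality with $\sparseconst = \protconst$.

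The only clause that requires a short argument is the circumradius bound $\circrad{\splxs} < \samconst$. The plan is to show that $\splxs$ must have a vertex in $\sdipts$. By definition of star, $\splxs$ is a face of some Delaunay simplex $\splxt$ which contains a vertex $p \in \sdipts$. Since $\splxs$ is an $m$-simplex, one has $\dim \splxt \geq m$; but every $m$-face of $\splxt$ lies in $\str{\sdipts} \subseteq \str{\str{\sdipts}}$ and is therefore $\delta$-protected, which rules out any Delaunay coface of strictly greater dimension. Hence $\splxt = \splxs$ and $p$ is a vertex of $\splxs$. Since $\sdipts \subseteq \dipts$, the point $p$ satisfies $\distEm{p}{\bdry{\convhull{\pts}}} \geq 4\samconst \geq 2\samconst$, so Lemma~\ref{lem:interior.splx} gives $\circrad{\splxs} < \samconst$.

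The main (and really only) obstacle is the bookkeeping with the star definition in the last paragraph: one needs to rule out higher-dimensional Delaunay cofaces of $\splxs$ inside $\str{\sdipts}$ in order to conclude that a safe interior $m$-simplex really does contain a deep interior vertex. Once this is in hand, everything else is a direct citation of Theorem~\ref{thm:prot.thick}, Lemma~\ref{lem:delta.sparse}, and Lemma~\ref{lem:interior.splx}.
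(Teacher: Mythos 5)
Your proposal is correct and follows the route the paper intends: the lemma is stated without an explicit proof, being presented as a direct consolidation of Definition~\ref{def:locally.delta.generic} (protection), Theorem~\ref{thm:prot.thick} (thickness), Lemma~\ref{lem:delta.sparse} (edge length), and Lemma~\ref{lem:interior.splx} together with the definition of deep interior points (circumradius). Your extra step showing that a safe interior $m$-simplex must actually contain a vertex of $\sdipts$ --- by ruling out higher-dimensional Delaunay cofaces via the protection of their $m$-faces, exactly as in the proof of Lemma~\ref{lem:embed.del.star} --- correctly fills the one detail the paper leaves implicit.
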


\begin{lem}[Protection and point perturbation]
  \label{lem:point.pert.prot.ball}
  Sup\-pose that $\pts \subset \rem$ and $\splxs \in \delP$ is secure.
  If $\pert: \pts \to \tpts$ is a $\pertconst$-perturbation with
  \begin{equation*}
    \pertconst \leq \frac{\thickbnd\sparseconst}{18}\delta,
  \end{equation*}
  then $\pert(\splxs) = \tsplxs \in \delof{\tpts}$ and has a $(\delta
  - \frac{18}{\thickbnd\sparseconst} \pertconst)$-protected
  Delaunay ball.
\end{lem}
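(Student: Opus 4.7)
The plan is to lift the $\delta$-protection of the Euclidean circumball of $\splxs$ to a protection estimate for the circumball of the perturbed simplex $\tsplxs$, by controlling two quantities: the displacement of the circumcentre and the change of the circumradius. The hypothesis $\pertconst \leq \frac{\thickbnd\sparseconst}{18}\delta$ is designed precisely so that the total error incurred is at most $\frac{18\pertconst}{\thickbnd\sparseconst}$, leaving a positive protection margin whenever $\pertconst < \frac{\thickbnd\sparseconst}{18}\delta$ (and a nonnegative one at the boundary case).

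First I would invoke \Lemref{lem:point.cc.pert} to bound the circumcentre displacement. The hypotheses of that lemma require $\pertconst \leq \frac{\thickbnd\sparseconst\samconst}{8}$, which follows from the stated hypothesis together with $\delta \leq \samconst$. Writing $c = \circcentre{\splxs}$, $r = \circrad{\splxs}$, $\tilde c = \circcentre{\tsplxs}$ and $\tilde r = \circrad{\tsplxs}$, the lemma yields
\begin{equation*}
  \distEm{\tilde c}{c} < \frac{8\pertconst}{\thickbnd\sparseconst}.
\end{equation*}
For the radius, since $\tilde r = \distEm{\tilde c}{\tilde p_i}$ for any vertex $\tilde p_i$ of $\tsplxs$, the triangle inequality gives
\begin{equation*}
  \tilde r \leq \distEm{\tilde c}{c} + \distEm{c}{p_i} + \distEm{p_i}{\tilde p_i} < \frac{8\pertconst}{\thickbnd\sparseconst} + r + \pertconst.
\end{equation*}

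Next I would use the $\delta$-protection of $\splxs$ to estimate $\distEm{\tilde c}{\tilde q}$ for an arbitrary $\tilde q \in \tpts \setminus \tsplxs$, corresponding to $q \in \pts \setminus \splxs$. By definition of $\delta$-protection, $\distEm{c}{q} > r + \delta$, so
\begin{equation*}
  \distEm{\tilde c}{\tilde q} \geq \distEm{c}{q} - \distEm{\tilde c}{c} - \distEm{q}{\tilde q} > (r + \delta) - \frac{8\pertconst}{\thickbnd\sparseconst} - \pertconst.
\end{equation*}
Combining this with the upper bound on $\tilde r$ above gives
\begin{equation*}
  \distEm{\tilde c}{\tilde q} - \tilde r > \delta - \frac{16\pertconst}{\thickbnd\sparseconst} - 2\pertconst \geq \delta - \frac{18\pertconst}{\thickbnd\sparseconst},
\end{equation*}
where the last inequality uses $\thickbnd\sparseconst \leq 1$ (thickness and sparsity ratio are each at most $1$), which is exactly what absorbs the $2\pertconst$ term into the $\frac{18}{\thickbnd\sparseconst}\pertconst$ bound.

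Finally, to conclude, I would observe that the hypothesis on $\pertconst$ makes the right-hand side nonnegative, so the open ball $\ballEm{\tilde c}{\tilde r}$ contains no point of $\tpts \setminus \tsplxs$ and has the vertices of $\tsplxs$ on its boundary. Because $\tsplxs$ is a non-degenerate $m$-simplex (its circumcentre exists by \Lemref{lem:point.cc.pert}), the circumball is the unique maximal empty ball centred at $\tilde c$, hence a Delaunay ball for $\tsplxs$. This places $\tsplxs$ in $\delof{\tpts}$ with the claimed $(\delta - \frac{18}{\thickbnd\sparseconst}\pertconst)$-protection. The only nontrivial step is the circumcentre displacement estimate, but that work has already been done in \Lemref{lem:point.cc.pert}; the remainder is bookkeeping with the triangle inequality and verifying that the combinatorics of which vertices belong to which simplex carry over, which follows from $\pert$ being a bijection under the sparsity assumption.
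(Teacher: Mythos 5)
Your proposal is correct and follows essentially the same route as the paper's proof: invoke \Lemref{lem:point.cc.pert} for the circumcentre displacement bound $\frac{8\pertconst}{\thickbnd\sparseconst}$, then apply the triangle inequality to bound $\tilde r$ from above and $\distEm{\tilde c}{\tilde q}$ from below, and finally absorb the extra $2\pertconst$ into the constant using $\thickbnd, \sparseconst \leq 1$. The only difference is that you spell out why the empty circumball of the non-degenerate $\tsplxs$ is a Delaunay ball, a point the paper leaves implicit.
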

\begin{proof}
  Let $B = \ballEm{c}{r}$ be the $\delta$-protected Delaunay ball for
  $\splxs \in \delP$, and let $\tilde{B} =
  \ballEm{\tilde{c}}{\tilde{r}}$ be the circumball for the
  corresponding perturbed simplex $\tsplxs$. We wish to establish a
  bound on $\pertconst$ that will ensure that $\tilde{B}$ is protected
  with respect to $\tilde{\pts}$.

  Let $q \in \pts$ be a point not in $\splxs$.  We need to ensure that
  the corresponding $\tilde{q}$ lies outside the closure of
  $\tilde{B}$, i.e., that $\distEm{\tilde{q}}{\tilde{c}} > \tilde{r}$.

  Since $\delta \leq \samconst$, the hypothesis of
  \Lemref{lem:point.cc.pert} is satisfied by $\pertconst$, and we have
  $\distEm{\tilde{c}}{c} < \localconst\pertconst$, where $\localconst
  = \frac{8}{\thickbnd \sparseconst}$.  Thus for $p \in
  \splxs$ and corresponding $\tilde{p} \in \tsplxs$ we have
  \begin{equation*}
    \begin{split}
      \tilde{r} &\leq \distEm{c}{p} +
        \distEm{c}{\tilde{c}} +
        \distEm{p}{\tilde{p}}\\
      &< r + (\localconst + 1)\pertconst.
    \end{split}
  \end{equation*}
  Also
  \begin{equation*}
    \begin{split}
      \distEm{\tilde{q}}{\tilde{c}} &\geq \distEm{q}{c} -
      \distEm{\tilde{c}}{c} - \distEm{\tilde{q}}{q} \\
      &> r +\delta - \pertconst(\localconst + 1).
    \end{split}
  \end{equation*}
  Therefore $\tilde{q}$ will be outside of the closure of $\tilde{B}$
  provided $r + \delta - \pertconst(\localconst + 1) \geq r + (1 +
  \localconst)\pertconst$, i.e., when
$    \delta \geq 2(\localconst + 1)\pertconst$.
  The result follows from the definition of $\localconst$ and the
  observation that $\sparseconst$ and $\thickbnd$ are each no larger
  than one.
\end{proof}

A similar argument yields a bound on the
metric perturbation that will ensure the Delaunay balls for the
$m$-simplices remain protected:
\begin{lem}[Protection and metric perturbation]
  \label{lem:metric.pert.prot.ball}
  Suppose $U \subset \rem$ contains $\convhull{\pts}$ and $\gdist: U
  \times U \to \reel$ is a metric 
  such that $\abs{ \distEm{x}{y} - \dist{x}{y} } \leq \pertconst$ for
  all $x,y \in U$. Suppose also that $\splxs \in \delP$ is secure.
  If
  \begin{equation*}
    \pertconst \leq \frac{\thickbnd\sparseconst}{20}\delta,
  \end{equation*}
  and $\distEm{p}{\bdry{U}} \geq 2\samconst$ for every vertex $p \in
  \splxs$, then $\splxs$ also belongs to $\delPd$, and has a $(\delta
  - \frac{20}{\thickbnd\sparseconst} \pertconst)$-protected
  Delaunay ball in the metric $\gdist$.
\end{lem}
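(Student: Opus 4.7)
The plan is to mirror the argument for \Lemref{lem:point.pert.prot.ball} (protection under point perturbation), substituting \Lemref{lem:metric.cc.pert} for \Lemref{lem:point.cc.pert}. First I would verify that \Lemref{lem:metric.cc.pert} applies to $\splxs$: since $\delta \leq \samconst$, the hypothesis $\pertconst \leq \frac{\thickbnd\sparseconst}{20}\delta$ implies $\pertconst \leq \frac{\thickbnd\sparseconst\samconst}{8}$, secureness of $\splxs$ supplies the thickness, circumradius and edge-length hypotheses, and the condition $\distEm{p}{\bdry{U}} \geq 2\samconst$ for every vertex of $\splxs$ is given directly. So \Lemref{lem:metric.cc.pert} produces a point $c' \in U$ with $\distEm{c'}{c} \leq \localconst\pertconst$, where $c = \circcentre{\splxs}$ and $\localconst = \frac{8}{\thickbnd\sparseconst}$, satisfying $\dist{c'}{p_i} = \dist{c'}{p_j}$ for all vertices $p_i,p_j \in \splxs$. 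Setting $\tilde{r} = \dist{c'}{p_0}$, the ball $\tilde{B} = \ball{c'}{\tilde{r}}$ is then a $\gdist$-circumscribing ball for $\splxs$.

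Next, I would use two triangle-inequality arguments together with the hypothesis $\abs{\distEm{x}{y} - \dist{x}{y}} \leq \pertconst$ to show that $\tilde{B}$ is protected in the metric $\gdist$. Let $B = \ballEm{c}{r}$ be the $\delta$-protected Euclidean Delaunay ball for $\splxs$, let $p$ be any vertex of $\splxs$, and let $q \in \pts \setminus \splxs$. Combining the metric-to-Euclidean comparison with the Euclidean triangle inequality through $c$ yields
\begin{equation*}
\tilde{r} = \dist{c'}{p} \leq r + (\localconst + 1)\pertconst
\qquad \text{and} \qquad
\dist{q}{c'} \geq \distEm{q}{c} - \distEm{c}{c'} - \pertconst > r + \delta - (\localconst + 1)\pertconst,
\end{equation*}
the second using $\distEm{q}{c} > r + \delta$ from $\delta$-protection of $B$. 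Subtracting gives $\dist{q}{c'} - \tilde{r} > \delta - 2(\localconst + 1)\pertconst$, and the factor $2(\localconst + 1) = \frac{16}{\thickbnd\sparseconst} + 2$ can be absorbed into $\frac{20}{\thickbnd\sparseconst}$ using $\thickbnd\sparseconst \leq 1$. This yields the claimed $(\delta - \frac{20}{\thickbnd\sparseconst}\pertconst)$-protection of $\tilde{B}$; since the vertices of $\splxs$ lie on $\bdry{\tilde{B}}$ and every other point of $\pts$ lies strictly outside, $\tilde{B}$ is a genuine $\gdist$-Delaunay ball, so $\splxs \in \delPd$.

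The main thing requiring care is keeping every metric evaluation legitimate, i.e., ensuring that every point involved lies in $U$, where $\gdist$ and the metric-vs-Euclidean comparison are defined. The hypothesis $\distEm{p}{\bdry{U}} \geq 2\samconst$ combined with $r < \samconst$ places $c$ strictly inside $U$, and the displacement $\distEm{c'}{c} \leq \localconst\pertconst$ is small compared to $\samconst$ under the hypothesis on $\pertconst$, so $c' \in U$ as well; the points $q \in \pts \subset \convhull{\pts} \subset U$ by assumption. Apart from bookkeeping these domain conditions and the constants, the proof is a direct transcription of the point-perturbation argument.
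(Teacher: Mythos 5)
Your proposal is correct and follows essentially the same route as the paper: apply \Lemref{lem:metric.cc.pert} to obtain a $\gdist$-circumcentre within $\localconst\pertconst$ of $\circcentre{\splxs}$, then bound $\tilde{r}$ from above and $\dist{q}{\tilde{c}}$ from below via triangle inequalities and the metric comparison, absorbing the constants using $\thickbnd\sparseconst \leq 1$. The only (harmless) difference is that you run the triangle inequalities in the Euclidean metric and convert once at the end, which gives the slightly tighter factor $2(\localconst+1)$ in place of the paper's $2(\localconst+2)$; both fit under $\frac{20}{\thickbnd\sparseconst}$.
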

\begin{proof}
  Let $B = \ballEm{c}{r}$ be the Euclidean $\delta$-protected Delaunay
  ball for $\splxs \in \delP$, and let $\tilde{B} =
  \ballEm{\tilde{c}}{\tilde{r}}$ be a circumball for $\splxs$ in the
  metric $\gdist$. We wish to establish a bound on $\pertconst$ that
  will ensure that $\tilde{B}$ is protected with respect to $\gdist$.

  Let $q \in \pts$ be a point not in $\splxs$.  We need to ensure that
  $\dist{q}{\tilde{c}} > \tilde{r}$.  Since $\delta \leq \samconst$,
  the hypothesis ensures that $\pertconst \leq
  \frac{\thickbnd\sparsity}{8}$, and so
  \Lemref{lem:metric.cc.pert} yields $\distEm{\tilde{c}}{c} <
  \localconst\pertconst$, where $\localconst =
  \frac{8}{\thickbnd\sparseconst}$. Thus for $p \in \splxs$
 \begin{equation*}
    \begin{split}
      \tilde{r} &\leq \dist{c}{p} + \dist{c}{\tilde{c}}\\
      &< (r + \pertconst) + (\localconst\pertconst + \pertconst)\\
      &= r + (\localconst + 2)\pertconst,
    \end{split}
  \end{equation*}
  and
 \begin{equation*}
    \begin{split}
      \dist{q}{\tilde{c}} &\geq \dist{q}{c} - \dist{\tilde{c}}{c}\\
     &> r +\delta - (\localconst + 2) \pertconst.
    \end{split}
  \end{equation*}
  Thus $\tilde{q}$ will be outside of the closure of $\tilde{B}$ provided
  $r + \delta - (\localconst + 2)\pertconst \geq r +
  (\localconst + 2)\pertconst$, i.e., when
  \begin{equation*}
    \delta \geq 2(\localconst + 2)\pertconst.
  \end{equation*}
  The result follows from the definition of $\localconst$ and the
  observation that $\sparseconst$ and $\thickbnd$ are each no larger
  than one.
\end{proof}

\subsection{Perturbations and Delaunay stability}
\label{sec:del.stab}

The results of \Secref{sec:pert.prot.ball} translate into stability
results for Delaunay triangulations. In the case of point
perturbations in Euclidean space,  the connectivity of the
Delaunay triangulation cannot change as long as the simplices
corresponding to the initial $m$-simplices remain protected. This is a
direct consequence of Delaunay's original result~\cite{delaunay1934},
but we explicitly lay out the argument.

In the case of metric perturbation, we can no longer take for granted
that the Delaunay complex cannot change its connectivity if the
$m$-simplices remain protected. This is because we are no longer
guaranteed that the Delaunay complex will be a triangulation. Using
the consequences of the point-perturbation result, we establish bounds
that ensure that the Delaunay complex in the perturbed metric will be
the same as the original Delaunay triangulation.

\subsubsection{Point perturbations}

A consequence of Delaunay's triangulation result is that if a
perturbation does not destroy any $m$-simplices in the Delaunay
complex of a generic point set, then no new simplices are created
either, and the complex is unchanged. More precisely we have:
\begin{lem}
  \label{lem:eucl.no.new}
  Suppose $\pts \subset \rem$ is a generic sample set, and $\qpts
  \subseteq \pts$ is a subset of interior points. If $\pert:
  \pts \to \tpts$ is a perturbation such that $\pert(\str{\qpts; \delP})
  \subseteq \str{\pert(\qpts) ; \delof{\tpts}}$, and every $m$-simplex
  $\tsplxs^m \in \pert(\str{\qpts})$ is protected in
  $\delof{\tpts}$, then
$    \pert(\str{\qpts}) = \str{\pert(\qpts)}$.    
\end{lem}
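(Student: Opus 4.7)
The plan is to reduce the problem to a vertex-local statement: for each $q \in \qpts$ I will establish that $\pert(\str{q;\delP}) = \str{\pert(q);\delof{\tpts}}$, after which taking unions over $\qpts$ delivers the lemma. The hypothesis gives the inclusion $\subseteq$ in each vertex-local statement (preservation of simplicial structure ensures that images of simplices incident to $q$ land in simplices incident to $\pert(q)$), so only the reverse inclusion requires work.

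Fix $q \in \qpts$. First I would invoke \Lemref{lem:Delaunay.local.tri} (using that $\pts$ is generic and $q$ is interior) to conclude that $\delP$ is a triangulation at $q$, so $\str{q;\delP}$ is embedded and $q$ lies in the interior of its carrier. Next I would argue directly that the image $\pert(\str{q;\delP})$ is itself embedded: any two $m$-simplices in this image are non-degenerate Delaunay $m$-simplices of $\delof{\tpts}$ (they are $\delta$-protected by hypothesis and hence non-degenerate), so the radical-hyperplane argument from the proof of \Lemref{lem:embed.del.star} shows that their intersection is a common face. Together with injectivity of $\pert$ on vertices, this makes the induced continuous map $\carrier{\str{q;\delP}} \to \carrier{\pert(\str{q;\delP})}$ a homeomorphism, and Brouwer's invariance of domain then places $\pert(q)$ in the topological interior of $\carrier{\pert(\str{q;\delP})}$ viewed as a subset of $\rem$.

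With an open neighborhood $U$ of $\pert(q)$ in hand, I would handle non-degenerate $m$-simplices first. For such a $\tsplxt \in \str{\pert(q);\delof{\tpts}}$, the set $\intr{\tsplxt} \cap U$ is a non-empty open subset of $\rem$ contained in $\carrier{\pert(\str{q;\delP})}$, which is a pure $m$-complex, so I can pick $y \in \intr{\tsplxt} \cap \intr{\tsplxs}$ for some $m$-simplex $\tsplxs \in \pert(\str{q;\delP})$. Both $\tsplxt$ and $\tsplxs$ are non-degenerate Delaunay $m$-simplices in $\delof{\tpts}$ with empty Delaunay balls, so the radical-axis argument forces $\tsplxt = \tsplxs$, placing $\tsplxt$ in $\pert(\str{q;\delP})$.

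The hard part is ruling out degenerate $m$-simplices of $\str{\pert(q);\delof{\tpts}}$, since the hypothesis constrains only simplices in the image. I would argue by contradiction: if $\tsplxt$ were such a degenerate $m$-simplex, \Lemref{lem:maximal.splx} would express it as a face of a Delaunay simplex $\tilde{\pi}$ with $\dim\affhull{\tilde{\pi}} = m$ and therefore at least $m{+}2$ vertices. Selecting $m{+}1$ affinely independent vertices of $\tilde{\pi}$ that include $\pert(q)$ produces a non-degenerate $m$-simplex $\tsplxt' \in \str{\pert(q);\delof{\tpts}}$ sharing its Delaunay ball with $\tilde{\pi}$. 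The previous paragraph places $\tsplxt'$ in $\pert(\str{q;\delP})$, hence $\delta$-protected, but any extra vertex of $\tilde{\pi}$ lies on the boundary of that Delaunay ball, a contradiction. Finally, every lower-dimensional simplex in $\str{\pert(q);\delof{\tpts}}$ is a face of an $m$-simplex there by the first part of \Lemref{lem:maximal.splx} (valid because $\affhull{\tpts} = \rem$, which is forced by the existence of any non-degenerate $m$-simplex in $\pert(\str{q;\delP})$), and hence lies in $\pert(\str{q;\delP})$. The main obstacle throughout is the apparent circularity between embeddedness of the target star and absence of degenerate $m$-simplices there; my plan sidesteps it by first proving the image embedded and using Brouwer's invariance of domain to bootstrap, rather than attempting to apply \Lemref{lem:inject.triang} off the shelf.
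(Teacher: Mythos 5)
Your overall architecture is essentially the paper's: localize at a vertex $q \in \qpts$, use \Lemref{lem:Delaunay.local.tri} to get that $\delP$ is a triangulation at $q$, push the star forward, and combine Brouwer's invariance of domain with the radical-hyperplane argument to show that no new simplices appear around $\pert(q)$. The paper packages the second half as \Lemref{lem:inject.triang} and obtains embeddedness of $\str{\pert(q)}$ from \Lemref{lem:embed.del.star}; you instead prove embeddedness of the \emph{image} $\pert(\str{q;\delP})$ and then separately exclude unwanted (in particular degenerate) $m$-simplices from the target star. That reorganization is a reasonable response to a real subtlety which the paper's three-line proof glosses over --- the hypothesis protects only the $m$-simplices in the image, not every $m$-simplex of $\str{\pert(q);\delof{\tpts}}$ --- and your handling of the non-degenerate case (overlapping interiors of two Delaunay $m$-simplices with empty circumballs force equality) and of the lower-dimensional faces is correct.

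The gap is the parenthetical ``they are $\delta$-protected by hypothesis and hence non-degenerate,'' on which the embeddedness of $\pert(\str{q;\delP})$ rests. Protection of a simplex does not by itself rule out degeneracy: for $m \geq 3$, take $m+1$ points lying on an $(m-2)$-sphere that bounds an empty $m$-ball whose boundary stays far from every other sample point; this degenerate $m$-simplex is Delaunay and $\delta$-protected for a positive $\delta$. In the proof of \Lemref{lem:embed.del.star}, non-degeneracy is deduced not from the protection of the degenerate simplex itself but from the protection of a \emph{different}, affinely independent $m$-simplex $\tsplxs^m < \splxt$ in the same star --- i.e., from protection of all $m$-simplices of that star, which is exactly what your hypothesis does not provide for the target star. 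Your own mechanism for killing degenerate simplices (the final paragraph) would supply the missing non-degeneracy, but it invokes the non-degenerate case, which invokes embeddedness of the image, which invokes non-degeneracy of the image $m$-simplices: the circularity you set out to sidestep re-enters at precisely this point. To close it you need an independent argument that $\pert(\splxs^m)$ is non-degenerate for every $m$-simplex $\splxs^m \in \str{q;\delP}$ --- for instance a quantitative one via thickness and the perturbation bound, as in the downstream applications of the lemma, or by strengthening the hypothesis to protect all $m$-simplices of $\str{\pert(\qpts);\delof{\tpts}}$, which is what the paper's appeal to \Lemref{lem:embed.del.star} implicitly requires.
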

\begin{proof}
  Let $p \in Q$ By \Lemref{lem:embed.del.star}, $\str{\pert(p)}$ is
  embedded, and by \Lemref{lem:Delaunay.local.tri}, $\delP$ is a
  triangulation at $p$. Since $\pert: \pts \to \tpts$ is injective, it
  follows that the simplical map induced by $\pert$ must be injective,
  and the result follows from \Lemref{lem:inject.triang}.
\end{proof}

\Lemref{lem:point.pert.prot.ball} establishes bounds on a
$\pertconst$-perturbation $\pert: \pts \to \tpts$ which will guarantee
that if $\qpts \subset \pts$, and the simplices in $\str{\qpts}$ are
secure, then $\pert(\str{\qpts}) \subseteq \delof{\tilde{P}}$.
\Lemref{lem:point.pert.prot.ball} also guarantees that, if
$\pertconst$ is small enough, the $m$-simplices in $\str{\pert(\qpts);
  \delof{\tpts}}$ will be protected.  Thus if $\qpts$ consists only of
interior points of $\pts$, \Lemref{lem:eucl.no.new} applies.
We have the following stability theorem for protected Delaunay
triangulations:
\begin{thm}[Stability under point perturbation]
 \label{thm:thick.eucl.stability}
 Suppose $\pts \subset \rem$ and $\qpts \subseteq \pts$ is a subset of
 interior points such that every $m$-simplex in $\str{\qpts}$ is
 secure.  If $\pert: \pts \to \tpts$ is a $\pertconst$-perturbation,
 with
  \begin{equation*}
    \pertconst \leq \frac{\thickbnd\sparseconst}{18}\delta
  \end{equation*}
  then
  \begin{equation*}
    \str{\qpts;\delP} \pertiso \str{\pert(\qpts);\delof{\tpts}}.
  \end{equation*}
\end{thm}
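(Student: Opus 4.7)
The plan is to apply \Lemref{lem:point.pert.prot.ball} simplex-by-simplex to the $m$-simplices of $\str{\qpts;\delP}$, and then use \Lemref{lem:eucl.no.new} to upgrade the resulting set-inclusion into a simplicial isomorphism. The hypothesis on $\pertconst$ is precisely the quantitative bound required by \Lemref{lem:point.pert.prot.ball}, so the work is almost entirely packaging.

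First I would note that the bound on $\pertconst$, combined with $\delta \leq \samconst$ and $\thickbnd \leq 1$, gives $\pertconst \leq \sparseconst\samconst/18 < \sparsity/2$; hence $\pert: \pts \to \tpts$ is a bijection of vertex sets, and the induced simplicial map is injective wherever it is defined. Next I would establish that $\pert$ sends $\str{\qpts;\delP}$ into $\str{\pert(\qpts); \delof{\tpts}}$. For each $m$-simplex $\splxs \in \str{\qpts}$, the hypothesis that $\splxs$ is secure lets me invoke \Lemref{lem:point.pert.prot.ball}, obtaining $\pert(\splxs) \in \delof{\tpts}$ with a protected Delaunay ball. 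Every simplex $\splxt \in \str{\qpts;\delP}$ is a face of such an $m$-simplex (by \Lemref{lem:maximal.splx}, since $\qpts$ consists of interior points), and faces of Delaunay simplices are Delaunay, so $\pert(\splxt) \in \delof{\tpts}$; since $\splxt$ has a vertex in $\qpts$, its image has a vertex in $\pert(\qpts)$ and therefore sits in $\str{\pert(\qpts); \delof{\tpts}}$.

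Finally I would apply \Lemref{lem:eucl.no.new} to promote this inclusion to an equality $\pert(\str{\qpts;\delP}) = \str{\pert(\qpts); \delof{\tpts}}$; together with the injectivity noted above this yields the desired isomorphism $\str{\qpts;\delP} \pertiso \str{\pert(\qpts);\delof{\tpts}}$. The main obstacle is conceptual rather than computational: the bound on $\pertconst$ protects only the \emph{original} $m$-simplices, so a priori new simplices could appear in $\delof{\tpts}$ that have no preimage in $\str{\qpts;\delP}$. This is exactly what \Lemref{lem:eucl.no.new} rules out, by combining the local triangulation result \Lemref{lem:Delaunay.local.tri} at each interior vertex with the invariance-of-domain argument of \Lemref{lem:inject.triang} to force $\pert$ to be surjective as well as injective onto the target star.
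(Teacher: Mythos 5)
Your proposal is correct and follows essentially the same route as the paper, which proves this theorem via the short discussion immediately preceding it: apply \Lemref{lem:point.pert.prot.ball} to each secure $m$-simplex to get the inclusion $\pert(\str{\qpts;\delP}) \subseteq \str{\pert(\qpts);\delof{\tpts}}$ together with protection of the perturbed $m$-simplices, then invoke \Lemref{lem:eucl.no.new} to upgrade this to the isomorphism. Your write-up just makes explicit the details (bijectivity of $\pert$, faces of Delaunay simplices being Delaunay via \Lemref{lem:maximal.splx}) that the paper leaves implicit.
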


The $\pertconst$-relaxed Delaunay complex for $\pts$ was defined by de
Silva~\cite{deSilva2008} by the criterion that $\splxs \in
\relDel{\pertconst}{\pts}$ if and only if there is a ball
$B=\ballEm{c}{r}$ such that $\splxs \subset \close{B}$, and $\distEm{c}{q}
\geq r - \pertconst$ for all $q \in \pts$. Thus the simplices in
$\relDel{\pertconst}{\pts}$ all have ``almost empty'', balls centred
on a $\pertconst$-centre for $\splxs$. 
We have the following consequence of \Thmref{thm:thick.eucl.stability}:
\begin{cor}[Stability under relaxation]
  \label{cor:thick.eucl.stability.relax}
  Suppose $\pts \subset \rem$ and $\qpts \subseteq \pts$ is a set of
  interior points such that every $m$-simplex in $\str{\qpts}$ is
  secure.  If
  \begin{equation*}
    \pertconst \leq \frac{\thickbnd\sparseconst}{18}\delta,
  \end{equation*}
  then
  \begin{equation*}
   \str{\qpts;\relDel{\pertconst}{\pts}} =
    \str{\qpts;\delP}.
  \end{equation*}
\end{cor}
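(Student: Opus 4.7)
The plan is to prove the two inclusions separately. The inclusion $\str{\qpts;\delP} \subseteq \str{\qpts;\relDel{\pertconst}{\pts}}$ is immediate from the definitions: any empty Delaunay ball $\ballEm{c}{r}$ of a simplex $\splxt \in \delP$ satisfies $\distEm{c}{q} \geq r > r - \pertconst$ for every $q \in \pts$, so it also certifies $\splxt \in \relDel{\pertconst}{\pts}$. Since $\relDel{\pertconst}{\pts}$ is closed under taking faces and the stars on both sides are formed from the same vertex subset $\qpts$, the inclusion of stars follows.

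For the reverse inclusion, I would pick $\splxs \in \str{\qpts;\relDel{\pertconst}{\pts}}$ and realise it as a face of some $\splxt' \in \relDel{\pertconst}{\pts}$ with a vertex $q \in \splxt' \cap \qpts$. The heart of the argument is to construct a $\pertconst$-perturbation $\pert : \pts \to \tpts$ for which $\pert(\splxt') \in \delof{\tpts}$. Granted this, $\pert(\splxt')$ contains $\pert(q) \in \pert(\qpts)$ and hence lies in $\str{\pert(\qpts);\delof{\tpts}}$. The hypothesis $\pertconst \leq \frac{\thickbnd\sparseconst}{18}\delta$ then permits \Thmref{thm:thick.eucl.stability} to apply, giving the bijection $\pert : \str{\qpts;\delP} \to \str{\pert(\qpts);\delof{\tpts}}$; since $\pert$ is injective on vertices, the unique preimage of $\pert(\splxt')$ under this bijection must be $\splxt'$ itself, so $\splxt' \in \str{\qpts;\delP}$, and its face $\splxs$ is then in $\str{\qpts;\delP}$ as well, because stars are closed under taking faces.

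To construct $\pert$, I would use the witness ball $B = \ballEm{c}{r}$ for $\splxt' \in \relDel{\pertconst}{\pts}$, in which each vertex $p$ of $\splxt'$ satisfies $\distEm{c}{p} \in [r - \pertconst, r]$ and each other $q \in \pts$ satisfies $\distEm{c}{q} \geq r - \pertconst$. Choosing the target radius $\rho = r - \pertconst/2$ symmetrically within this annulus, I would move each vertex $p$ of $\splxt'$ radially with respect to $c$ onto the sphere of radius $\rho$, push each non-vertex point of $\pts$ lying at distance $\leq \rho$ from $c$ radially outward to a point strictly beyond this sphere, and leave every remaining point fixed. Every displacement is bounded by $\pertconst/2$ plus an arbitrarily small outward margin, hence at most $\pertconst$; by construction $\ballEm{c}{\rho}$ is an empty ball in $\tpts$ whose bounding sphere contains precisely the vertices of $\pert(\splxt')$, certifying $\pert(\splxt') \in \delof{\tpts}$.

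The principal technical obstacle is reconciling two competing requirements on this perturbation: its magnitude must stay within $\pertconst$, and every point of $\pts$ outside $\splxt'$ must end up strictly exterior to a circumscribing ball of $\pert(\splxt')$. Splitting the radial budget symmetrically between the inward push on the vertices of $\splxt'$ and the outward push on the nearby neighbours keeps each individual move below $\pertconst$ by a comfortable margin, so \Thmref{thm:thick.eucl.stability} applies cleanly and yields the conclusion.
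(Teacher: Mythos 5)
Your proposal is correct and follows essentially the same route as the paper: the forward inclusion is immediate from the definitions, and the reverse inclusion is obtained by converting the relaxed ("almost empty") ball into a genuinely empty circumscribing ball via a radial $\pertconst$-perturbation and then invoking \Thmref{thm:thick.eucl.stability} to pull the simplex back into $\str{\qpts;\delP}$. The only difference is cosmetic: the paper projects every point inside the witness ball outward onto its boundary sphere, whereas you split the radial budget at $r-\pertconst/2$, moving vertices inward and interlopers outward; both yield a valid $\pertconst$-perturbation with the required empty Delaunay ball.
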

\begin{proof}
  Suppose that $\splxs \in \str{\qpts ; \relDel{\pertconst}{\pts}
  }$.  Then there is a ball $B$ enclosing $\splxs$ such that any point
  $q \in B$ is within a distance $\pertconst$ from $\bdry{B}$. Project
  all such points radially out to $\bdry{B}$. Then we have a
  $\pertconst$-perturbation $\pert: \pts \to \tilde{\pts}$, and
  $\splxs$ has become $\tilde{\splxs} \in \str{\pert(\qpts) ;
    \delof{\tpts}}$.  By \Thmref{thm:thick.eucl.stability},
  $\str{\pert(\qpts) ; \delof{\tpts}} \pertiso \str{\qpts;\delP}$,
  and therefore $\splxs \in \str{\qpts;\delP}$.
\end{proof}

\subsubsection{Metric perturbation}

For a perturbation of the metric, we can exploit the stability results
obtained for perturbations of points in the Euclidean metric to ensure
that no simplices can appear in $\str{\qpts;\delPd}$ that do not already
exist in $\str{\qpts;\delP}$.
\begin{lem}
  \label{lem:thick.metric.no.new}
  Suppose $\convhull{\pts} \subseteq U \subset \rem$ and $\gdist: U
  \times U \to \reel$ is such that $\abs{\dist{x}{y} - \distEm{x}{y}}
  \leq \pertconst$ for all $x,y \in U$. Suppose also that $\qpts
  \subseteq \pts$ is a set of interior points such that every
  $m$-simplex $\splxs \in \str{\qpts}$ is secure and satisfies
  $\distEm{p}{\bdry{U}} \geq 2\samconst$ for every vertex $p \in
  \splxs$.  If
  \begin{equation*}
    \pertconst \leq \frac{\thickbnd\sparseconst}{36}\delta,    
  \end{equation*}
  then
  \begin{equation*}
    \str{\qpts;\delPd} \subseteq \str{\qpts;\delP}.
  \end{equation*}
\end{lem}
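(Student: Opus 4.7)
The plan is to reduce the statement to the Euclidean point-perturbation case by observing that a metric Delaunay ball doubles as a Euclidean ``almost empty'' ball, and then invoke \Corref{cor:thick.eucl.stability.relax}.

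Concretely, I would pick an arbitrary $\splxs \in \str{\qpts;\delPd}$. By definition of the star there is a simplex $\splxt \in \delPd$ with $\splxs \leq \splxt$ and having a vertex in $\qpts$. Let $\ball{\tilde{c}}{\tilde{r}}$ be a $\gdist$-Delaunay ball for $\splxt$, so $\dist{\tilde{c}}{p} = \tilde{r}$ for every vertex $p$ of $\splxt$ and $\dist{\tilde{c}}{q} \geq \tilde{r}$ for every $q \in \pts$. Applying $\abs{\dist{\tilde{c}}{x} - \distEm{\tilde{c}}{x}} \leq \pertconst$ with $x \in \pts \subset U$, I get $\distEm{\tilde{c}}{p} \leq \tilde{r} + \pertconst$ for each vertex $p$ of $\splxt$ and $\distEm{\tilde{c}}{q} \geq \tilde{r} - \pertconst$ for every $q \in \pts$. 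Setting $r' = \tilde{r} + \pertconst$, this reads $\splxt \subset \cballEm{\tilde{c}}{r'}$ and $\distEm{\tilde{c}}{q} \geq r' - 2\pertconst$ for every $q \in \pts$, i.e., $\splxt \in \relDel{2\pertconst}{\pts}$. Because $\splxt$ carries a vertex of $\qpts$, also $\splxt \in \str{\qpts;\relDel{2\pertconst}{\pts}}$.

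Now I would apply \Corref{cor:thick.eucl.stability.relax} with relaxation parameter $2\pertconst$. The hypothesis $\pertconst \leq \frac{\thickbnd\sparseconst}{36}\delta$ is equivalent to $2\pertconst \leq \frac{\thickbnd\sparseconst}{18}\delta$, which is precisely the bound the corollary requires, and the security/interior hypotheses on the $m$-simplices of $\str{\qpts;\delP}$ are inherited verbatim from the hypotheses of the current lemma. The corollary therefore gives $\str{\qpts;\relDel{2\pertconst}{\pts}} = \str{\qpts;\delP}$, so $\splxt \in \str{\qpts;\delP}$. Since stars are closed under taking faces and $\splxs \leq \splxt$, I conclude $\splxs \in \str{\qpts;\delP}$, which is the desired inclusion.

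Once the conceptual observation --- that any $\gdist$-Delaunay ball is automatically a Euclidean $2\pertconst$-relaxed Delaunay ball --- is in hand, the rest of the argument is bookkeeping of constants. The only subtlety worth checking is that the center $\tilde{c}$ of the metric Delaunay ball lies in $U$, so that the comparison $\abs{\dist{\tilde{c}}{x} - \distEm{\tilde{c}}{x}} \leq \pertconst$ can be applied with $x \in \pts$; this is implicit in the definition of $\delPd$ in \Secref{sec:Delaunay.alt.metric}, where metric Delaunay balls are $\gdist$-balls with centers in the domain of the metric.
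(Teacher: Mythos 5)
Your proposal is correct and follows essentially the same route as the paper: the paper's proof likewise converts the $\gdist$-Delaunay ball into a Euclidean $2\pertconst$-relaxed Delaunay ball, concludes $\splxs \in \relDel{2\pertconst}{\pts}$, and invokes \Corref{cor:thick.eucl.stability.relax} with the factor of $2$ accounting for the $36$ in the denominator. Your extra care about the center lying in $U$ and about passing to a cofacet $\splxt$ with a vertex in $\qpts$ is fine but not a departure from the paper's argument.
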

\begin{proof}
  Let $\ball{c}{r}$ be a Delaunay ball for simplex $\splxs \in
  \str{\qpts;\delPd}$. Then $\dist{c}{p} \leq \dist{c}{q}$ for all $p
  \in \splxs$, and $q \in \pts$. By the hypothesis on $\gdist$, this
  implies that $\distEm{c}{p} \leq \distEm{c}{q} + 2\pertconst$ for
  all $p \in \splxs$ and $q \in \pts$, and therefore $\splxs \in
  \relDel{2\pertconst}{\pts}$. The result now follows from
  \Corref{cor:thick.eucl.stability.relax}.
\end{proof}

The perturbation bounds required by \Lemref{lem:thick.metric.no.new}, also
satisfy the requirements of \Lemref{lem:metric.pert.prot.ball}.
This gives us the reverse inclusion, and thus we can quantify the
stability under metric perturbation for subcomplexes of secure
simplices in Delaunay triangulations:
\begin{thm}[Stability under metric perturbation]
  \label{thm:thick.metric.stability}
  Suppose $\convhull{\pts} \subseteq U \subset \rem$ and $\gdist: U
  \times U \to \reel$ is such that $\abs{\dist{x}{y} - \distEm{x}{y}}
  \leq \pertconst$ for all $x,y \in U$. Suppose also that $\qpts
  \subseteq \pts$ is a set of interior points such that every
  $m$-simplex $\splxs \in \str{\qpts}$ is secure and satisfies
  $\distEm{p}{\bdry{U}} \geq 2\samconst$ for every vertex $p \in
  \splxs$. If
  \begin{equation*}
    \pertconst \leq \frac{\thickbnd\sparseconst}{36}\delta,
  \end{equation*}
  then
  \begin{equation*}
    \str{\qpts;\delPd} = \str{\qpts;\delP}.
  \end{equation*}
\end{thm}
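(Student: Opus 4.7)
The plan is to prove the set equality by establishing each inclusion separately, with both directions following from results already developed in the previous two subsections.

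For the inclusion $\str{\qpts;\delPd} \subseteq \str{\qpts;\delP}$, I would simply invoke \Lemref{lem:thick.metric.no.new}. Its hypotheses match those of the theorem exactly, and its bound on $\pertconst$ is precisely $\frac{\thickbnd\sparseconst}{36}\delta$. So this half is immediate.

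For the reverse inclusion $\str{\qpts;\delP} \subseteq \str{\qpts;\delPd}$, my plan is to handle the top-dimensional simplices first and then propagate downward. Take any $m$-simplex $\splxs^m \in \str{\qpts;\delP}$; by hypothesis $\splxs^m$ is secure, so \Lemref{lem:metric.pert.prot.ball} applies (the bound $\frac{\thickbnd\sparseconst}{36}\delta$ is stronger than the required $\frac{\thickbnd\sparseconst}{20}\delta$), yielding that $\splxs^m \in \delPd$ with a protected Delaunay ball in the metric $\gdist$. Since $\splxs^m$ still contains the same vertex of $\qpts$ that put it in $\str{\qpts;\delP}$, we have $\splxs^m \in \str{\qpts;\delPd}$. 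For any lower-dimensional face $\splxt \leq \splxs^m$ belonging to $\str{\qpts;\delP}$, observe that the Delaunay ball of $\splxs^m$ in the metric $\gdist$ is an empty ball circumscribing $\splxt$, so $\splxt$ admits a $\gdist$-Delaunay ball and hence lies in $\delPd$, whence in $\str{\qpts;\delPd}$. Finally, a lower-dimensional simplex $\tau \in \str{\qpts;\delP}$ that contains a vertex of $\qpts$ but is not listed as a face of an $m$-simplex in $\str{\qpts;\delP}$ must nevertheless be a face of some Delaunay $m$-simplex by \Lemref{lem:maximal.splx} (using that $\qpts$ consists of interior points so $\tau$ is not a boundary simplex), and that enclosing $m$-simplex is itself in $\str{\qpts;\delP}$ and hence handled by the previous step.

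There is no real obstacle here; the theorem is essentially a bookkeeping combination of \Lemref{lem:thick.metric.no.new} and \Lemref{lem:metric.pert.prot.ball}. The only point requiring a bit of care is ensuring that every simplex of $\str{\qpts;\delP}$ is witnessed by some secure $m$-simplex in that star, so that the protection-preservation lemma can be applied; this is exactly what \Lemref{lem:maximal.splx} together with the interior-point assumption on $\qpts$ guarantees.
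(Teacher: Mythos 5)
Your proposal is correct and follows the same route as the paper, which proves the theorem in exactly this way: the inclusion $\str{\qpts;\delPd}\subseteq\str{\qpts;\delP}$ is Lemma~\ref{lem:thick.metric.no.new} verbatim, and the reverse inclusion comes from Lemma~\ref{lem:metric.pert.prot.ball} applied to the secure $m$-simplices (whose bound $\frac{\thickbnd\sparseconst}{20}\delta$ is indeed implied by $\frac{\thickbnd\sparseconst}{36}\delta$), with lower-dimensional simplices handled as faces of these. Your write-up is in fact more detailed than the paper's two-sentence justification, and the extra care about witnessing every simplex of the star by a secure $m$-simplex via Lemma~\ref{lem:maximal.splx} is sound.
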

Using \Lemref{lem:safe.int.secure}, and recognizing that the safe
interior simplices also satisfy the distance from boundary requirement
of \Thmref{thm:thick.metric.stability}, we can restate this metric
perturbation stability result for Delaunay triangulations on
$\delta$-generic point sets:
\begin{cor}[Stability under metric perturbation]
  \label{cor:metric.stability}
  Suppose $\pts$ is $\delta$-generic for $\sdipts$, with sampling
  radius $\samconst$ and $\delta = \protconst\samconst$. Suppose also
  that $\convhull{\pts} \subseteq U$, and $\gdist: U \times U \to \reel$
  is such that $\abs{\dist{x}{y} - \distEm{x}{y}} \leq \pertconst$ for
  all $x,y \in U$. If
  \begin{equation*}
    \pertconst \leq \frac{\protconst^3}{84m}\delta
    = \frac{\protconst^4}{84m}\samconst,
  \end{equation*}
  then
  \begin{equation*}
    \str{\sdipts;\delPd} = \str{\sdipts;\delP}.
  \end{equation*}
\end{cor}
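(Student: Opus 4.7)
The plan is to recognize this corollary as a direct specialization of \Thmref{thm:thick.metric.stability} applied with $\qpts = \sdipts$, where the role of the present proof is simply to verify the three hypotheses of that theorem using the genericity hypothesis together with \Lemref{lem:safe.int.secure} and the interior properties established in \Secref{sec:protect.thick}.

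First, I would translate the $\delta$-genericity hypothesis into the \emph{secure simplex} parameters of \Defref{def:secure} by invoking \Lemref{lem:safe.int.secure}. Since $\delta = \protconst\samconst$, that lemma supplies $\sparseconst = \protconst$ and $\thickbnd = \frac{\sqrt{3}\protconst^{2}}{4}$, and tells us that every $m$-simplex in $\str{\sdipts;\delP}$ is secure with these constants.

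Next, I would check the boundary separation condition required by \Thmref{thm:thick.metric.stability}, namely that every vertex $p$ of every $m$-simplex $\splxs \in \str{\sdipts;\delP}$ satisfies $\distEm{p}{\bdry{U}} \geq 2\samconst$. By definition of the star, such an $\splxs$ has at least one vertex $p_0 \in \sdipts \subseteq \dipts$. \Lemref{lem:interior.splx} then gives $\circrad{\splxs} < \samconst$ and ensures $\splxs$ is an interior simplex, so every other vertex $q$ satisfies $\distEm{p_{0}}{q} < 2\samconst$. Combined with $\distEm{p_{0}}{\bdry{\convhull{\pts}}} \geq 4\samconst$, this yields $\distEm{q}{\bdry{\convhull{\pts}}} > 2\samconst$. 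Since $\convhull{\pts} \subseteq U$ and $q$ lies in $\intr{\convhull{\pts}} \subseteq \intr{U}$, any point of $\bdry{U}$ lies outside $\intr{\convhull{\pts}}$ and is therefore no closer to $q$ than $\bdry{\convhull{\pts}}$, giving the desired bound.

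Finally, I would verify that the perturbation hypothesis of the corollary implies the one in the theorem. The theorem demands
\begin{equation*}
  \pertconst \leq \frac{\thickbnd \sparseconst}{36}\,\delta
  = \frac{\sqrt{3}\,\protconst^{3}}{144}\,\delta,
\end{equation*}
and a one-line numerical comparison (using $\sqrt{3} > 12/7$, equivalently $84\sqrt{3} > 144$) shows that $\pertconst \leq \frac{\protconst^{3}}{84}\delta$ is stronger. Substituting $\delta = \protconst\samconst$ gives the second form $\pertconst \leq \frac{\protconst^{4}}{84}\samconst$. Applying \Thmref{thm:thick.metric.stability} then produces the conclusion $\str{\sdipts;\delPd} = \str{\sdipts;\delP}$. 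There is no real obstacle here; the whole argument is bookkeeping, with the only non-trivial step being the constant comparison between the numerator $\sqrt{3}/144$ and the stated bound $1/84$.
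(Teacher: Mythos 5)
Your proposal is correct and follows exactly the route the paper intends: the paper's own justification is the single sentence before the corollary, instantiating \Thmref{thm:thick.metric.stability} with $\qpts = \sdipts$ via \Lemref{lem:safe.int.secure} and the observation that safe interior simplices satisfy the distance-from-boundary condition. Your explicit verification of the constant ($\tfrac{1}{84} \le \tfrac{\sqrt{3}}{144}$, i.e.\ $84\sqrt{3} > 144$) and of the boundary condition via \Lemref{lem:interior.splx} just fills in details the paper leaves implicit.
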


%

\section{Conclusions}

We have quantified the close relationship between the genericity of a
point set, the quality of the simplices in the Delaunay complex, and
the stability of the Delaunay complex under perturbation.  The problem
of poorly shaped simplices in a higher dimensional Delaunay complex
can be seen as a manifestation of point sets that are close to being
degenerate.  The introduction of thickness as a geometric quality
measure for simplices facilitated the stability calculations, which
develop around a consideration of the circumcentres of a simplex in
the presence of a perturbation.

We considered a point set $\pts \subset \rem$ meeting a sampling
radius $\samconst$ and showed a constant bound on the thickness of the
Delaunay simplices provided $\pts$ is $\delta$-generic with $\delta =
\protconst \samconst$ for some constant $\protconst$. The question then
arises: What is the least upper bound on the feasible $\protconst$
as a function of the dimension $m$?

In a companion paper \cite{boissonnat2013flatpert.inria}, we develop a perturbation algorithm which produces a $\delta$-generic point set from a given $\samconst$-sample set.  Since the triangulation results of the current work are localised, we can extend the perturbation algorithm to construct Delaunay triangulations of abstract Riemannian manifolds that are not necessarily embedded in an ambient space, as we have shown in subsequent work~\cite{boissonnat2013manmesh.inria}.  The idea is that a manifold can be locally well approximated by Euclidean space, so we fit together local Euclidean Delaunay patches where the Euclidean metric varies slightly between patches. This is where the stability of the Delaunay patches is important. In this setting we can also accommodate variations in the sampling radius between neighbouring patches. Thus the algorithm is able to triangulate sample sets whose sampling radius is defined by a Lipschitz sizing function.





\subsection*{Acknowledgements}

The authors would like to thank David Cohen-Steiner for suggesting the
proof technique employed in \Lemref{lem:metric.cc.pert}. Comments and
suggestions from the reviewers considerably improved the final
document; the authors gratefully acknowledge this contribution.

This work was partially supported by the CG Learning project. The
project CG Learning acknowledges the financial support of the Future
and Emerging Technologies (FET) programme within the Seventh Framework
Programme for Research of the European Commission, under FET-Open
grant number: 255827.

\phantomsection
\bibliographystyle{alpha}
\addcontentsline{toc}{section}{Bibliography}
\bibliography{delrefs}

\end{document}